\def\F{\ensuremath{\mathcal{F}}} 
\def\N{\ensuremath{\mathbb{N}}}  
\def\K{\ensuremath{\mathbb{F}}}  
\def\C{\ensuremath{\mathcal{C}}} 
\newtheorem{notation}{Notation}
\begin{document}
\sloppypar
\title{Properties and constructions of coincident functions}

\author{Morgan Barbier \and
  Hayat Cheballah \and
  Jean-Marie Le Bars}

\institute{M. Barbier
 \and Hayat Cheballah
 \and J.-M. Le Bars
 \at ENSICAEN - UNICAEN - GREYC \\
  \email{morgan.barbier@ensicaen.fr}\\
  \email{hayat.cheballah@gmail.com}\\
  \email{jean-marie.lebars@unicaen.fr}
}

\maketitle
\begin{abstract} 
Extensive studies of Boolean functions are carried in many fields. The
Mobius transform is often involved for these studies. In particular,
it plays a central role in coincident functions, the class of Boolean
functions invariant by this transformation.  This class -- which has
been recently introduced -- has interesting properties, in particular
if we want to control both the Hamming weight and the degree. We
propose an innovative way to handle the Mobius transform which allows
the composition between several Boolean functions and the use of
Shannon or Reed-Muller decompositions.  Thus we benefit from a better
knowledge of coincident functions and introduce new properties.  We
show experimentally that for many features, coincident functions look
like any Boolean functions. 
\keywords{Boolean functions, M\"{o}bius transform, Coincident
  functions, Shannon and Reed-Muller decompositions}
\end{abstract}

\section{Introduction}
\label{sec:intro}
Numerous studies with Boolean functions have been conducted in various
fields like cryptography and error correcting codes
(\cite{carletBook}), Boolean circuits and Boolean Decision Diagram
(\cite{bryant86}), Boolean logic (\cite{boole1848}) or constraint
satisfaction problems (\cite{creignou2001}).  There are many ways to
represent a Boolean function which depends of the domain. For instance,
on propositional logic one usually uses the conjunctive normal form or
the disjunctive normal form,  while we often use the BDD in Boolean
circuits.  Most of the time these studies involve several
criteria. In cryptography, the (algebraic) degree and the (Hamming)
weight are crucial criteria.\\

Unfortunately, the best representation for the degree is the Algebraic
Normal Form (sum of monomials), while the weight requires the truth table (sum of
minterms). Thus the Reed-Muller decomposition (or expansion) allows
us to perform recursive decomposition (\cite{Kasami70}), enumeration
and random generation among the degree whereas the Shannon
decomposition (or expansion) does the same task among the weight
\cite{shannon1949} shows the switching network
interpretation of this identity, but (\cite{boole1854}) will be the first to
mentioned it.  These decompositions allow us to
decompose a Boolean function with $n$ variables into two Boolean
functions with $n-1$ variables or equivalently to build a Boolean function with
$n$ variables with two Boolean functions with $n-1$ variables.\\

Since these decompositions appear
to be orthogonal, it seems unreachable to consider them simultaneously
or to perform enumeration or random generation with both
criteria. This is why we propose the study of coincident functions
for which we have a correspondence between the monomials and the
minterms.\\

In \cite{coincidentFunc}, the author defined for the first time the
class of coincident functions, the boolean functions invariant by
Mobius transform (see \cite{guillot} phd Thesis for a deep study of this
property). We revisit their results with a new point of view more
convenient, in particular we introduce a Mobius transform conditioned
by the number of variables which gives a recursive (Reed-Muller and
Shannon) decomposition of Mobius transform, linking the two
decompositions. We investigate the distribution of degree and weight
and provide a uniform random generator of these functions. We also
benefit to the structure of lattice of the valuations to study the
monotonic coincident functions and to provide a complete construction
of symmetric coincident ones.\\

The paper is organized as follows. We recall the basic definitions
related to boolean functions and to Mobius transform in
Section~\ref{sec:preliminaries}. We introduce our new point of view
and the resulting properties concerning Mobius transform and
coincident in Section~\ref{sec:properties}. Thanks to lattices, we
exhibit, in Section~\ref{sec:lattices}, links between coincident
functions with monotonic and symmetric functions, and also the random
generation of coincident functions. Finally, we propose in
Section~\ref{sec:experiment} a set of experiment resulting on
coincident functions.

\section{Definition and first properties of Boolean functions and Mobius transform}
\label{sec:preliminaries}

\subsection{Boolean function}

Let $\F_n$ be the set of the boolean functions with $n$
variables. Monomials and minterms play a role of canonical element in
the different writings.
\begin{definition}[Monomials and minterms]
Let us to denote $x=(x_1,\dots,x_n)$. For any $u = (u_1,\ldots, u_n) \in \K_2^n,$ 
$x^u$ will be denoted the monomial $x_1^{u_1}\dots x_n^{u_n}$. The
minterm $M_u$ is the Boolean function with $n$ variables defined by
\[
M_u(a) = 
\left\lbrace
\begin{array}{ll}
  1, & \mbox{ if } u = a;\\
  0, & \mbox{ otherwise.}
\end{array}
\right.
\]
\end{definition}

The two following definitions provide two different point of views what we wish
to link.
\begin{definition}[Algebraic Normal Form -- ANF]
A boolean function $f$ can be viewed as the exclusive sum of a subset of the set of monomials 
in variables $x_1, \ldots, x_n$.
\[ 
f = \bigoplus_{u \in \F_2^n} \alpha_u x^u.
\]
\end{definition}

\begin{definition}[Truth table or valuations of $f$]
A boolean function $f$ can be viewed as the exclusive sum of a subset of the set of minterms in variables 
$x_1, \ldots, x_n$.
\[
f = \bigoplus_{u \in \F_2^n} \beta_u M_u.
\]
\end{definition}
\begin{notation}
Let $u$ and $v \in \K_2^n$. We will write $u \preceq v$ when 
$u_i \le v_i$, for any \mbox{$i \in \{1,\ldots, n\}$} and 
$u \prec v$ when $u \preceq v$ and $u \not= v$. 
\end{notation}

The previous partially ordered set provides connection between these
notions.
\begin{proposition}
 Let $u\in \K_2^n$, then
\begin{equation}\label{monomial2minterm}
x^u = \bigoplus_{u \preceq v} M_v.
\end{equation}
\end{proposition}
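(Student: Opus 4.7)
The plan is to prove the identity by evaluating both sides at an arbitrary point $a=(a_1,\dots,a_n) \in \K_2^n$ and checking that they agree. Since two Boolean functions that coincide on every input are equal, this pointwise approach suffices.

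For the left-hand side, I would expand $x^u(a) = a_1^{u_1}\cdots a_n^{u_n}$, using the convention $0^0 = 1$. Each factor $a_i^{u_i}$ equals $1$ precisely when either $u_i = 0$ (the factor is vacuous) or $a_i = 1$; equivalently, the factor equals $1$ iff $u_i \le a_i$. Taking the product over all $i$, I conclude that $x^u(a)=1$ if and only if $u_i \le a_i$ for every $i \in \{1,\dots,n\}$, which is exactly the condition $u \preceq a$.

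For the right-hand side, by definition of the minterm $M_v$, we have $M_v(a) = 1$ iff $v = a$, and $0$ otherwise. Hence the XOR
\[
\bigoplus_{u \preceq v} M_v(a)
\]
has at most one nonzero summand, corresponding to $v=a$, and that summand is present if and only if $u \preceq a$. Therefore the right-hand side evaluated at $a$ equals $1$ iff $u \preceq a$, and $0$ otherwise, matching the left-hand side.

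There is no real obstacle here: the only subtlety is making sure the convention $0^0=1$ is used consistently, so that $x_i^{0}$ is interpreted as the constant function $1$ rather than being problematic at $a_i = 0$. Once that is fixed, both sides have the same support $\{a \in \K_2^n : u \preceq a\}$ and evaluate to $1$ on it, proving the equality in $\F_n$.
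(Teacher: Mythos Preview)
Your proof is correct and follows essentially the same approach as the paper: both arguments evaluate $x^u$ at an arbitrary point $a$, observe that $x^u(a)=1$ iff $u\preceq a$, and then identify this support with that of $\bigoplus_{u\preceq v}M_v$. The paper is simply terser, leaving the evaluation of the right-hand side implicit.
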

\begin{proof}
Let $f = x^u$ and $a \in \K_2^n$, $f(a) = 1$ if and only if $a_i = 1$
for all $i$ such that $u_i = 1$, {\it ie} $u \preceq a$. Hence $f
= \bigoplus_{u \preceq v} M_v$.
\end{proof}
 \begin{proposition}
 Let $u\in \K_2^n$, then
\begin{equation}\label{minterm2monomial}
M_u = \bigoplus_{u \preceq v} x^v.
\end{equation}
\end{proposition}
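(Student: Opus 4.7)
The plan is to verify the identity pointwise. Fix an arbitrary evaluation point $a \in \K_2^n$ and compute the right-hand side at $a$. Using the previous proposition (or just the definition of a monomial), $x^v(a) = 1$ iff $v \preceq a$, so
\[
\bigoplus_{u \preceq v} x^v(a) \;=\; \bigoplus_{u \preceq v \preceq a} 1 \;=\; \bigl|\{v \in \K_2^n : u \preceq v \preceq a\}\bigr| \pmod 2.
\]

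The next step is a routine parity count on the interval $[u,a]$ in the Boolean lattice. If $u \not\preceq a$ the interval is empty and the sum is $0$, matching $M_u(a) = 0$. If $u \preceq a$, then an element $v$ of $[u,a]$ is determined by forcing $v_i = 1$ on positions where $u_i = 1$ and choosing $v_i \in \{0,1\}$ freely on the positions where $u_i = 0$ and $a_i = 1$. This gives $2^{|a|-|u|}$ such $v$ (with $|\cdot|$ the Hamming weight). This number is odd precisely when $|a| = |u|$, which, combined with $u \preceq a$, forces $a = u$. Hence the right-hand side evaluated at $a$ equals $1$ iff $a = u$, i.e.\ it coincides with $M_u(a)$.

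The only substantive step is the parity observation, which is the characteristic-$2$ incarnation of Möbius inversion on the Boolean lattice; everything else is bookkeeping. Alternatively one could deduce the formula from \eqref{monomial2minterm} by inverting the triangular change of basis between monomials and minterms, but the direct pointwise verification above is shorter and self-contained.
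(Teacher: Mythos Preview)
Your proof is correct but takes a different route from the paper's. The paper starts from the product formula $M_u = \prod_{i=1}^n (x_i \oplus u_i \oplus 1)$, splits the factors according to whether $u_i = 1$ (giving $x_i$) or $u_i = 0$ (giving $1 \oplus x_i$), and then distributes the product to obtain $\bigoplus_{u \preceq v} x^v$ directly as an ANF expansion. You instead verify the identity pointwise, reducing it to the parity of the interval $[u,a]$ in the Boolean lattice --- essentially the characteristic-$2$ M\"obius inversion you mention. The paper's argument is marginally shorter and never evaluates anything: it is a purely algebraic expansion that delivers the monomial sum in one step. Your argument is more combinatorial and makes the M\"obius-inversion content explicit, at the small cost of invoking the evaluation fact $x^v(a)=1 \Leftrightarrow v \preceq a$ from the preceding proposition; on the other hand it does not require knowing (or recognising) the product form of $M_u$.
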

\begin{proof}
Let $u = (u_1, \ldots,u_n) \in \K_2^{n}$.
It is easily seen that $M_u$ is equal to the product $\prod_{i = 1}^n (x_i \oplus u_i \oplus 1)$. 
Let $I_0(u) = \{ i \in \{1,\ldots, n\} : u_i = 0\}$ and \mbox{$I_1(u)
= \{ i \in \{ 1, \ldots, n\} : u_i = 1\}$}, it follows
\[
\begin{array}{lcl}
M_u & = & \prod_{i \in I_1(u)} x_i  \prod_{i \in I_0(u)} (1 \oplus x_i)\\
  & = & \bigoplus_{u \preceq v} x^v.
\end{array}
\]
\end{proof}

 \begin{notation}
There are several natural ways to encode $f$ by a binary word of length $2^n$. 

We will choose the following natural encodings directly derived by the
previous definition.
\[
T(f) = t_1 \ldots t_{2^n},
\]
where $t_k$, with  $\sum_{i=1}^n u_i \; 2^{i-1}$ the $2$-adic representation of $k$
and $t_k = \beta_u$.
\[
A(f) = a_1 \ldots a_{2^n},
\]
where $a_k = \alpha_{u}$.

We denote by $\psi$ the bijection $\psi (A(f)) = T(f)$.
\end{notation}

The two definitions below introduce crucial parameters of a boolean function.
\begin{definition}[Hamming weight]
  Let $f\in \F_n$ be a boolean function, we will write $w_H(f)$ the (Hamming) weight of $f$, 
{\it ie} the number of 1 of $T(f)$.
\end{definition}
\begin{definition}[Algebraic degree]
 Let $f\in \F_n$ be a boolean function, we will write $d(f)$ the (algebraic) degree of $f$, 
{\it ie} the maximal degree of the monomials in the ANF of $f$.
\end{definition}

While the Reed-Muller decomposition is related to the algebraic normal
form, the Shannon one is associated to truth table.
\begin{definition}[Reed-Muller decomposition]
  Let $f \in \F_n$. The {\em Reed-Muller decomposition}, consists in
  rewriting the boolean function as
  $$
  f = f_R^0 \oplus x_n f_R^1,
  $$
  where $f_R^0, f_R^1 \in \F_{n-1}$ and are unique.
\end{definition}

\begin{definition}[Shannon decomposition]
  Let $f \in \F_n$. The {\em Shannon decomposition}, consists in
  rewriting the boolean function as
  $$
  f = (1+x_n)f_S^0 \oplus x_n f_S^1,
  $$
  where $f_S^0, f_S^1 \in \F_{n-1}$ and are unique.
\end{definition}

\begin{remark}
    Let $f \in \F_n$. Clearly, we have by identification $f_R^0=f_S^0$
    and $f_R^1 = f_S^0 \oplus f_S^1$.
\end{remark}

\begin{remark}
The Shannon decomposition is the natural decomposition for manipulating 
the minterms since $T(f) = T(f_S^0)\ |\ T(f_S^1)$, where $|$ denotes the concatenation.
This trivially implies 
\[
w_H(f) = w_H(f_S^0) + w_H(f_S^1).
\]
\end{remark}

\begin{remark}
The Reed-Muller decomposition is the natural decomposition for manipulating 
the monomials since $A(f) = A(f_R^0)\ |\ A(f_R^1)$.
This implies 
\[
d(f) = max (d(f_R^0), d(f_R^1)+1).
\]
\end{remark}

\begin{notation}
From now one, we will write $0_n$ the valuation of $\K_2^n$ $(0, \ldots, 0)$ 
and $\mathbf 0_n$ (resp. $\mathbf 1_n$) the Boolean function with $n$ variables which takes the value $0$ (resp. $1$) 
for any valuation. 
\end{notation}

\subsection{Mobius transform}
\label{ssec:mobius}
The Mobius transform is at the heart of this article.
\begin{definition}[Mobius transform]
  The {\em Mobius transform}, noted $\mu$ is defined by the following
  bijection
  $$
  \begin{array}{rcl}
    \mu : \F_n & \longleftrightarrow & \F_n\\
    f & \longmapsto & \mu(f),\\
  \end{array}
  $$
  such that, for any $f \in \F_n$
  $$
    f = \bigoplus_{u \in \K_2^n} \mu(f)(u) x^u.
  $$
\end{definition}
The statements of the following proposition derives directly from the definition of the Mobius transform.
\begin{proposition} \label{prop:firstMobius}
\cite[Theorem 1 and Lemma 3]{coincidentFunc}
Let $f, g \in \F_n$ and $u \in \K_2^n$, 
$$
\mu(f\oplus g) = \mu(f) \oplus \mu(g) \mbox{ and }  \mu(f)(0_n) = f(O_n).
$$ 
\end{proposition}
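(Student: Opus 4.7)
The plan is to derive both identities directly from the defining equation
\[
f = \bigoplus_{u \in \K_2^n} \mu(f)(u)\, x^u,
\]
relying only on the uniqueness of the ANF expansion (which justifies the word ``bijection'' in the definition).

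For the linearity statement, I would apply the defining expansion to $f$ and to $g$ separately, take the exclusive sum, and then collect coefficients monomial by monomial:
\[
f \oplus g \;=\; \bigoplus_{u} \mu(f)(u)\, x^u \;\oplus\; \bigoplus_{u} \mu(g)(u)\, x^u \;=\; \bigoplus_{u} \bigl(\mu(f)(u) \oplus \mu(g)(u)\bigr)\, x^u.
\]
Since the ANF coefficients of $f \oplus g$ are unique, comparing this expression with $f \oplus g = \bigoplus_u \mu(f \oplus g)(u)\, x^u$ forces $\mu(f \oplus g)(u) = \mu(f)(u) \oplus \mu(g)(u)$ for every $u$, which is the claimed identity in $\F_n$.

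For the second statement, I would evaluate the defining expansion at the point $0_n = (0,\ldots,0)$. Using the standard convention $0^0 = 1$ and $0^1 = 0$, the monomial $x^u = x_1^{u_1} \cdots x_n^{u_n}$ evaluated at $0_n$ equals $1$ if and only if every $u_i$ equals $0$, i.e.\ if and only if $u = 0_n$; in every other case at least one factor is $0^1 = 0$ and the monomial vanishes. Substituting gives
\[
f(0_n) \;=\; \bigoplus_{u \in \K_2^n} \mu(f)(u)\, 0_n^{\,u} \;=\; \mu(f)(0_n),
\]
which is exactly the claim. (I note the typo $f(O_n)$ in the statement should read $f(0_n)$.)

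There is no real obstacle here: both identities are immediate consequences of the definition and the uniqueness of the ANF. The only point requiring a small amount of care is the evaluation of monomials at $0_n$, where the convention $0^0 = 1$ must be made explicit so that only the constant monomial (indexed by $u = 0_n$) contributes.
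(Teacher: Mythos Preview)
Your proof is correct and follows essentially the same approach as the paper: both parts are read off directly from the defining ANF expansion, using uniqueness of the ANF coefficients for the linearity and evaluating the expansion at $0_n$ (where only the constant monomial survives) for the second identity. The paper's argument is terser but identical in substance.
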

\begin{proof}
  Both results are direct implication of the definition of $\mu$,  
  let \mbox{$f = \bigoplus_{u \in \K_2^n} \mu(f)(u) x^u$} and  $g = \bigoplus_{u \in \K_2^n} \mu(g)(u) x^u$. 
  Then \mbox{$f \oplus g = \bigoplus_{u \in \K_2^n} (\mu(f)(u)\oplus \mu(g)(u)) x^u$} and 
  $\mu(f \oplus g)(u) = \mu(f)(u) \oplus \mu(g)(u)$.
  Therefore let $a = 0_n$, $a^u = 0$ for any $a \prec u$, then $f(0_n) = \mu(f)(a)$.
\end{proof}
\begin{proposition} \label{prop:muminterm}
\cite[Lemma 2]{coincidentFunc}
Let $u \in \K_2^n$, then $\mu(x^u) = M_u$ and $\mu(M_u) = x^u$.
\end{proposition}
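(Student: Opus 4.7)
The plan is to verify the two equalities directly from the defining property $f = \bigoplus_{v \in \K_2^n} \mu(f)(v)\, x^v$ together with the uniqueness of the ANF coefficients, and then to use the minterm/monomial expansion formulas established just before.

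First I would handle $\mu(x^u) = M_u$, which is essentially a tautology. The ANF of the Boolean function $f = x^u$ is the single-monomial expression $x^u$ itself, so the coefficient function $\alpha : v \mapsto \alpha_v$ in the ANF satisfies $\alpha_v = 1$ if $v = u$ and $\alpha_v = 0$ otherwise. This is exactly the definition of the minterm $M_u$ viewed as a function on $\K_2^n$. By uniqueness of the ANF, $\mu(x^u) = M_u$.

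Next I would treat $\mu(M_u) = x^u$. Here I would invoke Equation~(\ref{minterm2monomial}), which gives $M_u = \bigoplus_{u \preceq v} x^v$. This immediately identifies the ANF coefficients of $M_u$: the coefficient of $x^v$ is $1$ precisely when $u \preceq v$. Thus $\mu(M_u)$ is the Boolean function that takes value $1$ at $v$ iff $u \preceq v$. On the other hand, for the Boolean function $g(v) = x^u$ evaluated at $v$, we have $g(v) = v_1^{u_1}\cdots v_n^{u_n} = 1$ iff $v_i = 1$ for every $i$ with $u_i = 1$, which is exactly the condition $u \preceq v$. Therefore $\mu(M_u)$ and $x^u$ agree as functions on $\K_2^n$, completing the proof.

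There is really no obstacle here: the whole argument reduces to reading off ANF coefficients and matching the support condition with the order $\preceq$. The only care needed is to not confuse $M_u$ and $x^u$ as functions on valuations with $M_u$ and $x^u$ as formal ANF/truth-table objects; the bijection $\mu$ precisely swaps these two points of view, which is why the two statements are dual to each other and in fact equivalent by the involutive nature of $\mu$.
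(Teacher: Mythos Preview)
Your proof is correct and follows essentially the same approach as the paper: both read off $\mu(x^u)=M_u$ directly from the ANF definition, and both start the second half from Equation~(\ref{minterm2monomial}). The only cosmetic difference is that for $\mu(M_u)=x^u$ the paper pushes $\mu$ through the sum (using linearity and the first part) and then invokes Equation~(\ref{monomial2minterm}), whereas you simply identify the resulting coefficient function with the evaluation map $v\mapsto x^u(v)$; your variant is marginally more self-contained but the underlying idea is the same.
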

\begin{proof}
Let $u = (u_1, \ldots,u_n) \in \K_2^{n}$ and $f= x^u$. 
$f = \bigoplus_{v \in \K_2^n} \mu(f)(v) x^v$. Then $\mu(f) (v)$ is equal to $1$ if $v = u$ and equal to 
$0$ otherwise, which is the definition of $M_u$. 
\[
\begin{array}{lcl}
M_u & = & \bigoplus_{u \preceq v} x^v \mbox{ by } (\ref{minterm2monomial})\\
\mu(M_u) & = & \bigoplus_{u \preceq v} \mu(x^v)\\
& = & \bigoplus_{u \preceq v} M_v\\
& = & x^u \mbox{ by } (\ref{monomial2minterm}).\\
\end{array}
\]
\end{proof}

\begin{proposition}\label{involution}
The Mobius transform is a involution, {\it ie} $\mu^2(f) = f$ and $\mu(f) = g$ if and only if $\mu(g) = f$.
\end{proposition}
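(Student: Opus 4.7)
The plan is to leverage the two results already proved: $\mu$ is $\K_2$-linear (Proposition \ref{prop:firstMobius}) and it swaps monomials and minterms (Proposition \ref{prop:muminterm}). Since monomials form a basis of $\F_n$ over $\K_2$, any $\K_2$-linear map on $\F_n$ is determined by its values on the $x^u$, so establishing $\mu^2$ on monomials suffices.

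First I would take an arbitrary $f \in \F_n$ and write it in ANF as $f = \bigoplus_{u \in \K_2^n} \alpha_u x^u$. Applying $\mu$ and using linearity together with $\mu(x^u) = M_u$ gives $\mu(f) = \bigoplus_{u} \alpha_u M_u$. Applying $\mu$ a second time, again by linearity and this time using $\mu(M_u) = x^u$, yields $\mu^2(f) = \bigoplus_u \alpha_u x^u = f$. This is the involution identity.

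The equivalence $\mu(f) = g \iff \mu(g) = f$ is then immediate: if $\mu(f) = g$ then $\mu(g) = \mu(\mu(f)) = \mu^2(f) = f$, and symmetrically for the converse. In particular this shows $\mu$ is a bijection (as already asserted in its definition) and that it is its own inverse.

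There is no real obstacle here; the only subtlety worth flagging is that the argument relies essentially on linearity plus uniqueness of the ANF (so that the $\alpha_u$ are well defined and the two sums $\bigoplus \alpha_u x^u$ and $\bigoplus \alpha_u M_u$ are genuine equalities of Boolean functions, not just of formal expressions). Both facts are built into the earlier definitions and propositions, so the proof reduces to the short calculation above.
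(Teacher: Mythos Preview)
Your proof is correct and matches the paper's approach almost verbatim: the paper also writes $f = \bigoplus_u \mu(f)(u)\, x^u$, applies linearity and $\mu(x^u)=M_u$ to get $\mu(f)=\bigoplus_u \mu(f)(u)\,M_u$, and then applies linearity and $\mu(M_u)=x^u$ once more to recover $f$. The only cosmetic difference is that the paper denotes the ANF coefficients by $\mu(f)(u)$ rather than $\alpha_u$.
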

\begin{proof}
\[
\begin{array}{lcl}
f & = & \bigoplus_{u \in \K_2^n} \mu(f)(u) x^u\\
\mu(f) & = & \bigoplus_{u \in \K_2^n} \mu(f)(u) M_u\\
\mu^2(f) & = & \bigoplus_{u \in \K_2^n} \mu(f)(u) x^u.
\end{array}
\]
\end{proof}

\section{New properties on Mobius transform and coincident functions}
\label{sec:properties}

Let us start this section by the following remark, which is one of
main statement of this paper.
\begin{notation}
Let $f \in \F_n$, we will denote by $P(f)$ the polynomial form of $f$.
\end{notation}
\begin{remark}
 \label{rem:FnCFnk}
The polynomial form contains only the variables which play a role for
the evaluation of the function. We will denote by indeterminates these
variables.  We may increase the numbers of variables by keeping the
same number of indeterminates.
Let $f_n \in \F_n$ then for all $k > 0$, it exists $f_{n+k} \in
\F_{n+k}$, such that $f_n$ and $f_{n+k}$ share the same polynomial
form, $P(f_n) = P(f_{n+k})$. Furthermore 
$T(f_{2+k}) = T(f_2)*2^k$ (we repeat $2^k$ times the word $T(f_2)$).

For instance, the Boolean function with
two variables $f_2$ such that $T(f_2) = 0110$ has the polynomial form
$P = P(f_2) = x_1 \oplus x_2$ and $f_3$ such that $T(f_3) = 01100110$ 
satisfies $P(f_3) = x_1 \oplus x_2$.
\end{remark}
We were wondering if the Mobius transforms of function see as $n$
variables and $n+k$ are equals.
Let $f_n \in \F_n$ and $f_m \in \F_m$ such that $n<m$ and $P(f_n) = P(f_m)$. 
Does the following equality holds?
\begin{equation}
  \label{eq:equalityMob}
  \mu(f_n) = \mu(f_m)?
\end{equation}
We propose to look on a particular toy example.
\begin{example}
  \label{ex:equalityMob}
  Let  $f_1 =  x_1 \oplus  1 \in  \F_1$ be a boolean function with one
  indeterminate and $f_2 =  x_1 \oplus 1 \in \F_2$ be the same
  function see as a boolean function with two variables. Very
  simple  computations give  us that  $\mu(f_1) =  1$ and \mbox{$\mu(f_2) =
  x_2\oplus1$}.
\end{example}

\subsection{New insights of the Mobius transform}
\label{ssec:point_of_view}

The example~\ref{ex:equalityMob} provides a counterexample (\ref{eq:equalityMob}). 
Then we have to use the Mobius transform carefully if we manipulate the 
polynomial form of a Boolean function. 
In this setting,  we introduce the following notation.
\begin{notation}
  Let $f \in \F_n$ be a Boolean function with $n$ variables.
 We will write $\mu_n(f)$ instead of $\mu(f)$.
\end{notation}
 
Example~\ref{ex:equalityMob} implies  $\mu(x_1\oplus1) =
  \mu_1(x_1\oplus1) = 1$ and $\mu_2(x_1\oplus1) = x_2\oplus1$.
In other words,  the  Mobius  transform depends  on the variables, not on the indeterminates. 

We obtain the following result, which permits us to
manipulate the Mobius transform easily, hence this is one of key
ingredient.

The following theorem contains three key ingredients to manipulate properly  
the Mobius transform.

\begin{theorem}
  \label{prop:atomicMuEquality}
  Let $f\in \F_{n-1}$. We have
  \begin{equation}
    \label{eq:muNp1Fn}
    \mu_{n}(f) = (1\oplus x_{n})\mu_{n-1}(f),
  \end{equation}
  \begin{equation}
    \label{eq:muXNp1Fn}
    \mu_{n}(x_{n}f) = x_{n}\mu_{n-1}(f),
  \end{equation}
  \begin{equation}
    \label{eq:mu1pXNp1Fn}
    \mu_{n}((1\oplus x_{n})f) = \mu_{n-1}(f).
  \end{equation}
\end{theorem}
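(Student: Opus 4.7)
The plan is to prove (\ref{eq:muNp1Fn}) and (\ref{eq:muXNp1Fn}) directly from the ANF definition of the Mobius transform, and then deduce (\ref{eq:mu1pXNp1Fn}) as a corollary of these two together with the linearity established in Proposition~\ref{prop:firstMobius}. Throughout, I will silently use the convention of Remark~\ref{rem:FnCFnk}: the function $\mu_{n-1}(f) \in \F_{n-1}$ is identified with its natural lift into $\F_n$, so that multiplying it by the monomial $(1 \oplus x_n)$ or $x_n$ produces a well-defined element of $\F_n$.

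For (\ref{eq:muNp1Fn}), I would start from the ANF expansion $f = \bigoplus_{u' \in \K_2^{n-1}} \mu_{n-1}(f)(u')\, x^{u'}$, which expresses $f$ as a polynomial in $x_1,\ldots,x_{n-1}$. Viewing this identity inside $\F_n$ amounts to writing the same polynomial with the convention that the coefficient of $x^{(u',u_n)}$ is $\mu_{n-1}(f)(u')$ when $u_n = 0$ and $0$ when $u_n = 1$. By uniqueness of the ANF, this yields $\mu_n(f)(u',u_n) = (1 \oplus u_n)\,\mu_{n-1}(f)(u')$, which is precisely the evaluation of $(1 \oplus x_n)\mu_{n-1}(f)$ at $(u',u_n)$.

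For (\ref{eq:muXNp1Fn}), I multiply the same ANF expansion by $x_n$, obtaining $x_n f = \bigoplus_{u' \in \K_2^{n-1}} \mu_{n-1}(f)(u')\, x^{u'} x_n$. Now the coefficient of $x^{(u',u_n)}$ is $\mu_{n-1}(f)(u')$ when $u_n = 1$ and $0$ when $u_n = 0$, so by uniqueness of the ANF $\mu_n(x_n f)(u',u_n) = u_n\,\mu_{n-1}(f)(u')$, which is the evaluation of $x_n \mu_{n-1}(f)$ at $(u',u_n)$.

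Finally, for (\ref{eq:mu1pXNp1Fn}), linearity of $\mu_n$ gives
\[
\mu_n\bigl((1 \oplus x_n)f\bigr) = \mu_n(f) \oplus \mu_n(x_n f) = (1 \oplus x_n)\mu_{n-1}(f) \oplus x_n\mu_{n-1}(f) = \mu_{n-1}(f),
\]
where the cancellation follows since $(1 \oplus x_n) \oplus x_n = 1$ in $\K_2[x_1,\ldots,x_n]$. The only delicate point, and the place where I would be most careful, is the lift convention: one has to check that the product $(1 \oplus x_n)\mu_{n-1}(f)$ on the right-hand side of (\ref{eq:muNp1Fn}) really is the function in $\F_n$ whose valuations and ANF coefficients match the ones produced by $\mu_n(f)$. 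Once this interpretation is nailed down, the three identities follow from direct coefficient-matching and linearity, with no additional computation required.
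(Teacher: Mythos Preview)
Your proof is correct and follows essentially the same route as the paper: equation~(\ref{eq:muNp1Fn}) is read off from the ANF (the paper phrases this as ``the coefficients of the monomials where $x_n$ occurs must be null''), and equation~(\ref{eq:mu1pXNp1Fn}) is obtained from the first two by linearity, exactly as you do. The one organizational difference is that for (\ref{eq:muXNp1Fn}) the paper does not repeat the coefficient-matching argument but instead derives it algebraically from (\ref{eq:muNp1Fn}) via $\mu_n(x_nf) = \mu_n(f) \oplus \mu_{n-1}(f) = (1\oplus x_n)\mu_{n-1}(f) \oplus \mu_{n-1}(f) = x_n\mu_{n-1}(f)$; your direct argument is arguably cleaner and more transparent, since the paper's first equality in that chain is left unjustified and really amounts to the same ANF observation you spell out.
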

\begin{proof}
  It is straightforward to prove Equation~(\ref{eq:muNp1Fn}) from
  the definition of the Mobius transform. Indeed, we can see the
  computation of the Mobius transform as an interpolation
  problem. Since the coefficients of the monomials where occurs
  $x_{n+1}$ must be null, thus the statement.\\
The definition of the Mobius transform give us 
  \begin{eqnarray*}
    \mu_{n}(x_{n}f) & = & \mu_{n}(f) - \mu_{n-1}(f)\\
    & = & (1\oplus x_{n}) \mu_{n-1}(f) - \mu_{n-1}(f)\\
    & = & x_{n} \mu_{n-1}(f),
  \end{eqnarray*}
  which is exactly Equation~(\ref{eq:muXNp1Fn}).\\

  Let us to compute Equality~(\ref{eq:mu1pXNp1Fn}):
  \begin{eqnarray*}
    \mu_{n}((1\oplus x_{n})f) & = & \mu_{n}(f) \oplus \mu_{n}(x_{n}f)\\
    & = & (1 \oplus x_{n})\mu_{n-1}(f) \oplus x_{n} \mu_{n-1}(f)\\
    & = & \mu_{n-1}(f).
  \end{eqnarray*}
\end{proof}

The Reed-Muller decomposition is related to algebraic normal form,
whereas the Shannon one is related to truth table. Since the Mobius
transform allows us to switch from one to the other; it is natural to see
the relationship between the previous decompositions and the Mobius
transform.
\begin{proposition}
  \label{prop:decomposition1}
  \cite[Theorem 5]{coincidentFunc}.
  Let $f \in \F_n$ be a boolean function and $f_R^0, f_R^1 \in \F_{n-1}$
  be the terms of the Reed-Muller decomposition. Then
  $$
  \mu_n(f) = (1\oplus x_n)\mu_{n-1}(f_R^0) \oplus x_n \mu_{n-1}(f_R^1).
  $$
\end{proposition}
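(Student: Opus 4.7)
The plan is to reduce the statement directly to the three atomic equalities just proved in Theorem~\ref{prop:atomicMuEquality}, using linearity of the Mobius transform. The Reed-Muller decomposition writes $f$ as a sum of two pieces, one that lives in $\F_{n-1}$ (namely $f_R^0$, viewed as a function of $n$ variables that does not depend on $x_n$) and one of the form $x_n g$ with $g = f_R^1 \in \F_{n-1}$. Since each of these two shapes is exactly what equations~(\ref{eq:muNp1Fn}) and (\ref{eq:muXNp1Fn}) address, the claim should follow in a couple of lines.

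Concretely, first I would apply the additivity of $\mu_n$ from Proposition~\ref{prop:firstMobius} to split
\[
\mu_n(f) = \mu_n(f_R^0) \oplus \mu_n(x_n f_R^1).
\]
Then I would rewrite the first term using Equation~(\ref{eq:muNp1Fn}) as $(1\oplus x_n)\mu_{n-1}(f_R^0)$, and the second term using Equation~(\ref{eq:muXNp1Fn}) as $x_n \mu_{n-1}(f_R^1)$. Adding them yields exactly the claimed identity.

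There is essentially no obstacle: the real work was done in Theorem~\ref{prop:atomicMuEquality}, whose role is precisely to separate the behaviour of $\mu_n$ on the $x_n$-free part and on the $x_n$-multiple part of a function in $\F_n$. The only subtlety worth flagging in the write-up is the (implicit) convention highlighted in Remark~\ref{rem:FnCFnk}: when we write $\mu_n(f_R^0)$ we are regarding $f_R^0 \in \F_{n-1}$ as an element of $\F_n$ by adding $x_n$ as a dummy variable, which is exactly the setting in which Equation~(\ref{eq:muNp1Fn}) applies. Making this convention explicit keeps the chain of equalities clean and avoids the pitfall illustrated by Example~\ref{ex:equalityMob}.
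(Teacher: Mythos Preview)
Your proposal is correct and follows exactly the same two-line argument as the paper: split $\mu_n(f)=\mu_n(f_R^0)\oplus\mu_n(x_n f_R^1)$ by linearity, then apply equations~(\ref{eq:muNp1Fn}) and~(\ref{eq:muXNp1Fn}) from Theorem~\ref{prop:atomicMuEquality} to each summand. Your remark about the implicit embedding of $f_R^0$ into $\F_n$ is a welcome clarification that the paper leaves tacit.
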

\begin{proof}
  \begin{eqnarray*}
    \mu_n(f) & = & \mu_n(f_R^0) \oplus \mu_n(x_nf_R^1)\\
    & = & (1\oplus x_n)\mu_{n-1}(f_R^0) \oplus x_n\mu_{n-1}(f_R^1).
  \end{eqnarray*}
\end{proof}
\begin{proposition}
  \label{prop:decomposition2}
  Let $f \in \F_n$ be a boolean function and $f_S^0, f_S^1 \in \F_{n-1}$
  be the terms of the Shannon decomposition. Then
  $$
  \mu_n(f) = \mu_{n-1}(f_S^0) \oplus x_n \mu_{n-1}(f_S^1).
  $$
\end{proposition}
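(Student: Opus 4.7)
The plan is to mirror the proof of Proposition \ref{prop:decomposition1} but starting from the Shannon decomposition $f = (1 \oplus x_n) f_S^0 \oplus x_n f_S^1$ instead of the Reed-Muller one. The key observation is that Theorem \ref{prop:atomicMuEquality} already contains exactly the two atomic identities I need: one for a factor $(1 \oplus x_n)$ and one for a factor $x_n$ multiplying a function of $\F_{n-1}$.

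First, I would apply the linearity of $\mu_n$ from Proposition \ref{prop:firstMobius} to split
\[
\mu_n(f) = \mu_n\bigl((1 \oplus x_n) f_S^0\bigr) \oplus \mu_n\bigl(x_n f_S^1\bigr).
\]
Then I would rewrite the first summand using Equation (\ref{eq:mu1pXNp1Fn}), which gives $\mu_n((1 \oplus x_n) f_S^0) = \mu_{n-1}(f_S^0)$, and the second summand using Equation (\ref{eq:muXNp1Fn}), which gives $\mu_n(x_n f_S^1) = x_n \mu_{n-1}(f_S^1)$. Combining these two substitutions yields the claimed identity directly.

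There is essentially no obstacle here: the statement is a one-line consequence of the atomic rules established in Theorem \ref{prop:atomicMuEquality} together with the additivity of the Mobius transform. The only subtle point, and the reason the proposition is worth stating separately from Proposition \ref{prop:decomposition1}, is that $f_S^0$ and $f_S^1$ must be viewed as elements of $\F_{n-1}$ so that $\mu_{n-1}$ is the appropriate transform to apply; this is guaranteed by the definition of the Shannon decomposition.
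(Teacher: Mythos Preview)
Your proposal is correct and follows essentially the same route as the paper: split $\mu_n(f)$ by linearity across the Shannon decomposition and then apply the atomic identities of Theorem~\ref{prop:atomicMuEquality}. The only cosmetic difference is that the paper handles $\mu_n\bigl((1\oplus x_n)f_S^0\bigr)$ by expanding it as $\mu_n(f_S^0)\oplus\mu_n(x_n f_S^0)$ and invoking (\ref{eq:muNp1Fn}) and (\ref{eq:muXNp1Fn}) before simplifying, whereas you go straight to (\ref{eq:mu1pXNp1Fn}); your version is in fact one line shorter.
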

\begin{proof}
  \begin{eqnarray*}
    \mu_n(f) & = & \mu_n\left((1\oplus x_n)f_S^0\right) \oplus \mu_n(x_nf_S^1)\\
    & = & (1 \oplus x_n) \mu_{n-1}\left(f_S^0 \right)
    \oplus x_n \mu_{n-1}\left( f_S^0\right)
    \oplus x_n \mu_{n-1}\left( f_S^1\right)\\
    & = & \mu_{n-1}(f_S^0) \oplus x_n \mu_{n-1}(f_S^1)
  \end{eqnarray*}
\end{proof}
We find again that the Reed-Muller decomposition of the Mobius transform
is the Mobius transform of the Shannon decomposition; and vice
versa.\\

The first step to try to make a link between the algebraic degree of a
boolean function and its hamming weight is given by the following
proposition.
\begin{proposition}
\label{prop:degre}
\cite[Theorem 7]{coincidentFunc}
 Let $f\in \F_n \setminus \{ \mathbf 0_n \}$. 
Then 
$$
\deg(f) + \deg(\mu_n(f)) \ge n.
$$
\end{proposition}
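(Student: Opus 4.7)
The plan is to proceed by induction on $n$. The base case $n = 1$ is handled by direct inspection: the three nonzero functions $1$, $x_1$, and $1 \oplus x_1$ have Mobius transforms $1 \oplus x_1$, $x_1$, and $1$ respectively (using Propositions~\ref{prop:muminterm} and~\ref{prop:firstMobius}), so the inequality $\deg(f) + \deg(\mu_1(f)) \geq 1$ holds in every case.

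For the inductive step, I would write $f \in \F_n$ via its Shannon decomposition $f = (1 \oplus x_n) f_S^0 \oplus x_n f_S^1$ with $f_S^0, f_S^1 \in \F_{n-1}$, and apply Proposition~\ref{prop:decomposition2} to obtain
\[ \mu_n(f) = \mu_{n-1}(f_S^0) \oplus x_n \mu_{n-1}(f_S^1). \]
The analysis splits on whether $f_S^1$ vanishes. If $f_S^1 = 0$, then $f_S^0 \neq 0$ and equation~\eqref{eq:mu1pXNp1Fn} yields $\mu_n(f) = \mu_{n-1}(f_S^0)$, a polynomial free of $x_n$; hence $\deg(f) = 1 + \deg(f_S^0)$ while $\deg(\mu_n(f)) = \deg(\mu_{n-1}(f_S^0))$, and the induction hypothesis applied to $f_S^0$ immediately closes this case. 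If $f_S^1 \neq 0$, the induction hypothesis gives $\deg(f_S^1) + \deg(\mu_{n-1}(f_S^1)) \geq n - 1$, and because $\mu_{n-1}(f_S^0)$ is $x_n$-free the leading monomial of $x_n \mu_{n-1}(f_S^1)$ cannot be cancelled inside $\mu_n(f)$, giving $\deg(\mu_n(f)) \geq 1 + \deg(\mu_{n-1}(f_S^1))$.

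The remaining ingredient is $\deg(f) \geq \deg(f_S^1)$, which I would verify by rewriting $f = f_S^0 \oplus x_n(f_S^0 \oplus f_S^1)$: if $\deg(f_S^0) \geq \deg(f_S^1)$ then $\deg(f) \geq \deg(f_S^0) \geq \deg(f_S^1)$, and otherwise the leading term of $f_S^1$ survives in $f_S^0 \oplus f_S^1$ and forces $\deg(f) \geq 1 + \deg(f_S^1)$. Summing then yields $\deg(f) + \deg(\mu_n(f)) \geq \deg(f_S^1) + 1 + \deg(\mu_{n-1}(f_S^1)) \geq n$. I expect the main technical point to be precisely this degree bookkeeping in the $f_S^1 \neq 0$ branch, where one must ensure that no cancellation eats into the leading monomials on either side of the inequality; the $x_n$-free structure of $\mu_{n-1}(f_S^0)$ provided by Theorem~\ref{prop:atomicMuEquality} is exactly what prevents such cancellations.
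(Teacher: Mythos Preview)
Your argument is correct. The difference from the paper is a choice of decomposition: the paper expands $f$ via Reed--Muller, $f = f_R^0 \oplus x_n f_R^1$, so that by Proposition~\ref{prop:decomposition1} the transform $\mu_n(f) = \mu_{n-1}(f_R^0) \oplus x_n\,\mu_{n-1}(f_R^0 \oplus f_R^1)$ appears in Shannon-like form, and then splits cases on whether $\deg(f_R^0) = \deg(f_R^1)$. You take the dual route, expanding $f$ via Shannon so that $\mu_n(f)$ comes out in Reed--Muller form via Proposition~\ref{prop:decomposition2}, and you split on whether $f_S^1$ vanishes. The two proofs are mirror images of one another under the Reed--Muller/Shannon duality that the Mobius transform realizes. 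Your case split is arguably cleaner: having $\mu_n(f)$ already in Reed--Muller form makes the degree bound $\deg(\mu_n(f)) \ge 1 + \deg(\mu_{n-1}(f_S^1))$ immediate from the $x_n$-freeness of $\mu_{n-1}(f_S^0)$, whereas the paper has to argue separately that $f_R^0 \oplus f_R^1 \neq 0$ in its first case to invoke the induction hypothesis. Conversely, the paper reads the degree of $f$ directly from its Reed--Muller form, while you have to pass through the rewriting $f = f_S^0 \oplus x_n(f_S^0 \oplus f_S^1)$ to bound $\deg(f) \ge \deg(f_S^1)$. Neither approach buys anything the other does not; they are equally short and equally elementary.
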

\begin{proof}
Let $f = f_R^0 \oplus x_n f_R^1$. By Proposition~\ref{prop:decomposition1},
\begin{eqnarray*}
\mu_n(f) & = & \mu_{n-1}(f_R^0) \oplus x_n (\mu_{n-1}(f_R^0) \oplus \mu_{n-1} (f_R^1))\\
   & = &   \mu_{n-1}(f_R^0) \oplus x_n (\mu_{n-1}(f_R^0 \oplus f_R^1)).
\end{eqnarray*}
Since $f \not= \mathbf 0_n, f_R^0$ and $f_R^1$ cannot be null in same time, then
it is easily seen that the property holds for $n = 1$. 

Let $n > 1$. Assume that the property holds for $n-1$.

\noindent Case 1) $\deg(f_R^0) \not= \deg(f_R^1)$. Then 
$$
\deg(f_R^0 \oplus f_R^1) = \max(\deg(f_R^0),\deg(f_R^1)) \le \deg(f).
 $$
Since $\deg(f_R^0) \ne \deg(f_R^1)$, then $\deg(x_n \mu_{n-1}\left(f_R^0 \oplus f_R^1 \right))\ge \deg(\mu_{n-1}\left( f_R^0 \oplus f_R^1 \right))$, so we deduce
\begin{eqnarray*}
  \deg(f) + \deg(\mu_n(f)) & \ge & \deg(f_R^0 \oplus f_R^1) + \deg(\mu_{n-1}(f_R^0 \oplus f_R^1)) + 1\\
  & \ge & n - 1 + 1 \hspace{2cm} \mbox{by hypothesis of recurrence}\\
  & \ge & n
\end{eqnarray*}

\noindent Case 2) Hence $\deg(f_R^0) = \deg(f_R^1)$; moreover $\deg(f) =
\deg(f_R^0) + 1$ and \mbox{$\deg(\mu_n(f)) \ge \deg(\mu_{n-1}(f_R^0))$}, and
by hypothesis of recurrence, \mbox{$\deg(f_R^0) + \deg(\mu_{n-1}(f_R^0)) \ge
n-1$}, hence $\deg(f) + \deg(\mu_n(f)) \ge n$.
\end{proof}
Always with the same intended target: draw a link between algebraic
degree and the Hamming weight of a boolean function; the following
result gives us the probability that the Mobius transform has degree
$n$.
\begin{proposition}
\label{prop:degren}
Let $f \in \F_n$ built uniformly at random. Then
$$
\begin{array}{l}
\Pr( \deg(\mu_n(f)) = n \; | \; \deg(f) = n) = \Pr( \deg(\mu_n(f)) = n \; | \; \deg(f) < n) \\
= \Pr(\deg(\mu_n(f)) = n) = \frac{1}{2}.
\end{array}
$$
\end{proposition}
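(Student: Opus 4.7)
The plan is to reduce both degree conditions to linear conditions on the coefficients of the ANF of $f$, which are independent fair coin flips when $f$ is uniform.

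First, I would rewrite the two events in a common language. Writing $f = \bigoplus_{u \in \K_2^n} \alpha_u x^u$ with $\alpha_u = \mu_n(f)(u)$, the event $\deg(f) = n$ is simply $\alpha_{1_n} = 1$, where $1_n = (1,\dots,1)$, since $x_1 \cdots x_n$ is the unique monomial of degree $n$. For the event $\deg(\mu_n(f)) = n$, I would use Proposition~\ref{involution}: applying the definition of $\mu_n$ to $\mu_n(f)$ itself gives that the ANF coefficients of $\mu_n(f)$ are $\mu_n(\mu_n(f))(u) = f(u)$. Hence $\deg(\mu_n(f)) = n$ is equivalent to $f(1_n) = 1$.

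Second, I would express $f(1_n)$ in terms of the coefficients $\alpha_u$. Since $(1_n)^u = 1$ for every $u \in \K_2^n$, the ANF evaluates to
$$
f(1_n) = \bigoplus_{u \in \K_2^n} \alpha_u = \alpha_{1_n} \oplus S, \qquad \text{where } S := \bigoplus_{u \neq 1_n} \alpha_u.
$$

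Third, I would invoke the probabilistic structure. Because $\mu_n$ is a bijection on $\F_n$ (Proposition~\ref{involution}), drawing $f$ uniformly at random in $\F_n$ is the same as drawing $\mu_n(f)$ uniformly; equivalently, the $2^n$ coefficients $(\alpha_u)_{u \in \K_2^n}$ are independent Bernoulli$(1/2)$ random variables. In particular $S$ is independent of $\alpha_{1_n}$, and (being an XOR of at least one independent fair bit, as $n \geq 1$) $S$ is itself Bernoulli$(1/2)$. The three desired probabilities then unwind:
$$
\Pr(\deg(\mu_n(f)) = n \mid \deg(f) = n) = \Pr(S \oplus 1 = 1) = \Pr(S = 0) = \tfrac{1}{2},
$$
$$
\Pr(\deg(\mu_n(f)) = n \mid \deg(f) < n) = \Pr(S \oplus 0 = 1) = \Pr(S = 1) = \tfrac{1}{2},
$$
and unconditionally $\Pr(\deg(\mu_n(f)) = n) = \Pr(\alpha_{1_n} \oplus S = 1) = \tfrac{1}{2}$.

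There is no genuinely hard step; the one point that must be handled carefully is the translation $\deg(\mu_n(f)) = n \Leftrightarrow f(1_n) = 1$, which rests on the involutivity of the Mobius transform and is precisely what makes the two conditions symmetric. Everything else is the standard fact that an XOR of independent fair bits is a fair bit, applied twice.
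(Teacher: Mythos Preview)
Your proof is correct and is more direct than the paper's. The paper proceeds by induction on $n$ via the Reed-Muller decomposition $f = f_R^0 \oplus x_n f_R^1$: it reduces $\deg(f)=n$ to $\deg(f_R^1)=n-1$ and, using Proposition~\ref{prop:decomposition1}, reduces $\deg(\mu_n(f))=n$ to $\deg(\mu_{n-1}(f_R^0\oplus f_R^1))=n-1$, then invokes independence of $f_R^0$ and $f_R^1$ together with the inductive hypothesis. Your argument bypasses the recursion entirely by observing, via involutivity, that $\deg(\mu_n(f))=n$ is exactly the condition $f(1_n)=1$, and then that $f(1_n)=\alpha_{1_n}\oplus S$ with $S$ independent of $\alpha_{1_n}$. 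This is shorter and exposes the underlying reason for the independence more transparently; the paper's route, on the other hand, is consistent with its broader program of exploiting the Reed-Muller/Shannon decompositions of $\mu_n$.
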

\begin{proof}
One may easily check the property for $n=1$. Assume the property holds
for $n-1$.

Let $f = f_R^0 \oplus x_n f_R^1$ built uniformly at random. 
Then $\deg(f) = n$ if and only if $x_1\ldots x_n$ occurs in the ANF of $f$. 
Since it is the case for half of the Boolean functions, $\Pr( \deg(f) = n) = \frac{1}{2}$.
On the one hand $\deg(f) = n$ if and only if \mbox{$d(f_R^1) = n-1$}
and on the other hand, by Proposition~\ref{prop:decomposition1}, 
$\deg(\mu_n(f)) = n$ if and only if $\deg(\mu_{n-1}(f_R^0 \oplus f_R^1)) = n-1$. 
Hence
{\small $$
\Pr( \deg(\mu_n(f)) = n \; | \; \deg(f) = n) = \Pr(\deg(\mu_{n-1}(f_R^0 \oplus f_R^1)) = n-1 \; | \;  \deg(f_R^1) = n-1).
$$}
Moreover, $\deg(\mu_{n-1}(f_R^0 \oplus f_R^1)) = n-1$ if $\deg(\mu_{n-1}(f_R^0)) = n-1$ and 
\mbox{$\deg(\mu_{n-1}(f_R^1)) < n-1$} or  $\deg(\mu_{n-1}(f_R^0)) < n-1$ and 
$\deg(\mu_{n-1}(f_R^1)) = n-1$.

The degree of $\mu_{n-1}(f_R^0)$ and $\mu_{n-1}(f_R^1)$ are clearly independent 
and $\Pr(\deg(\mu_{n-1}(f_R^0)) = n-1) = \frac{1}{2}$.
It follows  
\[
\begin{array}{l}
\Pr(\deg(\mu_{n-1}(f_R^0)) = n-1 \mbox{ and } \deg(\mu_{n-1}(f_R^1)) < n-1  |  \deg(f_R^1) = n-1 )\\ 
 =  \Pr(\deg(\mu_{n-1}(f_R^0)) = n-1) \Pr(\deg(\mu_{n-1}(f_R^1)) < n-1 \; | \;  \deg(f_R^1) = n-1) \\
 =  \frac{1}{2} \cdot \frac{1}{2} = \frac{1}{4}.
\end{array}
\]
Similarly, we find 
$$
\Pr(\deg(\mu_{n-1}(f_R^0)) < n-1 \mbox{ and } 
\deg(\mu_{n-1}(f_R^1)) = n-1  |  \deg(f_R^1) = n-1 ) = \frac{1}{4}.
$$
Hence  $\Pr(\deg(\mu_{n-1}(f_R^0 \oplus f_R^1)) = n-1 \; | \;  \deg(f_R^1) = n-1) = \frac{1}{2}$.
\end{proof}

The following proposition can be view as a new decomposition related
to Mobius transform. 
\begin{proposition} \label{prop:muProduct}
\cite[Lemma 7]{coincidentFunc}.
Let $n \in \N$ and \mbox{$0< k < n$}. We will denote respectively  
the tuples \mbox{$x=(x_1,\ldots, x_n)$}, $y=(x_1,\ldots, x_k)$ and \mbox{$z=(x_{k+1}, \ldots, x_n)$}. 
Let $f_1(y) \in \F_k$, \mbox{$f_2(z) \in \F_{n-k}$} and $f(x) = f_1(y) \cdot f_2(z)$ 
and $g(x) = f_1(y) \oplus f_2(z)$. 
Then
\[
\begin{array}{lcl}
\mu_n(f(x)) & = & \mu_{n-k}(f_2(z)) \cdot \mu_{k}(f_1(y));\\
\mu_n(g(x)) & = & \prod_{i=k+1}^n (1 \oplus x_i) \mu_k((f_1(y)) \oplus \prod_{i=1}^k (1 \oplus x_i) \mu_{n-k} (f_2(z)).
\end{array}
\]
\end{proposition}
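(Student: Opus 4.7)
The plan is to handle the two identities separately, since the multiplicative one follows from a direct expansion in the ANF/truth-table bases, while the additive one follows from iterating the rules in Theorem~\ref{prop:atomicMuEquality}.

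For the product identity, I would start by writing $f_1$ and $f_2$ through the defining property of the Mobius transform:
$$
f_1(y) = \bigoplus_{u \in \K_2^k} \mu_k(f_1)(u)\, y^u, \qquad
f_2(z) = \bigoplus_{v \in \K_2^{n-k}} \mu_{n-k}(f_2)(v)\, z^v.
$$
Multiplying these and collecting terms,
$$
f(x) = \bigoplus_{u,v} \mu_k(f_1)(u)\,\mu_{n-k}(f_2)(v)\; y^u z^v.
$$
Each $y^u z^v$ is a monomial $x^{(u,v)}$ in all $n$ variables, so by linearity of $\mu_n$ (Proposition~\ref{prop:firstMobius}) and Proposition~\ref{prop:muminterm} I obtain
$$
\mu_n(f) = \bigoplus_{u,v} \mu_k(f_1)(u)\,\mu_{n-k}(f_2)(v)\; M_{(u,v)}.
$$
The key observation is that the minterm factors as $M_{(u,v)}(y,z) = M_u(y)\cdot M_v(z)$, since being equal to $(u,v)$ is equivalent to $y=u$ and $z=v$. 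Once the double sum is rearranged as a product, each factor is exactly the truth-table expression of $\mu_k(f_1)$ and $\mu_{n-k}(f_2)$, giving the desired equality.

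For the sum identity, I would use linearity of $\mu_n$ to write $\mu_n(f_1 \oplus f_2) = \mu_n(f_1) \oplus \mu_n(f_2)$ and then evaluate each term via Equation~(\ref{eq:muNp1Fn}) of Theorem~\ref{prop:atomicMuEquality}. Since $f_1 \in \F_k$ does not involve $x_{k+1}, \ldots, x_n$, iterating (\ref{eq:muNp1Fn}) a total of $n-k$ times yields
$$
\mu_n(f_1) = \prod_{i=k+1}^n (1 \oplus x_i)\, \mu_k(f_1).
$$
Symmetrically, applying the analog of (\ref{eq:muNp1Fn}) for the variables $x_1, \ldots, x_k$ to the function $f_2 \in \F_{n-k}$ (which only involves $x_{k+1},\ldots,x_n$) gives
$$
\mu_n(f_2) = \prod_{i=1}^k (1 \oplus x_i)\, \mu_{n-k}(f_2),
$$
and adding the two expressions yields the claim.

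The main obstacle is essentially notational: Theorem~\ref{prop:atomicMuEquality} is stated only for the specific variable $x_n$, but for the second part of the proof I must invoke it for the variables $x_1,\ldots,x_k$ as well. This requires observing that the proof of that theorem is invariant under permutation of the variable labels, so the same identity holds with any $x_i$ in place of $x_n$. Once this symmetry is acknowledged, both parts reduce to short direct computations, the product case relying on the factorization of minterms across disjoint variable blocks and the sum case on a straightforward induction through (\ref{eq:muNp1Fn}).
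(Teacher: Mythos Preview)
Your proof is correct. For the additive identity you proceed exactly as the paper does: linearity of $\mu_n$ followed by iterating Equation~(\ref{eq:muNp1Fn}) over each block of absent variables, with the same remark that the role of $x_n$ in Theorem~\ref{prop:atomicMuEquality} can be played by any variable.

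For the multiplicative identity your route differs slightly from the paper's. The paper expands only $f_2$ in its ANF and then, for each resulting term $z^u f_1(y)$, applies Equations~(\ref{eq:muNp1Fn}) and~(\ref{eq:muXNp1Fn}) of Theorem~\ref{prop:atomicMuEquality} one $z$-variable at a time to obtain $M_u(z)\cdot\mu_k(f_1(y))$, before resumming over $u$. You instead expand both $f_1$ and $f_2$, invoke only the elementary Proposition~\ref{prop:muminterm} to turn each $x^{(u,v)}$ into $M_{(u,v)}$, and then use the factorization $M_{(u,v)}=M_u(y)M_v(z)$ across the two variable blocks. Your argument is a touch more self-contained since it does not call on Theorem~\ref{prop:atomicMuEquality} for the product part; the paper's version, on the other hand, illustrates how those atomic rules propagate through a product and thus ties the result more directly to the machinery developed in Section~\ref{ssec:point_of_view}. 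Either way the computation is short and the outcome the same.
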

\begin{proof}
Let $u = (u_{k+1},\ldots,u_{n}) \in \K_2^k$. By applying (\ref{eq:muNp1Fn}) and 
(\ref{eq:muXNp1Fn}) of Proposition~\ref{prop:atomicMuEquality}, it follows
\[
\mu_{n-k}(z^u f_1(y)) = M_u(z) \cdot \mu_k(f_1(y)).
\]
Hence
\[
\begin{array}{lcl}
\mu_n(f(x)) & = & \bigoplus_{u \in \K_2^n} M_u(z) \cdot \mu_k(f_1(y))\\
         & = & \mu_{n-k} (\bigoplus_{u \in \K_2^n} z^u) \cdot \mu_k(f_1(y))\\
         & = &  \mu_{n-k}(f_2(z)) \cdot \mu_k(f_1(y)).
\end{array}
\]
First $\mu_n(g(x)) = \mu_n(f_1(y)) \oplus \mu_n(f_2(z))$. Furthermore 
(\ref{eq:muNp1Fn}) of Proposition~\ref{prop:atomicMuEquality} implies
$\mu_n(f_1(y)) = \prod_{i=k+1}^n (1 \oplus x_i) \mu_k((f_1(y))$ 
and \mbox{$\mu_n(f_2(z)) = \prod_{i=1}^k (1 \oplus x_i) \mu_{n-k} (f_2(z))$}.
\end{proof}

\subsection{Coincident functions}
\label{ssec:coincident_functions}
The notion of coincident function was previously introduced in
\cite{coincidentFunc}. A coincident function is a boolean function
which is equal to its Mobius transform.
\begin{definition}[Coincident function]
  Let $f\in \F_n$, $f$ is called
 {\em coincident} if and only if 
  $$
  f = \mu_n(f).
  $$
We will denote by $\C _n$ the set of coincident functions with $n$ variables.
\end{definition}
For this particular subset of boolean functions the monomial and the
associated minterms are directly related; this is the result of the
following proposition.  Let $f \in \mathcal F_n$, by
Proposition~\ref{involution}, $f \oplus \mu_n(f) \in \C_n$.  Here we
propose an improvement of \cite[Lemma 11]{coincidentFunc}.
\begin{proposition} \label{prop:coincidentMonomial}
Let $u \in \K_2^n$ and $h_u = x^u \oplus M_u$. 
Then $h_u$ is coincident and
\[
h_u = \bigoplus_{u \prec v} x^v = \bigoplus_{u \prec v} M_v.
\]
\end{proposition}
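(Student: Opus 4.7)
The plan is to prove the coincidence of $h_u$ first, and then to derive the two alternative expressions by separating the $v = u$ term in the monomial/minterm identities already established.

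For the first claim, I would use the linearity of the Mobius transform from Proposition~\ref{prop:firstMobius} together with Proposition~\ref{prop:muminterm}, which tells us that $\mu_n(x^u) = M_u$ and $\mu_n(M_u) = x^u$. Then
\[
\mu_n(h_u) = \mu_n(x^u) \oplus \mu_n(M_u) = M_u \oplus x^u = h_u,
\]
so $h_u \in \C_n$. This is the quickest part of the proof and I do not expect any obstacle here.

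For the two expressions, I would apply identity~(\ref{monomial2minterm}) to write $x^u = M_u \oplus \bigoplus_{u \prec v} M_v$, so that
\[
h_u = x^u \oplus M_u = \bigoplus_{u \prec v} M_v.
\]
Symmetrically, identity~(\ref{minterm2monomial}) gives $M_u = x^u \oplus \bigoplus_{u \prec v} x^v$, which yields $h_u = \bigoplus_{u \prec v} x^v$. Both equalities simply amount to moving the $v = u$ term of the corresponding sum to the other side, exploiting the fact that over $\K_2$ addition and subtraction coincide.

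There is no real obstacle: the statement is essentially a bookkeeping consequence of the two expansion formulas (\ref{monomial2minterm}) and (\ref{minterm2monomial}) combined with the involutive action of $\mu_n$ on monomials and minterms. The only thing to be careful about is the strict inequality $u \prec v$ versus $u \preceq v$, which is precisely what makes cancellation of the $v = u$ term produce the common value $h_u$.
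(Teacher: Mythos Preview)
Your proof is correct and follows essentially the same approach as the paper: both establish coincidence via $\mu_n(x^u)=M_u$ (the paper by invoking the general fact that $f\oplus\mu_n(f)$ is coincident, you by direct computation using linearity), and both obtain the explicit expressions by cancelling the $v=u$ term in identities~(\ref{monomial2minterm}) and~(\ref{minterm2monomial}). Your write-up is in fact slightly more complete, since the paper only derives the $\bigoplus_{u\prec v} x^v$ form explicitly and leaves the minterm expression implicit.
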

\begin{proof}
Since $\mu_n(x^u) = M_u$, $h_u = x^u \oplus \mu_n(x^u)$ is coincident.
By $(\ref{minterm2monomial})$, $M_u = \bigoplus_{u \preceq v} x^v$ hence 
$h_u =  \bigoplus_{u \prec v} x^v$.

\end{proof}
\begin{remark}
  We propose another point of view of the previous proposition: let
  \mbox{$f_k = \prod_{i=1}^{k}x_i$}, the function with all multiples
  of $f_k$, except $f_k$ itself, give a coincident function; that is
$$
f_k\left( 1 \oplus \prod_{i=k+1}^{n}\left( 1 \oplus x_i\right)\right).
$$
\end{remark}
Hence for particular coincident functions, it becomes trivial to
compute their hamming weight.
\begin{corollary}
Let $k = w_H(a)$. It follows $w_H(h_a) = 2^{n-k} -1$.
\end{corollary}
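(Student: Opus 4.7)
The plan is to use the minterm expansion of $h_a$ established in Proposition~\ref{prop:coincidentMonomial} and then just count. Since $h_a = \bigoplus_{a \prec v} M_v$ is a sum of distinct minterms, and distinct minterms have disjoint supports (each $M_v$ is the indicator of the single point $v$), the Hamming weight of $h_a$ equals the cardinality of the index set $\{v \in \K_2^n : a \prec v\}$.

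Next I would unpack the definition of $\prec$. The condition $a \preceq v$ forces $v_i = 1$ on every coordinate $i$ where $a_i = 1$, while leaving $v_i$ free on the coordinates where $a_i = 0$. Since $w_H(a) = k$, there are exactly $n-k$ free coordinates, giving $2^{n-k}$ choices of $v$ satisfying $a \preceq v$. Excluding the single case $v = a$ (to enforce the strict inequality $a \prec v$) leaves $2^{n-k} - 1$ valid indices.

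The main ``obstacle'' is really just making sure the reader sees that the minterms $M_v$ in the expansion are pairwise supported on disjoint singletons, so that the weights add without cancellation; this is immediate from the definition of the minterm but worth stating explicitly. No induction or deeper machinery is required — everything follows from Proposition~\ref{prop:coincidentMonomial} together with the elementary counting argument above.
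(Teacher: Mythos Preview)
Your argument is correct and is exactly the intended one: the paper states this as an immediate corollary of Proposition~\ref{prop:coincidentMonomial} without giving a separate proof, and the counting you describe (free coordinates where $a_i=0$, then exclude $v=a$) is precisely what makes it ``trivial'' in the paper's wording.
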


Coincident functions had a lower bound on its algebraic degree, this
is a trivial consequence from Proposition~\ref{prop:degre}.
\begin{proposition}
  \label{prop:coincidentWeight}
  \cite[Theorem 31]{coincidentFunc} Let $h \in \C_n \setminus \lbrace
  0 \rbrace$ be a coincident function with n variables. Then a lower
  bound on its algebraic degree is
  $$
  \deg(h)\ge \frac{n}{2}.
  $$
\end{proposition}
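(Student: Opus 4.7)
The plan is essentially to read off the bound from Proposition~\ref{prop:degre} after exploiting the defining identity of coincident functions. Since $h \in \C_n$ means $h = \mu_n(h)$, we trivially have $\deg(h) = \deg(\mu_n(h))$.

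First I would invoke Proposition~\ref{prop:degre} applied to $f = h$; this requires only that $h \neq \mathbf{0}_n$, which is exactly the hypothesis $h \in \C_n \setminus \{0\}$. That proposition yields
\[
\deg(h) + \deg(\mu_n(h)) \ge n.
\]
Next I would substitute $\deg(\mu_n(h)) = \deg(h)$, which is legitimate because $h$ is coincident, obtaining $2\deg(h) \ge n$, and divide by $2$ to conclude $\deg(h) \ge n/2$.

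There is essentially no obstacle here: the entire content of the statement is carried by Proposition~\ref{prop:degre}, and the role of coincidence is merely to let us replace $\deg(\mu_n(h))$ by $\deg(h)$. The only point that deserves a mention is that the nonvanishing hypothesis on $h$ is indispensable, since Proposition~\ref{prop:degre} excludes $\mathbf{0}_n$ (indeed $\mathbf{0}_n$ is coincident of degree $0 < n/2$ for $n \geq 1$, so some such exclusion is necessary).
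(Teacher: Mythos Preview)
Your proposal is correct and matches the paper's own proof essentially line for line: apply Proposition~\ref{prop:degre} to $h$, use $\mu_n(h)=h$ to replace $\deg(\mu_n(h))$ by $\deg(h)$, and divide by~$2$. Your additional remark about why the nonvanishing hypothesis is needed is a small bonus not spelled out in the paper.
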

 \begin{proof}
   Since $h$ is a coincident function and Proposition~\ref{prop:degre}, we have that
   \begin{eqnarray*}
    \deg(h) + \deg(\mu(h)) & \ge & n\\
    2 \deg(h) & \ge & n\\
    \deg(h) & \ge & \frac{n}{2}.
   \end{eqnarray*}
 \end{proof}

\begin{notation}
Let $f \in \F_n$.
We define the operators $\varphi_n$ and $\C_n$ by
\[
\begin{array}{lcl}
\varphi_n(f) & = & f \oplus \mu_n(f) \\
\C_n(f) & = & \left\lbrace f' \in \F_n : \varphi_n(f') = \varphi_n(f)\right\rbrace.\\
\end{array}
\]
\end{notation}

\subsection{A construction of coincident functions}
\label{ssec:construction}

The following proposition gives us a simple way to build a coincident functions.
\begin{proposition} \label{prop:equivCoincident}
  \cite[Theorem 24]{coincidentFunc}
Let $h \in \C_n$, there exists a unique $g \in \mathcal F_{n-1}$ such that 
\[
\begin{array}{lcl}
h & = & g \oplus \mu_n(g)\\
  & = & \varphi_{n-1}(g) \oplus x_n \mu_{n-1}(g)\\
  & = & (1 \oplus x_n) \varphi_{n-1}(g) \oplus x_n g. 
\end{array}
\]
\end{proposition}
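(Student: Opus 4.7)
The plan is to prove the three claimed expressions for $h$ are equivalent as polynomial identities (given any $g \in \F_{n-1}$), then exhibit the unique $g$ via the Shannon decomposition of $h$.

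First I would show that the three displayed right-hand sides are equal for \emph{any} $g \in \F_{n-1}$, irrespective of whether $h$ is coincident. The key ingredient is Equation~(\ref{eq:muNp1Fn}) from Theorem~\ref{prop:atomicMuEquality}, which since $g \in \F_{n-1}$ gives $\mu_n(g) = (1 \oplus x_n)\mu_{n-1}(g)$. Substituting this into $g \oplus \mu_n(g)$ yields $g \oplus \mu_{n-1}(g) \oplus x_n \mu_{n-1}(g) = \varphi_{n-1}(g) \oplus x_n \mu_{n-1}(g)$, which is the second form. Expanding the third candidate $(1\oplus x_n)\varphi_{n-1}(g) \oplus x_n g$ and collecting the $x_n$-free and $x_n$-linear parts gives the same expression, so the three forms coincide.

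For existence of $g$, I would apply the Shannon decomposition $h = (1 \oplus x_n)h_S^0 \oplus x_n h_S^1$ with $h_S^0, h_S^1 \in \F_{n-1}$, and compute $\mu_n(h)$ using Proposition~\ref{prop:decomposition2}, which gives $\mu_n(h) = \mu_{n-1}(h_S^0) \oplus x_n \mu_{n-1}(h_S^1)$. The coincidence hypothesis $h = \mu_n(h)$ forces
\[
h_S^0 \oplus x_n(h_S^0 \oplus h_S^1) = \mu_{n-1}(h_S^0) \oplus x_n \mu_{n-1}(h_S^1).
\]
Identifying the $x_n$-free and $x_n$-linear components (which is legitimate since every element of $\F_n$ has a unique Shannon decomposition) yields $h_S^0 = \mu_{n-1}(h_S^0)$ and $h_S^0 \oplus h_S^1 = \mu_{n-1}(h_S^1)$, i.e.\ $h_S^0 = \varphi_{n-1}(h_S^1)$. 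Setting $g := h_S^1$ we then get $\mu_{n-1}(g) = h_S^0 \oplus h_S^1$, and substituting back into the second form $\varphi_{n-1}(g) \oplus x_n \mu_{n-1}(g) = h_S^0 \oplus x_n(h_S^0 \oplus h_S^1) = h$ confirms $h = g \oplus \mu_n(g)$.

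For uniqueness, I would observe that for any $g \in \F_{n-1}$ the second form $h = \varphi_{n-1}(g) \oplus x_n \mu_{n-1}(g)$ is itself a Shannon decomposition of $h$ in the variable $x_n$. Reading off the Shannon components gives $h_S^0 = \varphi_{n-1}(g)$ and $h_S^1 = g$, so $g$ is forced to be $h_S^1$ and is therefore unique. I do not anticipate a significant obstacle: the whole argument is driven by Theorem~\ref{prop:atomicMuEquality} and Proposition~\ref{prop:decomposition2}, and the trickiest bookkeeping is just matching the $x_n$-free and $x_n$-linear parts on both sides of the coincidence equation.
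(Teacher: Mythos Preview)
Your proof is correct and follows essentially the same strategy as the paper's: decompose $h$ with respect to $x_n$, apply the corresponding M\"obius decomposition formula, and identify components. The only difference is that you work with the Shannon decomposition of $h$ and Proposition~\ref{prop:decomposition2}, whereas the paper uses the Reed-Muller decomposition and Proposition~\ref{prop:decomposition1}, setting $g = \mu_{n-1}(h_R^1)$; since the coincidence hypothesis forces $h_S^1 = h_R^0 \oplus h_R^1 = \mu_{n-1}(h_R^1)$, the two choices of $g$ agree. Your route is arguably a touch more direct for uniqueness, because the third displayed form is literally the Shannon decomposition of $h$, so $g = h_S^1$ is immediate. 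One small slip of wording: in your uniqueness paragraph you call the \emph{second} form a Shannon decomposition, but it is the \emph{third} form $(1\oplus x_n)\varphi_{n-1}(g)\oplus x_n g$ that is Shannon; your conclusion $g = h_S^1$ is nonetheless correct.
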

\begin{proof}
Let $h = h_R^0 \oplus x_n h_R^1 \in \C_{n-1}$.
Then $\mu_n(h) = \mu_{n-1}(h_R^0) \oplus x_n \mu_{n-1}(h_R^0 \oplus h_R^1)$. 
Since $h = \mu_n(h)$, it follows $h_R^0 \in \C_{n-1}$ and $h_R^1 = h_R^0 \oplus \mu_{n-1}(h_R^1)$, 
which implies $h_R^0 = \varphi_{n-1}(h_R^1)$.
Let $g = \mu_{n-1}(h_R^1)$, hence
\begin{eqnarray*}
  g  \oplus \mu_n(g)  & =  &  \mu_{n-1}(h_R^1) \oplus  (1 \oplus  x_n)
  h_R^1\\
  & = & \varphi_{n-1}(h_R^1) \oplus x_n h_R^1\\
  & = & h_R^0 \oplus x_n h_R^1 = h.
  \end{eqnarray*}
Since $\varphi_{n-1}(g) = h_R^0$, we also have 
\begin{eqnarray*}
  h & = & \varphi_{n-1}(g) \oplus x_n \mu_{n-1}(g)\\
  & = & (1 \oplus x_n) \varphi_{n-1}(g) \oplus x_n g.
  \end{eqnarray*}
\end{proof}
Since we have a one to one correspondence between the Boolean functions with $n-1$ variables and 
the coincident functions with $n$ variables, we trivially deduce that $card(\C_n) =
2^{2^{n-1}}$; see \cite[Theorem 18]{coincidentFunc}.
We also show that $\C_n$ has a vectorial space structure.
\begin{proposition}
The set $\C_n$ is a vectorial space of dimension $2^{n-1}$.
\end{proposition}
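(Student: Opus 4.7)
The plan is to observe that $\C_n$ sits naturally inside $\F_n$, which is an $\mathbb{F}_2$-vector space of dimension $2^n$ (with either the truth table or the ANF coefficients as canonical coordinates). So it suffices to show that $\C_n$ is an $\mathbb{F}_2$-subspace and to compute its dimension.

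For the subspace structure, I would use the linearity of $\mu_n$ from Proposition~\ref{prop:firstMobius}: the map $\varphi_n : f \mapsto f \oplus \mu_n(f)$ is $\mathbb{F}_2$-linear, and by definition $\C_n = \ker \varphi_n$. Hence $\C_n$ is a subspace of $\F_n$; in particular $\mathbf 0_n \in \C_n$, and $\C_n$ is closed under $\oplus$ (for $f,f' \in \C_n$, $\mu_n(f\oplus f') = \mu_n(f)\oplus \mu_n(f') = f \oplus f'$).

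For the dimension, the quickest route is to invoke the bijection established in Proposition~\ref{prop:equivCoincident}: every $h \in \C_n$ is of the form $g \oplus \mu_n(g)$ for a unique $g \in \F_{n-1}$. Since $\F_{n-1}$ has cardinality $2^{2^{n-1}}$, this gives $|\C_n| = 2^{2^{n-1}}$, and hence $\dim_{\mathbb{F}_2} \C_n = 2^{n-1}$.

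There is no real obstacle here; the only thing to be slightly careful about is that the correspondence of Proposition~\ref{prop:equivCoincident} is in fact $\mathbb{F}_2$-linear in $g$ (which it is, since both $g \mapsto g$ and $g \mapsto \mu_{n-1}(g)$ are linear, and the inclusion $\F_{n-1} \hookrightarrow \F_n$ used implicitly is linear as well). If one prefers to avoid the bijection argument, an alternative is to exhibit an explicit basis: by Proposition~\ref{prop:coincidentMonomial} the functions $h_u = x^u \oplus M_u$ for $u \in \K_2^n \setminus \{0_n\}$ together with $h_{0_n}$ lie in $\C_n$, and one can check that the $2^{n-1}$ of them indexed, say, by $u$ with $w_H(u) \ge \lceil n/2 \rceil$ (or any suitable choice matching the dimension count) span $\C_n$; but the cardinality argument via Proposition~\ref{prop:equivCoincident} is cleanest and avoids picking a basis by hand.
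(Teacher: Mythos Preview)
Your argument is correct and matches the paper's: both establish closure under $\oplus$ via the $\mathbb{F}_2$-linearity of $\mu_n$ (you phrase it as $\C_n=\ker\varphi_n$, the paper writes $h_1\oplus h_2=\varphi_n(f_1\oplus f_2)$), and both read off the dimension from the bijection of Proposition~\ref{prop:equivCoincident} giving $|\C_n|=2^{2^{n-1}}$. As a minor aside, your suggested alternative basis indexed by $w_H(u)\ge\lceil n/2\rceil$ does not have the right cardinality in general; the paper's later Proposition~\ref{basisCoincident} takes instead the $h_a$ with $a_n=0$.
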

\begin{proof} Let $h_1$ and $h_2 \in \C_n$. We consider $f_1$ and $f_2 \in \F_2^n$ such that 
$h_1 = \varphi_n(f_1)$ and $h_2 = \varphi_n(f_2)$. Let $h = h_1 \oplus h_2$. 
\[
h = g_1 \oplus \mu_n(g_1) \oplus g_2 \oplus \mu_n(g_2) =  (g_1 \oplus g_2) \oplus \mu_n(g_1 \oplus g_2),
\]
hence $h = \varphi_n(f_1 \oplus f_2)$.
\end{proof}
\begin{corollary} (Random generation of a coincident function) 
Let $uniform(E)$ a function which returns with the uniform distribution 
an element of a finite set $E$. The following algorithm returns with the uniform distribution a coincident function from $\C_n$. 
\begin{algorithm}[h]
  \caption{\label{algo:randomCoincident} 
    Uniform random generation of a coincident function}
  \KwIn{The integer $n$, the number of variables.}
  \KwOut{$h \in \C_n$ a coincident function with $n$ variables.}
  \BlankLine
\Begin{$g \gets$ uniform($\F_{n-1})$\\
$f \gets \mu_n(g)$\\
$h \gets g \oplus f$
}
  \KwRet{$h$}
\end{algorithm}
\end{corollary}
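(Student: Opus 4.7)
The plan is to observe that the algorithm is nothing more than applying the bijection from Proposition~\ref{prop:equivCoincident} to a uniformly random element of $\F_{n-1}$. Since pushing a uniform distribution through a bijection yields the uniform distribution on the image, correctness of the algorithm reduces to checking that the map $g \mapsto g \oplus \mu_n(g)$ is a bijection $\F_{n-1} \to \C_n$.

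First, I would verify that the output $h$ lies in $\C_n$. Starting from $g \in \F_{n-1}$, Equation~(\ref{eq:muNp1Fn}) of Theorem~\ref{prop:atomicMuEquality} gives $\mu_n(g) = (1 \oplus x_n)\mu_{n-1}(g)$, so $h = g \oplus \mu_n(g)$ matches the form appearing in Proposition~\ref{prop:equivCoincident}; the latter guarantees $h \in \C_n$.

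Next, I would invoke the existence-and-uniqueness statement of Proposition~\ref{prop:equivCoincident}: for every $h \in \C_n$ there is a \emph{unique} $g \in \F_{n-1}$ with $h = g \oplus \mu_n(g)$. This is exactly the statement that the map
\[
\Phi \colon \F_{n-1} \longrightarrow \C_n, \qquad g \longmapsto g \oplus \mu_n(g),
\]
is a bijection. A cardinality check (already noted after Proposition~\ref{prop:equivCoincident}, since $|\F_{n-1}| = 2^{2^{n-1}} = |\C_n|$) is consistent with this.

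Finally, if $g$ is drawn uniformly from $\F_{n-1}$, then for every $h \in \C_n$ we have $\Pr[\Phi(g) = h] = \Pr[g = \Phi^{-1}(h)] = 2^{-2^{n-1}}$, so $h = \Phi(g)$ is uniform on $\C_n$. Since the algorithm outputs precisely $\Phi(g)$ with $g \gets \mathrm{uniform}(\F_{n-1})$, it samples uniformly from $\C_n$. There is no real obstacle here: all the work lies in Proposition~\ref{prop:equivCoincident}, and this corollary simply repackages that bijection as a sampling procedure.
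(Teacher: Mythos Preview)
Your proposal is correct and matches the paper's intended reasoning: the paper states this as a corollary with no separate proof, relying exactly on the one-to-one correspondence of Proposition~\ref{prop:equivCoincident} (and the cardinality remark following it), which you have unpacked explicitly.
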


\begin{proposition}
Let $h \in C_n$ be a coincident function, then $h(0_n) = 0$.
\end{proposition}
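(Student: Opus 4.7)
The plan is to exploit the structural characterisation from Proposition~\ref{prop:equivCoincident}, which provides, for every $h \in \C_n$, a unique $g \in \F_{n-1}$ with $h = g \oplus \mu_n(g)$. Evaluating at $0_n$ and invoking the identity $\mu_n(g)(0_n) = g(0_n)$ from Proposition~\ref{prop:firstMobius} then yields $h(0_n) = g(0_n) \oplus g(0_n) = 0$ immediately.

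The point worth flagging is that the apparently direct route is circular: since $h = \mu_n(h)$, one might try to write $h(0_n) = \mu_n(h)(0_n) = h(0_n)$, but the middle equality is just Proposition~\ref{prop:firstMobius} applied to $h$, and it holds for \emph{every} Boolean function, not just coincident ones, so it carries no content on its own. The decomposition $h = g \oplus \mu_n(g)$ is what converts the tautological identity $\mu_n(g)(0_n) = g(0_n)$ into a genuine cancellation.

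A more elementary alternative, bypassing Proposition~\ref{prop:equivCoincident}, would use the vector-space structure of $\C_n$ combined with a spanning family of coincident functions that vanish at $0_n$: the generators $h_u = x^u \oplus M_u$ from Proposition~\ref{prop:coincidentMonomial} all satisfy $h_u(0_n) = 0$, since both $x^u$ and $M_u$ evaluate to the indicator of $u = 0_n$ at input $0_n$ and therefore cancel. This route would, however, require a separate check that such functions span $\C_n$, so the decomposition-based argument is preferable. Either way there is no serious obstacle; the claim is a one-line corollary of the machinery already assembled.
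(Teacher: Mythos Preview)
Your proposal is correct and follows essentially the same route as the paper: the paper's proof writes $h = \varphi_n(f) = f \oplus \mu_n(f)$ for some $f$ and then invokes Proposition~\ref{prop:firstMobius} to get $h(0_n) = f(0_n) \oplus \mu_n(f)(0_n) = 0$, exactly as you do with your $g$. Your side remarks about the circular direct approach and the alternative via the generators $h_u$ are accurate but extraneous to the argument itself.
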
 
\begin{proof}
Let $f$ such that $h = \varphi_n (f)$, by Proposition~\ref{prop:firstMobius},
$\mu(f)(0_n) = f(0_n)$, then \mbox{$h(0_n) = 0$}.
\end{proof}

Here is a strong connection between the Mobius transform of a boolean function and this one of its Reed-Muller decomposition.
\begin{proposition}\label{recursive2}
  Let $f = f^0_R \oplus x_n f_R^1 \in \F_n$ and $h^0 = \varphi_{n-1}(f_R^0)$, $h^1 = \varphi_{n-1}(f_R^1)$ then
  $$
  \varphi_n(f) = h^0 + x_n(h^0\oplus h^1 \oplus f_R^0).
  $$
\end{proposition}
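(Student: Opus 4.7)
The plan is to unfold $\varphi_n(f) = f \oplus \mu_n(f)$ using the Reed-Muller decomposition formula for $\mu_n$ (Proposition~\ref{prop:decomposition1}), then regroup terms into the constant part and the coefficient of $x_n$, and finally recognize the resulting coefficient as $h^0 \oplus h^1 \oplus f_R^0$.

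First, I would write $f = f_R^0 \oplus x_n f_R^1$ and apply Proposition~\ref{prop:decomposition1} to obtain
\[
\mu_n(f) = \mu_{n-1}(f_R^0) \oplus x_n\bigl(\mu_{n-1}(f_R^0) \oplus \mu_{n-1}(f_R^1)\bigr),
\]
having used $(1 \oplus x_n)\mu_{n-1}(f_R^0) = \mu_{n-1}(f_R^0) \oplus x_n \mu_{n-1}(f_R^0)$. Adding $f$ and grouping the constant part and the $x_n$-part separately gives
\[
\varphi_n(f) = \bigl(f_R^0 \oplus \mu_{n-1}(f_R^0)\bigr) \oplus x_n\bigl(f_R^1 \oplus \mu_{n-1}(f_R^0) \oplus \mu_{n-1}(f_R^1)\bigr).
\]
The constant part is exactly $\varphi_{n-1}(f_R^0) = h^0$ by definition of $\varphi_{n-1}$.

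The remaining step is to identify the coefficient of $x_n$. Expanding $h^0 \oplus h^1 \oplus f_R^0$ using the definitions $h^0 = f_R^0 \oplus \mu_{n-1}(f_R^0)$ and $h^1 = f_R^1 \oplus \mu_{n-1}(f_R^1)$, the two copies of $f_R^0$ cancel and we are left with $\mu_{n-1}(f_R^0) \oplus f_R^1 \oplus \mu_{n-1}(f_R^1)$, which matches the coefficient above. Substituting back yields the claimed identity.

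I do not expect any real obstacle: the whole argument is a one-line manipulation once Proposition~\ref{prop:decomposition1} is invoked, so the only care needed is to keep track of the XOR cancellations and to recognize $\varphi_{n-1}(f_R^0)$ inside the expansion.
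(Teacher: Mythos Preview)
Your proof is correct and follows essentially the same route as the paper: both expand $\varphi_n(f) = f \oplus \mu_n(f)$ via the Reed--Muller formula for $\mu_n$ (Proposition~\ref{prop:decomposition1}), separate the constant part and the $x_n$-coefficient, and then rewrite $\mu_{n-1}(f_R^0)$ as $h^0 \oplus f_R^0$ to obtain the stated expression. Your write-up is in fact slightly more explicit than the paper's, which contains a small typo in its second displayed line.
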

\begin{proof}
  Let $h = \varphi_n(f)$. Let us to compute $h$
  \begin{eqnarray*}
    h & = & f \oplus \mu_n(f)\\
    & = & f_R^0 \oplus x_n f_R^1 \oplus (1+x_n)\mu_{n-1}(f_R^1) \oplus x_n \mu_{n-1}(f_R^1)\\
    & = & h^0 \oplus x_n(h^1 \oplus  \mu_{n-1}(f_R^0))\\
    & = & h^0 \oplus x_n(h^0 \oplus h^1 \oplus f_R^0).
  \end{eqnarray*}
\end{proof}
\begin{remark}
Proposition~\ref{prop:equivCoincident} and~\ref{recursive2} imply
$h^0 = \varphi_{n-1}(g)$ and $g = h^1 \oplus f_R^0$, {\it ie} \mbox{$f_R^0 \in \C_{n-1}(g)$}.
\end{remark}

\begin{remark}
  Let $h \in \F_n$, then $h$ is coincident if and only if $h \in \C_n(0)$.
\end{remark}
\begin{corollary}
  \label{prop:additiveClass}
  Let $g_1, g_2 \in \F_{n-1}$, $f_1 \in \C_n(g_1)$ and $f_2 \in
  \C_n(g_2)$. Then 
  $$
  f_1 \oplus f_2 \in \C_n(g_1 \oplus g_2).
  $$
\end{corollary}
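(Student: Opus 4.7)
The plan is to reduce the statement to the $\mathbb{F}_2$-linearity of $\varphi_n$. Recall that $\varphi_n(f) = f \oplus \mu_n(f)$. By Proposition~\ref{prop:firstMobius}, $\mu_n$ is $\mathbb{F}_2$-linear on $\F_n$, i.e.\ $\mu_n(a \oplus b) = \mu_n(a) \oplus \mu_n(b)$. Since the identity map is trivially linear, so is their sum $\varphi_n$. In particular the fibers $\C_n(g) = \varphi_n^{-1}(\varphi_n(g))$ are cosets of $\ker \varphi_n$ (which is exactly the set of coincident functions, as noted in the preceding remark), so the statement is essentially the well-definedness of addition on the quotient.

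Concretely, I would just compute. Assuming $f_1 \in \C_n(g_1)$ and $f_2 \in \C_n(g_2)$ means $\varphi_n(f_i) = \varphi_n(g_i)$ for $i = 1,2$, I apply linearity of $\varphi_n$ twice:
\begin{eqnarray*}
\varphi_n(f_1 \oplus f_2) & = & \varphi_n(f_1) \oplus \varphi_n(f_2) \\
& = & \varphi_n(g_1) \oplus \varphi_n(g_2) \\
& = & \varphi_n(g_1 \oplus g_2).
\end{eqnarray*}
By the definition of the operator $\C_n(\cdot)$, this says exactly that $f_1 \oplus f_2 \in \C_n(g_1 \oplus g_2)$, which is the claim.

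There is no real obstacle: the argument is a one-line consequence of the fact that $\mu_n$ is additive, a property already established in Proposition~\ref{prop:firstMobius}. The only minor care needed is interpreting the $\C_n(\cdot)$ notation consistently (the $\F_{n-1}$ in the hypothesis is harmless, since functions of $n-1$ variables sit inside $\F_n$ via Remark~\ref{rem:FnCFnk}, and $\varphi_n$ is still applied in $\F_n$). Once that is made explicit, the calculation above completes the proof.
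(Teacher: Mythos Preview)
Your argument is correct and is exactly the intended one: the paper states this result as a corollary with no proof, since it follows immediately from the $\K_2$-linearity of $\varphi_n = \mathrm{id} \oplus \mu_n$ (itself a consequence of Proposition~\ref{prop:firstMobius}), which is precisely the computation you wrote out. Your remark about embedding $\F_{n-1}$ into $\F_n$ to make sense of $\varphi_n(g_i)$ is a reasonable clarification of something the paper leaves implicit.
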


From the linearity, the Mobius transform of the sum of function is the
sum of the Mobius transforms. We propose to look on the
multiplicativity.
\begin{proposition}
\label{productCoincident}
  \cite[Theorem 7]{coincidentFunc}
Let $n \in \N$ and $0< k < n$. We will denote respectively
the   tuples   $x=(x_1,\ldots,   x_n)$,  $y=(x_1,\ldots,   x_k)$   and
\mbox{$z=(x_{k+1},  \ldots,  x_n)$}.  Let  $f_1(y)  \in  \F_k$,  $f_2(z)  \in
\F_{n-k}$ and $f(x) = f_1(y)  \cdot f_2(z)$. Then $f$ is coincident if
and only if $f_1$ and $f_2$ are.
\end{proposition}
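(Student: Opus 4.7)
The backbone of the proof is Proposition~\ref{prop:muProduct}, which already gives the identity $\mu_n(f_1 \cdot f_2) = \mu_k(f_1) \cdot \mu_{n-k}(f_2)$. The forward implication is then immediate: if $\mu_k(f_1)=f_1$ and $\mu_{n-k}(f_2)=f_2$, substituting yields $\mu_n(f) = f_1 \cdot f_2 = f$, so $f$ is coincident.

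For the converse, the plan is to argue by uniqueness of algebraic normal forms, under the tacit nondegeneracy assumption that neither $f_1$ nor $f_2$ is $\mathbf{0}$ (otherwise $f = \mathbf{0}_n$ is coincident without forcing the remaining factor to be so, and the stated equivalence would fail). Writing
\[
f_1(y) = \bigoplus_{u \in \K_2^k} \alpha_u\, y^u, \qquad f_2(z) = \bigoplus_{v \in \K_2^{n-k}} \beta_v\, z^v,
\]
and denoting by $\alpha'_u$ and $\beta'_v$ the ANF coefficients of $\mu_k(f_1)$ and $\mu_{n-k}(f_2)$, the product $f_1 f_2$ has ANF $\bigoplus_{u,v} \alpha_u \beta_v\, y^u z^v$: because the tuples $y$ and $z$ are disjoint, the monomials $y^u z^v$ are pairwise distinct and form a basis of $\F_n$ as an $\K_2$-vector space. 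The hypothesis $f = \mu_n(f)$ combined with Proposition~\ref{prop:muProduct} thus gives $f_1(y) f_2(z) = \mu_k(f_1)(y) \cdot \mu_{n-k}(f_2)(z)$, and identification of ANF coefficients yields
\[
\alpha_u \beta_v = \alpha'_u \beta'_v \qquad \text{for all } (u,v) \in \K_2^k \times \K_2^{n-k}.
\]

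To finish, pick $u_0$ with $\alpha_{u_0}=1$ and $v_0$ with $\beta_{v_0}=1$, which exist by the nondegeneracy assumption. The equation at $(u_0,v_0)$ reads $1 = \alpha'_{u_0} \beta'_{v_0}$ in $\K_2$, so both factors must equal $1$. Then, for every $u \in \K_2^k$, choosing $v = v_0$ gives $\alpha_u = \alpha_u \beta_{v_0} = \alpha'_u \beta'_{v_0} = \alpha'_u$, so $f_1 = \mu_k(f_1)$; symmetrically, $f_2 = \mu_{n-k}(f_2)$.

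The only genuinely non-routine step is this bilinear rank-one factorization over $\K_2$, which is handled by the pointwise $\{0,1\}$ case analysis above; everything else is forced by Proposition~\ref{prop:muProduct} and the uniqueness of the ANF.
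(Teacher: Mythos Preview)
Your proof is correct and follows the same route as the paper: both rely on Proposition~\ref{prop:muProduct}, and the forward direction is identical. For the converse the paper simply asserts that $\mu_n(f)=f$ forces $\mu_k(f_1)=f_1$ and $\mu_{n-k}(f_2)=f_2$ without argument; your ANF rank-one factorization is exactly the missing justification, and you also correctly flag that the ``only if'' direction requires $f_1,f_2\neq\mathbf{0}$, a caveat the paper's statement omits.
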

\begin{proof}
This is a direct application of Proposition~\ref{prop:muProduct}:
\[
\mu_n(f(x)) = \mu_{n-k}(f_2(z)) \cdot \mu_k(f_1(y)).
\]
Clearly if $f_1$ and $f_2$ are coincident functions, then 
\mbox{$\mu_n(f(x)) = f_2(z) \cdot f_1(y) = f(z)$} and $f$ is a coincident function.
Conversely, $\mu_n(f(x)) = f(z)$ implies \mbox{$\mu_{n-k}(f_2(z)) = f_2(z)$}
and $\mu_k(f_1(y)) = f_1(y)$.

\end{proof}

We exhibit now some constructions of coincident functions available for any number 
of variables.
\begin{proposition}
  \label{prop:atomicCoincidentFct}
  Let $n \in \mathbb{N}^*$ be a positive integer, the boolean functions:
  \begin{enumerate}
  \item \label{NulleFct} $\mu_n(0_n) = 0_n$, 
  \item \label{MonomialPrdt} $\prod_{i=1}^n x_i$,
  \item \label{SumMonomial} $1\oplus  \prod_{i=1}^n(1\oplus  x_i)$,
  \end{enumerate}
  are coincident.
\end{proposition}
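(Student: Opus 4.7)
The plan is to reduce all three items to the monomial-minterm correspondence established in Proposition~\ref{prop:muminterm} combined with the linearity of $\mu_n$ from Proposition~\ref{prop:firstMobius}. The key observation is that each of the three functions is either a single monomial equal to its associated minterm, or a sum of such self-matched pieces.

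For item~(\ref{NulleFct}), linearity immediately gives $\mu_n(0_n)=0_n$, since the zero function has no monomials in its ANF. For item~(\ref{MonomialPrdt}), I would write $\prod_{i=1}^n x_i = x^{1_n}$ where $1_n=(1,\dots,1)$. Proposition~\ref{prop:muminterm} yields $\mu_n(x^{1_n}) = M_{1_n}$, so it remains only to check $M_{1_n}=\prod_{i=1}^n x_i$. This is immediate from the product formula $M_u = \prod_{i\in I_1(u)} x_i \prod_{i\in I_0(u)}(1\oplus x_i)$ used in the proof of~(\ref{minterm2monomial}): when $u=1_n$, the second product is empty.

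For item~(\ref{SumMonomial}), I would recognize $\prod_{i=1}^n(1\oplus x_i) = M_{0_n}$ by the same product formula (now with $u=0_n$, so the first product is empty). Using linearity,
\[
\mu_n\bigl(1 \oplus M_{0_n}\bigr) = \mu_n(1) \oplus \mu_n(M_{0_n}).
\]
Since $1 = x^{0_n}$, Proposition~\ref{prop:muminterm} gives $\mu_n(1) = M_{0_n}$ and $\mu_n(M_{0_n}) = x^{0_n} = 1$. Hence $\mu_n(1 \oplus M_{0_n}) = M_{0_n} \oplus 1 = 1 \oplus \prod_{i=1}^n(1\oplus x_i)$, which is exactly the function we started with.

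There is no real obstacle here: the statement is essentially a catalogue of the simplest fixed points of $\mu_n$, and each case is one short rewriting once one identifies the right $u\in\K_2^n$ making $x^u$ coincide with $M_u$ (namely $u=1_n$) or pairs $x^{0_n}$ and $M_{0_n}$ via the involution. Alternatively, item~(\ref{MonomialPrdt}) can be obtained by iterating Proposition~\ref{productCoincident}, since $x_1\in\C_1$ (indeed $\mu_1(x_1)=M_1=x_1$), so the product $\prod x_i$ is coincident in $\F_n$ by induction on $n$; this provides a second, independent argument.
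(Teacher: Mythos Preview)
Your proof is correct and essentially the same as the paper's: both identify $\prod_i x_i = x^{1_n} = M_{1_n}$ via Proposition~\ref{prop:muminterm}, and both identify $\prod_i(1\oplus x_i) = M_{0_n}$ and $1 = x^{0_n}$ for item~(\ref{SumMonomial}). The only cosmetic difference is that the paper packages item~(\ref{SumMonomial}) as $\varphi_n(1) = 1 \oplus \mu_n(1)$ (automatically coincident), whereas you compute $\mu_n$ of the sum directly by linearity; the content is identical.
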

\begin{proof} Let us to prove the three assertions.\\
The definition of Mobius transform gives us directly \ref{NulleFct}.\\
For the next assertion \ref{MonomialPrdt}, we have $\prod_{i=1}^n x_i =
x^{\mathbf 1} = M_{\mathbf 1}$, we conclude by applying
Proposition~\ref{prop:muminterm}.\\
Concerning the assertion\ref{SumMonomial};
$\prod_{i=1}^n(1\oplus  x_i) = M_{\mathbf 0}$ and $1 = x^{\mathbf 0}$, then  
\mbox{$\prod_{i=1}^n(1\oplus  x_i) = \mu_n(1)$} and $1 = \mu_n(\prod_{i=1}^n(1\oplus  x_i))$. 
Hence $1\oplus  \prod_{i=1}^n(1\oplus  x_i) = \varphi_n(1)$.
\end{proof}
Since the Mobius transform of a coincident function provides a connection between the minterms and the monomials, 
we deduce the following corollary.
\begin{corollary}
  \label{coro:weightCoincident}
  Let $h$ be a coincident function, then $w(h) = N(h)$,
  where $N(h)$ gives the number of monomials of the $h$.
\end{corollary}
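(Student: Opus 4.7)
The plan is to unpack what the two quantities count in terms of the two canonical expansions of $h$, and observe that coincidence identifies the two coefficient sequences pointwise.

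First I would note that, by the truth-table expansion, $h = \bigoplus_{u \in \K_2^n} \beta_u M_u$ with $\beta_u = h(u)$, so
\[
w_H(h) = |\{u \in \K_2^n : \beta_u = 1\}| = |\{u \in \K_2^n : h(u) = 1\}|.
\]
On the other hand, by the definition of the Mobius transform, $h = \bigoplus_{u \in \K_2^n} \mu_n(h)(u) \, x^u$, so the ANF coefficients of $h$ are exactly the values of $\mu_n(h)$, and
\[
N(h) = |\{u \in \K_2^n : \mu_n(h)(u) = 1\}|.
\]

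The final step is then immediate: since $h$ is coincident, $h = \mu_n(h)$, so $h(u) = \mu_n(h)(u)$ for every $u \in \K_2^n$, and the two indicator sets above coincide, giving $w_H(h) = N(h)$. There is no real obstacle here; the proof is essentially a translation of the definitions once one has identified that the number of monomials in the ANF is nothing other than the Hamming weight of the Mobius transform viewed as a Boolean function.
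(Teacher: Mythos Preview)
Your argument is correct and is essentially the same as the paper's: the paper states the corollary without a formal proof, simply noting that for a coincident function the minterms and monomials are in direct correspondence, which is exactly the observation you spell out (that $w_H(h)$ counts the support of $h$ while $N(h)$ counts the support of $\mu_n(h)$, and these coincide when $h=\mu_n(h)$).
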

We introduce now the dual of a coincident function.
\begin{definition} 
Let $h \in \C_n$. The dual of $h$ is the coincident function 
$\overline{h}^* = h \oplus \varphi_n(1)$.
\end{definition}

\begin{proposition}\label{parityCoincident}
We have a one to one correspondence between coincident functions with odd 
Hamming weight and even Hamming weight.
\end{proposition}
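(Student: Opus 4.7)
The plan is to exhibit an explicit involution on $\C_n$ that swaps weight-parities, namely the ``dualization'' map $\sigma : h \mapsto \overline{h}^* = h \oplus \varphi_n(1)$ just defined.

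First I would check that $\sigma$ is a well-defined involution on $\C_n$. The function $\varphi_n(1)$ is coincident (this is Assertion \ref{SumMonomial} of Proposition~\ref{prop:atomicCoincidentFct}, or directly $\mu_n \circ \varphi_n = \varphi_n$ by Proposition~\ref{involution}), and since $\C_n$ is a vector space over $\K_2$, adding a fixed element of $\C_n$ maps $\C_n$ to itself. The map $\sigma$ is clearly its own inverse, so it is a bijection from $\C_n$ to $\C_n$.

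The heart of the argument is to show that $\sigma$ toggles the parity of the Hamming weight. The key computation is $w_H(\varphi_n(1))$. By Proposition~\ref{prop:muminterm} we have $\mu_n(1) = \mu_n(x^{0_n}) = M_{0_n} = \prod_{i=1}^n(1\oplus x_i)$, so
\[
\varphi_n(1) = 1 \oplus M_{0_n},
\]
which evaluates to $0$ at $0_n$ and $1$ elsewhere. Hence $w_H(\varphi_n(1)) = 2^n - 1$, which is odd. Combined with the elementary identity
\[
w_H(f \oplus g) = w_H(f) + w_H(g) - 2\,\#\{a : f(a) = g(a) = 1\},
\]
specialized to $g = \varphi_n(1)$, this gives $w_H(\sigma(h)) \equiv w_H(h) + 1 \pmod 2$, so $\sigma$ restricts to a bijection between the coincident functions of odd weight and those of even weight.

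The only minor subtlety — and thus the main thing to verify carefully — is that $w_H(\varphi_n(1))$ really is odd; once that is in hand, linearity of $\C_n$ together with the parity identity makes the remainder of the argument routine. (Note that the statement presupposes $n \ge 1$, so $2^n - 1 \ge 1$ is odd as needed.)
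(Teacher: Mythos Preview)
Your argument is correct. The paper follows the same template---exhibit a fixed odd-weight element of $\C_n$ and XOR by it---but makes a different (and slightly simpler) choice: it uses $\prod_{i=1}^n x_i$, which has weight $1$, rather than $\varphi_n(1)$ with weight $2^n-1$. The paper also replaces your general parity identity by the observation that a Boolean function has odd Hamming weight iff the top monomial $x_1\cdots x_n$ appears in its ANF; since $A(h)=T(h)$ for $h\in\C_n$, this immediately reads off the parity from the last bit $t_{2^n}$ and makes it transparent that XOR with $\prod_i x_i$ flips exactly that bit. Your route via the dual $\overline{h}^*$ ties the proposition to the structure just introduced, while the paper's route is marginally more direct; both are equally valid.
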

\begin{proof}
A Boolean function has Hamming weight odd if and only if $\prod_{i=1}^n x_i$ occurs in its ANF.
Let $h \in \C_n$. It follows $A(f) = T(f) = t_1 \ldots t_{2^n}$, 
and $w_H(h)$ odd if and only if  $t_{2^n} = 1$. Hence we a have partition of 
$\C_n = (\C_n^o,\C_n^e)$, where $\C_n^o$ (resp. $\C_n^e$) are coincident functions of Hamming weight odd (resp. even) 
and a one to one correspondence between $\C_n^o$ and $\C_n^e$ with $h' = h \oplus \prod_{i=1}^n x_i$, 
$\prod_{i=1}^n x_i$ is the coincident function which changes the parity of a coincident function.
\end{proof}
Thanks to the previous propositions, we propose to exhibit different
coincident functions of any number of variables.

\begin{proposition} Table of some coincident functions\\
The following words codes table (or equivalently ANF) of coincident functions:
\begin{enumerate}
\item $\mathbf 0_n \Longleftrightarrow 0\ldots0$;
\item $\prod_{i=1}^n x_i \Longleftrightarrow 0\ldots01$;
\item $1 \oplus \prod_{i=1}^n(1\oplus  x_i) \Longleftrightarrow 01\ldots1$;
\item $1 \oplus \prod_{i=1}^n(1\oplus  x_i) \oplus \prod_{i=1}^n x_i =
  \overline{\prod_{i=1}^n x_i}^* \Longleftrightarrow 01\ldots 10$;
\item $\forall u \in \K_2^{n},\ \bigoplus_{u \prec v} x^v$.
\end{enumerate}

\end{proposition}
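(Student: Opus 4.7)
The plan is to treat each of the five items separately, and for each to verify two things: (a) the listed Boolean function is coincident, and (b) the displayed binary word encodes it correctly. For any coincident $h$ we have $A(h)=T(h)$ (the ANF word and truth-table word coincide), so after establishing coincidence I only need to compute one of the two encodings.

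For items (1)--(3), coincidence is already granted by Proposition~\ref{prop:atomicCoincidentFct}. For item (4) I would simply note that $\C_n$ is a vector space, so the XOR of the coincident functions from (2) and (3) is coincident; the identification with $\overline{\prod_{i=1}^n x_i}^*$ follows from the definition of the dual together with
$$
\varphi_n(1) \;=\; 1 \oplus \mu_n(1) \;=\; 1 \oplus \prod_{i=1}^n(1\oplus x_i),
$$
where the second equality uses Proposition~\ref{prop:muminterm} applied to $1 = x^{0_n}$. Item (5) is exactly Proposition~\ref{prop:coincidentMonomial}, so no further work is required beyond a citation.

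Next I would read off the truth tables by evaluating pointwise. Item (1) is trivially the all-zero word. For item (2), $\prod_{i=1}^n a_i = 1$ iff $a = 1_n$, so $T$ carries a single $1$ at the position indexed by $1_n$ (the last position under the convention of the preamble), giving $0\ldots 0 1$. For item (3), $1 \oplus \prod_{i=1}^n(1\oplus a_i) = 0$ iff $a = 0_n$, yielding $0 1\ldots 1$. Item (4) is then obtained by XOR-ing the truth tables from (2) and (3), producing $0 1\ldots 1 0$. In each case, because the function is coincident, the same word encodes the ANF by Corollary~\ref{coro:weightCoincident} (or directly from $A(h)=T(h)$).

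The only real obstacle is bookkeeping: one must align the indexing convention for $T$ and $A$ (positions labelled via the $2$-adic expansion $\sum u_i 2^{i-1}$) with the qualitative descriptions ``first position'' $\leftrightarrow 0_n$ and ``last position'' $\leftrightarrow 1_n$. Once this convention is fixed, every line reduces to a one-step evaluation, and no deeper argument is needed.
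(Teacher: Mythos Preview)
Your proposal is correct. The paper in fact gives no proof for this proposition at all: it is stated immediately after Proposition~\ref{prop:atomicCoincidentFct}, the definition of the dual, and Proposition~\ref{prop:coincidentMonomial}, and is evidently meant to be read as a direct consequence of those results together with the vector-space structure of $\C_n$. Your write-up simply makes these implicit citations explicit and supplies the routine truth-table evaluations, which is exactly the intended argument.
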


\section{Constructions of coincident functions based on the Boolean lattice}
\label{sec:lattices}

\subsection{Boolean lattice properties of coincident functions}
The set of valuations in $n$ variables forms a Boolean lattice with the partial order
$\preceq$ already defined. The process of placing the minterms (or the monomials) 
which appears in a Boolean function over this lattice may be useful for some studies.
In \cite{guillot} PHD Thesis Boolean lattice is already considered for the study of Mobius transform. 

We first provide a new caracterisation of coincident functions.
Let $n \in \N$, we define the complete Boolean lattice $\mathcal L_n = (\K_2^n,\preceq)$ 
such that, for any $u = (u_1,\ldots, u_n)$ and $v = (v_1,\ldots, v_n) \in \K_2^n$, 
the supremum and the infimum are defined by
\[
\begin{array}{lcl}
sup(u, v) & = & u \vee v   =  (u_1 \vee v_1, \ldots, u_n \vee v_n);\\
inf(u,v) & = & u \wedge v  =  (u_1 \wedge v_1, \ldots, u_n \wedge v_n).
\end{array}
\]

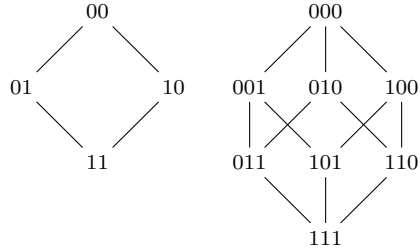
\begin{figure}[h]
  \centering
  \begin{tikzpicture}
    \node (B) at (0, 0) {$11$};
    \node (G) at (-1, 1) {$01$};
    \node (H) at (0, 2) {$00$};
    \node (D) at (1, 1) {$10$};
    
    \draw (B) -- (G);
    \draw (G) -- (H);
    \draw (H) -- (D);
    \draw (D) -- (B);

    \begin{scope}[xshift=3cm]
      \node (B) at (0, 0) {$101$};
      \node (G) at (-1, 1) {$001$};
      \node (H) at (0, 2) {$000$};
      \node (D) at (1, 1) {$100$};
      
      \draw (B) -- (G);
      \draw (G) -- (H);
      \draw (H) -- (D);
      \draw (D) -- (B);

      \begin{scope}[yshift=-1cm]
        \node (Bb) at (0, 0) {$111$};
        \node (Gb) at (-1, 1) {$011$};
        \node (Hb) at (0, 2) {$010$};
        \node (Db) at (1, 1) {$110$};
        
        \draw (Bb) -- (Gb);
        \draw (Gb) -- (Hb);
        \draw (Hb) -- (Db);
        \draw (Db) -- (Bb);

        \draw (Bb) -- (B);
        \draw (Gb) -- (G);
        \draw (Hb) -- (H);
        \draw (Db) -- (D);
      \end{scope}

    \end{scope}
  \end{tikzpicture}
  \caption{\label{fig:lattice_rep}
  The lattices $\mathcal L_2$ and $\mathcal L_3$.}
\end{figure}

\begin{figure}[h]
  \centering
  \begin{tikzpicture}
    \node (B) at (0, 0) {$1_k$};
    \node (G) at (-1, 1) {};
    \node (H) at (0, 2) {$0_k$};
    \node (D) at (1, 1) {};
    
    \draw (B) -- (G);
    \draw (G) -- (H);
    \draw (H) -- (D);
    \draw (D) -- (B);

    \draw (0, -1) node {The Boolean lattice $\mathcal{L}_k$};
    
    \begin{scope}[xshift=3cm]
      \node (B) at (0, 0) {$u \vee v$};
      \node (G) at (-1, 1) {$u$};
      \node (H) at (0, 2) {$u \wedge v$};
      \node (D) at (1, 1) {$v$};
      
      \draw (B) -- (G);
      \draw (G) -- (H);
      \draw (H) -- (D);
      \draw (D) -- (B);

      \draw (0, -1) node {The Boolean lattice $\mathcal{L}(u, v)$};
    
    \end{scope}
  \end{tikzpicture}
  \caption{\label{fig:lattice_iso}
    Isomorphism between $\mathcal{L}_k$ and $\mathcal{L}(u,v)$.}
\end{figure}

The lattice is complete in the sense that any sublattice  $(\mathcal U, \preceq)$ has 
the supremum  $\bigvee_{u \in \mathcal U} u$ and the infimum $\bigwedge_{u \in \mathcal U} u$.
$\mathcal L_n$ is also called the $n$-cube (\cite{palmer1992}). 
Let $u$ and $v \in \K_2^n$ and $k = d_h(u \vee v, u \wedge v)$. 
Then \mbox{$\mathcal{L}(u,v) = \{ w \in \K_2^n | u \preceq w \preceq v\}$} is isomorph to 
$\mathcal L_k$ (we remove the identical components in $u$ and $v$),
see Fig.~\ref{fig:lattice_iso}.

\begin{definition}
Let $f \in \F_n$, we define $\mathcal L_n(f) = (\mathcal U, \preceq)$ 
the sublattice of $\mathcal L_n$ by $\mathcal U = support(f) =  \{ u \in \K_2^n | f(u) = 1\}$. 
\end{definition}
Then $f$ can be viewed as a $2$ coloring of the $\mathcal L_n$.

Let $a = (a_1,\ldots,a_n)$ and $f \in \F_n$,
${\displaystyle f(a) = \bigoplus_{u \in \K_2^n} \mu_n(f) (u) a^u}$.
Since $a^u = 0$ for any $u$ such that $u \not\preceq a$ and $a^u = 1$, it follows
\begin{equation} \label{mobiusLattice}
f(a) = \bigoplus_{u \preceq a} \mu_n(f) (u).
\end{equation}
\begin{proposition} \label{coincidentLattice}
  \cite[Theorem 23]{coincidentFunc}
  Let $f \in \F_n$ be a boolean function with $n$ variables. Then
$f$ is coincident if and only if \begin{equation} \label{prop}
\bigoplus_{v \prec u} f(v) = 0, \mbox{ for any } u \in \K_2^n.
\end{equation}
In other words, for each $u \in \K_2^n$, we have an even number of $v \prec u$ such that $f(v) = 1$.
\end{proposition}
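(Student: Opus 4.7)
The plan is to use Equation~(\ref{mobiusLattice}), which expresses $f(a)$ as the xor of $\mu_n(f)(u)$ over $u \preceq a$. Separating the term $u = a$ from the rest gives the key identity
\[
f(a) \oplus \mu_n(f)(a) = \bigoplus_{u \prec a} \mu_n(f)(u), \quad \text{for every } a \in \K_2^n,
\]
and both directions of the equivalence will be read off from this.

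For the forward direction, assume $f$ is coincident, so $\mu_n(f)(u) = f(u)$ for every $u$. Substituting into the identity above, the left-hand side vanishes for every $a$, which immediately yields $\bigoplus_{u \prec a} f(u) = 0$, i.e. property~(\ref{prop}).

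For the converse, assume property~(\ref{prop}) holds. I would prove $f(u) = \mu_n(f)(u)$ by induction on the Hamming weight $w_H(u)$. The base case $u = 0_n$ is already covered by Proposition~\ref{prop:firstMobius}, which gives $\mu_n(f)(0_n) = f(0_n)$ directly (and note that the sum in (\ref{prop}) is empty for $u = 0_n$, consistent with the hypothesis). For the inductive step, fix $a \in \K_2^n$ with $w_H(a) \ge 1$ and assume $f(u) = \mu_n(f)(u)$ for every $u \prec a$, which is legitimate because any such $u$ has strictly smaller Hamming weight. Then the identity above rewrites as
\[
f(a) \oplus \mu_n(f)(a) = \bigoplus_{u \prec a} f(u) = 0
\]
by the hypothesis, giving $f(a) = \mu_n(f)(a)$ and completing the induction.

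There is no real obstacle here: the whole argument hinges on the already-proved lattice expression~(\ref{mobiusLattice}) and the involutive nature of the boundary condition. The only subtlety is to be careful that the induction on Hamming weight is well-founded at the base $u = 0_n$, which is handled by Proposition~\ref{prop:firstMobius}, and to remember that, because we are in characteristic $2$, isolating the top term $u = a$ from the xor-sum costs nothing.
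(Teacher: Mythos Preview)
Your proof is correct and follows essentially the same route as the paper: both use the lattice identity~(\ref{mobiusLattice}) and isolate the top term $u=a$ from the sum. The paper's argument is terser and only spells out the forward implication; your induction on Hamming weight for the converse is a natural and welcome completion of what the paper leaves implicit.
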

\begin{proof}
Let $f$ be a coincident function, then
$f(u) = \bigoplus_{v \preceq u} f(v) = f(u) \oplus \bigoplus_{v \prec u} f(v)$.
\end{proof}
\begin{proposition} \label{basisCoincident}
\label{coincidentBasis}
Recall that $h_a =x^a \oplus M_a$, for any $a \in \K_2^n$. Then
$(h_a)_{a \in \K_2^n,\ a_n = 0} $ forms a basis of $\mathcal C_n$.
\end{proposition}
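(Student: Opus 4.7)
The plan is to exploit the bijection $\varphi_n : \F_{n-1} \to \C_n$ provided by Proposition~\ref{prop:equivCoincident}, which sends $g$ to $g \oplus \mu_n(g)$. Since $\{a \in \K_2^n : a_n = 0\}$ has exactly $2^{n-1}$ elements and $\C_n$ has dimension $2^{n-1}$ as a $\K_2$-vector space, it suffices to show that the family $(h_a)_{a_n=0}$ is the image under $\varphi_n$ of a basis of $\F_{n-1}$.

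First I would observe that $\varphi_n$ is \emph{linear}, which follows at once from the linearity of $\mu_n$ (Proposition~\ref{prop:firstMobius}). Combined with the bijectivity statement in Proposition~\ref{prop:equivCoincident}, this upgrades $\varphi_n$ into a linear isomorphism of $\K_2$-vector spaces of equal dimension $2^{n-1}$.

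Next, I would take as basis of $\F_{n-1}$ the monomials $\{x^a : a \in \K_2^n,\ a_n = 0\}$, viewed under the canonical embedding $\F_{n-1} \hookrightarrow \F_n$ (cf.\ Remark~\ref{rem:FnCFnk}); these are exactly the $2^{n-1}$ monomials in $x_1,\dots,x_{n-1}$, and they form a basis of $\F_{n-1}$ because every element of $\F_{n-1}$ has a unique ANF. The key computation is then immediate: for $a$ with $a_n = 0$, Proposition~\ref{prop:muminterm} yields $\mu_n(x^a) = M_a$, whence $\varphi_n(x^a) = x^a \oplus M_a = h_a$. A linear isomorphism carries a basis to a basis, so $(h_a)_{a_n=0}$ is a basis of $\C_n$.

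The only subtlety worth flagging is the bookkeeping around the embedding $\F_{n-1} \hookrightarrow \F_n$: the monomial $x^a$ must be handled simultaneously as an element of $\F_{n-1}$ (to which $\varphi_n$ is applied) and of $\F_n$ (to which Proposition~\ref{prop:muminterm} applies). Once this identification is made explicit, no real obstacle remains—the argument is essentially transport of the standard monomial basis through the isomorphism $\varphi_n$.
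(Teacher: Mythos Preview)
Your proposal is correct and is essentially the paper's own argument: both use the unique representation $h = g \oplus \mu_n(g)$ with $g \in \F_{n-1}$, expand $g$ in the monomial basis $\{x^a : a_n = 0\}$, and observe that $\varphi_n(x^a) = x^a \oplus M_a = h_a$. Your framing is slightly more explicit (naming linearity and invoking ``linear isomorphism carries a basis to a basis''), whereas the paper leaves the linear-independence half implicit via the uniqueness of $g$ and its ANF, but the content is the same.
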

\begin{proof} 
Let $h \in \mathcal C_n$ and $g \in \mathcal F_{n-1}$ such that $h = g \oplus \mu_n(g)$. 
Let $\mathcal U \subset \K_2^{n}$ such that $g = \bigoplus_{a \in \mathcal U} x^a$. 
Then $a_n = 0$ for any $a \in \mathcal U$ and
\[
\begin{array}{lcl}
h & = & \bigoplus_{a \in \mathcal U} x^a \oplus \mu_n(  \bigoplus_{a \in \mathcal U} x^a )\\
& = &  \bigoplus_{a \in \mathcal U} x^a \oplus \bigoplus_{a \in \mathcal U} M_a\\
& = &  \bigoplus_{a \in \mathcal U} h_a.
\end{array}
\]
\end{proof}
The following proposition gives a link between a Boolean function $f$ 
and the corresponding coincident function $\varphi_n(f) = f \oplus \mu_n(f)$.
\begin{proposition}\label{prop:definitionCoincidentLattice}
Let $f$ a Boolean function, $\mathcal U = support(f)$ and $h =
\varphi_n(f)$, then $h(a) = 1$ if and only if there is an odd number
of $u \in \mathcal U$ such that $u \prec a$.
\end{proposition}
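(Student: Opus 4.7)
The plan is to unfold the definition $h = \varphi_n(f) = f \oplus \mu_n(f)$ pointwise and then rewrite $\mu_n(f)(a)$ using the lattice formula~(\ref{mobiusLattice}) together with the involutivity of $\mu_n$ established in Proposition~\ref{involution}.

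First I would observe that equation~(\ref{mobiusLattice}) applied to the function $\mu_n(f)$ in place of $f$ yields
\[
\mu_n(f)(a) \;=\; \bigoplus_{u \preceq a} \mu_n(\mu_n(f))(u) \;=\; \bigoplus_{u \preceq a} f(u),
\]
where the second equality uses $\mu_n^2 = \mathrm{id}$ from Proposition~\ref{involution}. This is the key rewriting step: it turns the Mobius transform, which is a priori defined via monomials, into a sum over the lattice predecessors of $a$ of the values of $f$ itself.

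Next I would evaluate $h$ at an arbitrary $a \in \K_2^n$:
\[
h(a) \;=\; f(a) \oplus \mu_n(f)(a) \;=\; f(a) \oplus \bigoplus_{u \preceq a} f(u) \;=\; \bigoplus_{u \prec a} f(u),
\]
since the term $u = a$ cancels with $f(a)$. Because $f(u) = 1$ exactly when $u \in \mathcal U$, the sum $\bigoplus_{u \prec a} f(u)$ equals the parity of the cardinality of $\{u \in \mathcal U : u \prec a\}$. Thus $h(a) = 1$ iff this cardinality is odd, which is the desired statement.

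There is no real obstacle here; the only subtle point is making sure to apply formula~(\ref{mobiusLattice}) to $\mu_n(f)$ rather than to $f$ and to use the involution property to collapse $\mu_n(\mu_n(f))$ back to $f$. Once that rewriting is in place the rest is a one-line cancellation of the $u = a$ term in the exclusive-or sum.
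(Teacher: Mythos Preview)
Your proof is correct but proceeds differently from the paper. The paper decomposes $h$ via the canonical coincident functions $h_u = x^u \oplus M_u$: writing $f = \bigoplus_{u \in \mathcal U} M_u$ gives $h = \varphi_n(f) = \bigoplus_{u \in \mathcal U} h_u$, and then the explicit minterm expansion $h_u = \bigoplus_{u \prec v} M_v$ from Proposition~\ref{prop:coincidentMonomial} lets one read off $h(a)$ as the parity of $\{u \in \mathcal U : u \prec a\}$. Your route is more direct: you apply the lattice formula~(\ref{mobiusLattice}) to $\mu_n(f)$ and use involutivity (Proposition~\ref{involution}) to obtain $\mu_n(f)(a) = \bigoplus_{u \preceq a} f(u)$, after which the $u=a$ term cancels with $f(a)$. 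Your argument is arguably cleaner, needing only the raw definitions; the paper's version has the advantage of highlighting the basis $(h_u)$, which is exploited elsewhere (e.g.\ Proposition~\ref{basisCoincident}).
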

\begin{proof}
Let $u \in \mathcal U$, $h_u = x^u \oplus M_u = \bigoplus_{u \prec v} M_v$.
Hence $h =  \bigoplus_{u \in \mathcal U} (\bigoplus_{u \prec v} M_v)$.
Let $\mathcal U_{\prec a} = \{u \in \mathcal U, u \prec a\}$, for any  $a \in \K_2^n$. 
Since $h(a) = \bigoplus_{u \in \mathcal U_{\prec a}} 1$, 
$h(a) = 1$ if and only if the cardinality of $\mathcal U_{\prec a}$ is odd.
 \end{proof}

\begin{proposition} \label{prop:coincidentDegree1}
The Boolean function with $n$ variables $x_1 \oplus \ldots \oplus x_n$ 
is coincident.
\end{proposition}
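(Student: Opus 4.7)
I would exploit the linearity of $\mu_n$ to reduce the claim to the Mobius transform of each single variable. By Proposition~\ref{prop:firstMobius},
\[
\mu_n(x_1\oplus\cdots\oplus x_n) \;=\; \bigoplus_{i=1}^n \mu_n(x_i),
\]
and since $x_i$ is precisely the monomial $x^{e_i}$ for $e_i$ the $i$th canonical basis vector of $\K_2^n$, Proposition~\ref{prop:muminterm} yields $\mu_n(x_i) = M_{e_i}$. The proposition therefore reduces to the identity
\[
\bigoplus_{i=1}^n M_{e_i} \;=\; x_1\oplus\cdots\oplus x_n.
\]

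To establish this identity I would choose between two equivalent routes. The algebraic one is to expand each minterm as $M_{e_i} = x_i\prod_{j\neq i}(1\oplus x_j)$, distribute the product, and read off the coefficient of a generic monomial $x^S$ in the resulting XOR: the contribution of the index $i$ to $x^S$ is $1$ exactly when $i\in S$, so the total coefficient of $x^S$ equals $|S|\bmod 2$, which one must compare with the coefficients of $x_1\oplus\cdots\oplus x_n$. The combinatorial route is to invoke the lattice criterion of Proposition~\ref{coincidentLattice}: since $f(v) = v_1\oplus\cdots\oplus v_n = |v|\bmod 2$, what needs to be verified is $\bigoplus_{v\prec u} f(v) = 0$ for every $u\in\K_2^n$, which becomes a parity statement about the binomial sum $\sum_{0\le j<|u|,\;j\text{ odd}}\binom{|u|}{j}$.

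The main obstacle is exactly this final cancellation step: one has to check that the higher-degree monomials introduced by expanding the minterms (or, equivalently by duality, the odd-Hamming-weight vectors lying strictly below $u$) cancel in the XOR. Everything preceding that step is a mechanical application of the linearity of $\mu_n$ together with the identity $\mu_n(x^u) = M_u$, so the combinatorial or binomial-parity check is where the real content of the argument sits.
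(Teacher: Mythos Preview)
Your reduction is clean and correct: by linearity and $\mu_n(x^{e_i})=M_{e_i}$ the question does collapse to whether $\bigoplus_{i} M_{e_i}=\bigoplus_{i} x_i$, and you even computed the coefficient of a generic monomial $x^S$ on the left-hand side as $|S|\bmod 2$. The trouble is the comparison you stopped short of: on the right-hand side the coefficient of $x^S$ is $1$ only when $|S|=1$, so the two sides disagree at every monomial of odd degree at least~$3$. Equivalently, on your lattice route, taking $u$ of weight~$3$ gives $\sum_{j<3,\ j\ \mathrm{odd}}\binom{3}{j}=3$, which is odd. Hence the ``cancellation'' you flagged as the crux does not occur, and the proposition is in fact false for every $n\ge 3$: already for $n=3$ one has $(x_1\oplus x_2\oplus x_3)(1,1,1)=1$, while $\mu_3(x_1\oplus x_2\oplus x_3)=M_{e_1}\oplus M_{e_2}\oplus M_{e_3}$ vanishes at $(1,1,1)$.

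The paper argues differently---it applies Proposition~\ref{prop:definitionCoincidentLattice} with $f=\prod_{i}(1\oplus x_i)=M_{0_n}$ and reads off $h=\varphi_n(f)$---but it only checks the values of $h$ at points of weight~$0$ and~$1$, where $h$ happens to agree with $x_1\oplus\cdots\oplus x_n$. At any $a$ of weight at least~$2$ that same proposition gives $h(a)=1$ (the unique support element $0_n$ lies strictly below $a$), so in fact $h=1\oplus\prod_i(1\oplus x_i)$, the coincident function already recorded in Proposition~\ref{prop:atomicCoincidentFct}, not $x_1\oplus\cdots\oplus x_n$. Your approach has the merit that carrying the final comparison through immediately exposes the error.
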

\begin{proof}
Let $f = \prod_{i=1}^n (1 \oplus x_i)$. $f(u) = 1$ if and only if $u = 0_n$.
Let $0_n = (0,\ldots,0)$ and $u^j = (u_1^j,\ldots, u_n^j)$ for any $j \in \{1,\ldots, n\}$, where 
$u_i^j = 1$ if and only if $i = j$; hence $x^{u^j} = x_i$. We check Proposition~\ref{prop:definitionCoincidentLattice}. 
There is no $u \prec 0_n$ and for any $u_i^j$, the unique $u \prec x^{u^j}$ is $0_n$ and $f(0_n) = 1$. 
\end{proof}

\begin{proposition}\label{prop:monomialDegree1}
Let $h = \varphi_n(f) \in \mathcal C_n$. Either the $ANF(h)$ contains all the terms $x_1, \ldots, x_n$ 
either it contains none of these terms. 
\end{proposition}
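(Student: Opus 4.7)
The plan is to reduce the question about ANF coefficients to an evaluation question, by exploiting the coincidence property itself. Since $h \in \mathcal C_n$, we have $\mu_n(h) = h$, so the ANF of $h$ reads
\[
h \;=\; \bigoplus_{u \in \K_2^n} h(u)\, x^u.
\]
In particular, the coefficient of the linear monomial $x_j$ in the ANF of $h$ is precisely $h(e_j)$, where $e_j$ denotes the valuation with a single $1$ in position $j$ and $0$ elsewhere. So the claim reduces to showing that $h(e_1) = h(e_2) = \cdots = h(e_n)$.

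I would then invoke Proposition~\ref{prop:definitionCoincidentLattice} applied to $a = e_j$. With $\mathcal U = \mathrm{support}(f)$, it asserts that $h(e_j) = 1$ if and only if the cardinality of $\{u \in \mathcal U : u \prec e_j\}$ is odd. But the only valuation strictly below $e_j$ in the partial order $\preceq$ is $0_n$, so this set is either $\emptyset$ or $\{0_n\}$ according as $f(0_n) = 0$ or $f(0_n) = 1$. Hence $h(e_j) = f(0_n)$, a quantity manifestly independent of $j$.

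Consequently $h(e_1) = \cdots = h(e_n) = f(0_n)$, and by the opening reduction this common value is the ANF coefficient of each $x_j$ in $h$. Thus the linear terms $x_1, \ldots, x_n$ all appear in $\mathrm{ANF}(h)$ precisely when $f(0_n) = 1$, and none of them appears when $f(0_n) = 0$. There is no real obstacle here: the only thing to verify carefully is the trivial lattice observation that $\{u : u \prec e_j\} = \{0_n\}$, which is what forces the coefficient to be the same for every index $j$.
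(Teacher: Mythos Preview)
Your proof is correct and follows essentially the same approach as the paper: both reduce the question to Proposition~\ref{prop:definitionCoincidentLattice} applied at the valuations $e_j$ (the paper's $u^j$), observe that the only element strictly below $e_j$ is $0_n$, and conclude that the presence of each $x_j$ in the ANF is governed by $f(0_n)$. Your version is in fact more explicit than the paper's, since you spell out the preliminary reduction from ANF coefficients of $h$ to the values $h(e_j)$ via the coincidence identity $\mu_n(h)=h$.
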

\begin{proof}
The proof is similar that Proposition~\ref{prop:coincidentDegree1}.
If $f(0_n) = 1$ (resp. $0$), then all the monomials $x_i$ has an odd (resp. even) number of $u \prec u^j$  
such that $f(u) = 1$.
\end{proof}

\subsection{Monotonic coincident functions}
Monotonic Boolean functions are commonly involved as instances of
constraint satisfaction problems like the NP-complete SAT or $3$-SAT
problems, \cite{creignou2001}. Indeed these problems are monotonic in
the sense where an instance of a set of constraints $\mathcal C$ is
also an instance of any subset of $\mathcal C$.
 
We exhibit in this part $2^{n+1}$ monotonic coincident functions. 
We also provide a general characterization of the class of monotonic
coincident functions in order to build more such functions.

\begin{definition}
A Boolean function $f$ is monotonic if for any $u \in \K_2^n$ such 
that $f(u) = 1$ we have $f(v)= 1$ for any $v \prec u$. 
\end{definition}
As far as we know, the number of such functions on $n$ variables is
known as the Dedekind number of $n$ and the exact values are known
only for $n \le 8$.

\begin{proposition}
A Boolean function $f$ is monotonic if there exists $Inf(f) \subset \K_2^n$ 
which satisfies $f(v) = 1$ if and only if there exists $u \in Inf(f)$ with $u \preceq v$.
\end{proposition}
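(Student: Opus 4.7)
The plan is to prove the stated implication directly from the transitivity of $\preceq$. The statement says: if a set $Inf(f)\subset\K_2^n$ exists with $f(v)=1 \iff \exists u\in Inf(f),\ u\preceq v$, then $f$ is monotonic. Since the converse is standard and adds little work (take $Inf(f)$ to be the minimal elements of the support of $f$), I would sketch both directions so that the statement can be read as a genuine characterization.

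For the direction explicit in the proposition, I would argue as follows. Suppose $Inf(f)$ exists as above, pick any $w\in\K_2^n$ with $f(w)=1$, and let $v$ be comparable to $w$ in the direction dictated by the definition of monotonic. By hypothesis there is some $u\in Inf(f)$ with $u\preceq w$; transitivity of $\preceq$ gives $u\preceq v$, and a second application of the hypothesis yields $f(v)=1$. This is exactly what monotonicity of $f$ requires.

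For the converse (useful context, and trivially needed if one wants an if-and-only-if), I would define
\[
Inf(f)\;=\;\{u\in\K_2^n\;:\;f(u)=1\text{ and }f(w)=0\text{ for every }w\prec u\},
\]
i.e.\ the minimal $1$-valuations of $f$. For any $v$ with $f(v)=1$, the set $\{w\preceq v : f(w)=1\}$ is a non-empty subset of the finite poset $\K_2^n$ and therefore admits a minimal element $u_0$; monotonicity of $f$ together with the minimality of $u_0$ in this set forces $u_0\in Inf(f)$, and $u_0\preceq v$ gives the required witness. The reverse inclusion is immediate from monotonicity.

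There is essentially no technical obstacle: the argument reduces to transitivity of $\preceq$ plus the elementary fact that every non-empty subset of a finite poset has a minimal element. The only point that needs care is to align the direction of the order in the proposition with the convention of monotonicity used earlier in the text, so that ``monotonic'' corresponds to the support of $f$ being upward-closed under $\preceq$ — the only reading compatible with the characterization via $Inf(f)$ and the condition $u\preceq v$.
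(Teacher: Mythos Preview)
Your proposal is correct and follows essentially the same approach as the paper: define $Inf(f)$ as the set of minimal elements of the support of $f$ for one direction, and use transitivity of $\preceq$ for the other. Your version is in fact slightly more careful than the paper's --- you avoid the paper's (unjustified, and in general false) claim that the witness $u\in Inf(f)$ with $u\prec v$ is \emph{unique}, and you rightly flag the orientation mismatch between the earlier definition of ``monotonic'' and the upward-closed reading that the proposition actually requires.
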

\begin{proof}
We define $Inf(f) = \{ u \in \K_2^n \; | \; f(u) = 1 \mbox{ and } f(v) = 0 \mbox{ for any } v \prec u \}$.
Assume that $f$ is monotonic. Let $v \notin Inf(f)$ and $f(v)= 1$. Then there exists 
a unique $u \in Inf(f)$ such that $u \prec v$. 
Conversely assume that $f(v) = 1$ for any $v \preceq u$, for some $u \in Inf(f)$, 
then $f$ is clearly monotonic.
\end{proof}

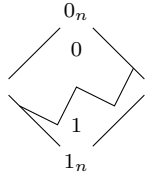
\begin{figure}[h]
  \centering
  \begin{tikzpicture}
    \node (B) at (0, 0) {$1_n$};
    \node (G) at (-1, 1) {};
    \node (H) at (0, 2) {$0_n$};
    \node (D) at (1, 1) {};
    
    \draw (B) -- (G);
    \draw (G) -- (H);
    \draw (H) -- (D);
    \draw (D) -- (B);
    
    \draw (-0.75, 0.75) -- (-0.25, 0.5) -- (0, 1) -- (0.5, 0.75) --
    (0.75, 1.25);

    \node at (0,1.5) {$0$};
    \node at (0,0.5) {$1$};

    \end{tikzpicture}
  \caption{\label{fig:lattice_mono}
  Lattice for monotonic boolean functions.}
\end{figure}

Since $f = \bigoplus_{u \in \K_2^n} \beta_u M_u$, $f$ is monotonic 
when for any $u$ such that $\beta_u = 1$, $\beta_v = 1$ for any 
$v \prec u$. 

For example, $x^u = \bigoplus_{u \preceq v} M_v$ are monotonic functions 
and $h_u = \bigoplus_{u \prec v} M_v$ are monotonic coincident functions.

\begin{proposition}
Let $u \in \K_2^n$ and $\overline{u} = (u_1\oplus 1,\ldots, u_n \oplus 1)$.
Then $f_u = h_u \oplus h_{\overline{u}} \oplus x_1\ldots x_n$ is a monotonic coincident function.
\end{proposition}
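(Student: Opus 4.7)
The plan is to verify the two assertions of the statement, coincidence and monotonicity, separately. Coincidence is immediate from the vector-space structure of $\C_n$: by Proposition~\ref{prop:coincidentMonomial} both $h_u$ and $h_{\overline{u}}$ lie in $\C_n$, and $\prod_{i=1}^n x_i = x^{\mathbf 1}$ lies in $\C_n$ by Proposition~\ref{prop:atomicCoincidentFct}; hence their XOR $f_u$ is coincident.

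For monotonicity I would describe $support(f_u)$ explicitly. Using the minterm expansions $h_a = \bigoplus_{a \prec v} M_v$ and $\prod_{i=1}^n x_i = M_{\mathbf 1}$, the value of $f_u$ at a point $w \in \K_2^n$ is the XOR of the three indicators $[u \prec w]$, $[\overline{u} \prec w]$ and $[w = \mathbf 1]$. The key combinatorial observation is bitwise: if $u \preceq w$ and $\overline{u} \preceq w$ both hold, then $w_i \ge \max(u_i, 1-u_i) = 1$ for every $i$, forcing $w = \mathbf 1$. So, in the generic case $u \notin \{0_n, \mathbf 1\}$, the two principal up-cones $\{w : u \prec w\}$ and $\{w : \overline{u} \prec w\}$ meet only at $\mathbf 1$; at that point the XOR of the first two indicators vanishes, and the third summand $\prod_{i=1}^n x_i$ is exactly the correction needed to restore $\mathbf 1$ to the support. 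It follows that
\[
support(f_u) \;=\; \{w : u \prec w\}\;\cup\;\{w : \overline{u} \prec w\},
\]
a union of two principal up-sets of $\mathcal L_n$.

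Monotonicity will then drop out at once: if $v \in support(f_u)$ and $v \prec w$, then either $u \prec v \prec w$ or $\overline{u} \prec v \prec w$, so $w$ still belongs to one of the two cones, hence to $support(f_u)$. This is the ``closed upward'' property of the support, matching the sense in which $h_u$ is called monotonic in the remark preceding the statement.

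The main obstacle I expect is the three-way XOR bookkeeping: without the bitwise-maximum observation, the extra summand $\prod_{i=1}^n x_i$ looks ad~hoc; once one sees that the two cones meet precisely at $\mathbf 1$, this term is explained as the unique correction that turns the symmetric difference of the two cones into their honest union, and hence into an up-set.
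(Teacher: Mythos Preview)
Your argument is correct and follows essentially the same route as the paper: both hinge on the bitwise observation that $u\prec w$ and $\overline{u}\prec w$ together force $w=\mathbf 1$, so the two up-cones overlap only at the top, and the extra term $x_1\cdots x_n$ restores $\mathbf 1$ to the support, yielding $support(f_u)=\{w:u\prec w\}\cup\{w:\overline u\prec w\}$. Your write-up is in fact more explicit than the paper's (you spell out the coincidence part and the support computation), and your caveat about the edge cases $u\in\{0_n,\mathbf 1\}$ is a genuine issue that the paper's statement and proof both overlook: in those cases one of the cones is empty, the correction term removes $\mathbf 1$ instead of adding it, and $f_u$ fails to be monotonic for $n\ge 2$.
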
 
\begin{proof}
Let $v \in \K_2^n$. If $h_u(v) = h_{\overline{u}}(v) = 1$ then $u \prec v$ 
and $\overline{u} \prec v$, hence \mbox{$v = (1,\ldots, 1)$} and $f_u(v) = 1$.
Assume that $v \not= (1,\ldots, 1)$, if $h_u(v) = 1$ or $h_{\overline{u}}(v) = 1$ then 
\mbox{$f_u(w) = 1$}, for any $v \preceq  w$.  
\end{proof}
We have yet exhibit $2^{n+1}$ monotonic coincident functions but other constructions over the Boolean lattice could be performed.

\subsection{Construction of the class of coincident symmetric Boolean functions}
Symmetric Boolean functions have good implementation since the number
of required gates is linear in the number of
variables. In \cite{canteaut2005}, the authors proposed an extensive study combined
with cryptographic parameters like degree, correlation-immunity,
non-linearity. We present here an algorithm to generate all the 
$\displaystyle{2^{\lfloor \frac{n}{2}\rfloor+1}}$ coincident symmetric functions.

\begin{definition}[Symmetric (Boolean) functions]
Let $k \le n$, $\Sigma_k^n$ will denote the Boolean function with $n$ variables which is the sum 
of monomials of degree $k$. 
A symmetric function $f$ of $\mathcal F_n$ is defined by 
\[
f = \sum_{k=0}^n \lambda_k \Sigma_k^n,
\]
where $(\lambda_0, \ldots, \lambda_n) \in F_2^n$ satisfies
\[
\lambda_i = 
\left\{
\begin{array}{lcl}
 1, & ~ &\mbox{ if all the monomials of degree } i \mbox{ occur in the ANF of f;}\\ 
 0, & ~ & \mbox{ otherwise.}
\end{array}
\right.
\]
We will note $\lambda(f) = (\lambda_0, \ldots, \lambda_n)$.

Since a symmetric function is invariant by permutation of the variables, we have another definition 
of a symmetric function over the valuations.
Let $a = (a_1,\ldots a_n) \in \F_2^n$,
\[
f (a_1,\ldots, a_n) = f (a_{\sigma (1)},\ldots, a_{\sigma(n)}), 
\]
for any permutation $\sigma$ of $\{ 1, \ldots n \}$.

Then the value of $f(a)$ only depends of the weight of the valuation $a$.
Let $v(f) = (v_0, \ldots, v_n)$, 
where $v_k = f (a)$, for any 
$a \in \F_2^n$ of weight $k$.
\end{definition}
Since a symmetric Boolean function may be defined by fixing \mbox{$\lambda(f) = (\lambda_0, \ldots, \lambda_n)$}
or $v(f_n) = (v_0, \ldots, v_n)$, we have $2^{n+1}$ symmetric Boolean functions.
Furthermore, $d(f)$ is the largest $i$ such that $\lambda_i = 1$ 
and $w_H(f)$ is $\sum_{i = 0}^n \lambda_i \; { n \choose i}$. 

Let $f = f_R^0 \oplus x_n f_R^1$ be a symmetric function with $n$ variables, 
where $f_R^0$ and $f_R^1$ are Boolean functions with $n-1$ variables.

For $i \in \{1,\ldots, n-1 \}$, the ANF of $f$ contains all the monomials of degree $i$ if and only if 
the ANF of $f_R^0$ contains all the monomials of degree $i$. Thus $f_R^0$ is a symmetric function.
Furthermore for $i \in \{0,\ldots, n-1 \}$ the ANF of $f$ contains all the monomials of degree $i+1$ if and only if 
the ANF of $f_R^1$ contains all the monomials of degree $i$. Hence $f_R^1$ is also a symmetric function.

We define 
\[
\begin{array}{lcl}
\lambda(f)   & = &  (\lambda_0, \ldots, \lambda_n);\\
\lambda(f_R^0) & = & (\lambda_0^0, \ldots, \lambda_{n-1}^0);\\
\lambda(f_R^1) & = & (\lambda_0^1, \ldots, \lambda_{n-1}^1).\\
\end{array}
\] 
We have $\lambda_i = \lambda_i^0 = \lambda_{i-1}^1$, for any $i \in \{ 1,\ldots, n-1\}$, 
$\lambda_n = \lambda_{n-1}^1$ and $\lambda_n= 1$ if and only if the ANF of $f$ 
contains the monomial $x_1\ldots x_n$.

\begin{definition}[Luca's coefficients]
Let $k$, $j \in \N$ and $p(k,j) = {\displaystyle { k \choose j} } \mbox{ mod } 2$.
\end{definition}
\begin{proposition}
Let $f$ be a symmetric Boolean function with $n$ variables, 
\mbox{$v(f) = ( v_0, \ldots, v_n)$} and 
$\lambda(f) = ( \lambda_0,\ldots, \lambda_n )$.
The following system holds
\[
v_j = \sum_{k=0}^j \lambda_k \; p(k,j). 
\]
\end{proposition}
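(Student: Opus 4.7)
The plan is to compute $f(a)$ directly for an arbitrary valuation $a\in\K_2^n$ of Hamming weight $j$; since $f$ is symmetric, $v_j$ is well-defined and equals $f(a)$ for every such $a$ by the definition of $v(f)$.

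First, I would expand via the symmetric decomposition
\[
f(a) \;=\; \bigoplus_{k=0}^n \lambda_k \, \Sigma_k^n(a),
\]
and then evaluate $\Sigma_k^n(a)$ from its definition as the sum of all monomials of degree $k$. A monomial $x^u$ with $w_H(u)=k$ evaluates to $1$ at $a$ if and only if $\mathrm{support}(u)\subseteq \mathrm{support}(a)$; since $|\mathrm{support}(a)|=j$, the number of such $u$ is exactly $\binom{j}{k}$. Hence $\Sigma_k^n(a)\equiv \binom{j}{k}\pmod{2}$.

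Combining the two observations yields
\[
v_j \;=\; f(a) \;=\; \bigoplus_{k=0}^n \lambda_k \binom{j}{k} \bmod 2,
\]
and truncating the sum at $k=j$ (since $\binom{j}{k}=0$ whenever $k>j$) produces the stated identity, with $p(k,j)$ identified through the Lucas' notation. No step is hard: the proof is a direct count of which degree-$k$ monomials are supported by $a$, combined with $\K_2$-linearity. The only point that requires care is making the index convention for $p(k,j)$ match the combinatorial quantity $\binom{j}{k}\bmod 2$ that arises naturally from the counting; once that matching is fixed, the identity falls out immediately.
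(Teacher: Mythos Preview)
Your proof is correct and follows essentially the same route as the paper: both fix $a$ of weight $j$, count the monomials of degree $k$ supported by $a$ as $\binom{j}{k}$, and sum over $k$. The paper phrases the evaluation via the lattice formula $f(a)=\bigoplus_{u\preceq a}\mu_n(f)(u)$ rather than expanding $\Sigma_k^n(a)$ directly, but the underlying counting argument is identical; your remark about matching the index convention in $p(k,j)$ with $\binom{j}{k}\bmod 2$ is well taken, since the paper itself is inconsistent on this point.
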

\begin{proof}
Clearly $v_0 = \lambda_0$. 
Let $j \in \{ 0, \ldots, n\}$ and $a = (a_1,\ldots, a_n) \in \F_2^n$ such that $w_H(a) = j$. 
By (\ref{mobiusLattice}),
\[
f(a) = \bigoplus_{u \preceq a} \mu_n(f) (u) a^u.
\]
For $k \le j$, we have ${ k \choose j}$ $b \in \K_2^n$ such that 
$w_H(b) = k$ and $b \preceq a$. We have to deal with two cases
\begin{enumerate}
\item $\lambda_k = 1$ and $\mu_n(f) (b) = 1$ for any $b \preceq a$ such that $w_H(b) = k$.
\item $\lambda_k = 0$ and $\mu_n(f) (b) = 0$ for any $b \preceq a$ such that $w_H(b) = k$.
\end{enumerate}
It follows
\[
f(a) =  \bigoplus_{k=0}^j \lambda_k { k \choose j}.
\]
\end{proof}

\begin{notation}
Let $k = \sum_{i \in \N} k_i \; 2^i$ and $j = \sum_{i \in \N} j_i \; 2^i$ the $2$-adic representation 
of $k$ and respectively $j$.
We will write $j \preceq k$ when  $j_i = 1$ implies $k_i = 1$, for any $i \in \N$.
\end{notation}
By Lucas' Theorem 
$p(k,j) = 1 \mbox{ if and only if } j \preceq k$ and by definition of Mobius transform, 
\[
v(f) = \lambda(\mu_n(f)) \mbox{ and } \lambda(f) = v(\mu_n(f)).
\]
\begin{proposition}\label{musymmetric}
With the previous notations, we have 
$\lambda(\mu_n(\Sigma_n^k)) = (v_0^k,\ldots,v_n^k)$, where  
\begin{equation}\label{equationv}
\left\{
\begin{array}{lcl}
v^k_j & = & 0, \mbox{ for } j < k\\ 
v^k_k & = & 1\\
v^k_{j} & = & p(j,k), \mbox{ for } k < j \le n.\\
\end{array}
\right.
\end{equation}
\end{proposition}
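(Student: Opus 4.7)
The plan is to derive this as a direct consequence of the two identities already in place, namely $\lambda(f)=v(\mu_n(f))$ and $v(f)=\lambda(\mu_n(f))$, combined with a very short computation of $v(\Sigma_n^k)$. Since $\Sigma_n^k$ is symmetric with $\lambda$-vector equal to the indicator of the coordinate $k$, the problem essentially reduces to evaluating $\Sigma_n^k$ on any valuation of fixed weight.

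First I would apply the relation $v(f)=\lambda(\mu_n(f))$ with $f=\Sigma_n^k$ to obtain
\[
\lambda(\mu_n(\Sigma_n^k))=v(\Sigma_n^k).
\]
This reduces the proposition to the claim that $v_j(\Sigma_n^k)=p(j,k)$ for all $0\le j\le n$, together with the observation that $p(j,k)=0$ when $j<k$ and $p(k,k)=1$ (both immediate from the definition $p(k,j)=\binom{k}{j}\bmod 2$).

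Next I would compute $v_j(\Sigma_n^k)$ directly from the ANF of $\Sigma_n^k=\bigoplus_{w_H(u)=k} x^u$. For any valuation $a\in\K_2^n$ with $w_H(a)=j$, one has $a^u=1$ iff $u\preceq a$, so
\[
\Sigma_n^k(a)=\bigoplus_{\substack{w_H(u)=k\\ u\preceq a}} 1=\binom{j}{k}\bmod 2 = p(j,k),
\]
because the number of $u\preceq a$ of weight $k$ is exactly $\binom{j}{k}$. This establishes $v_j(\Sigma_n^k)=p(j,k)$ and therefore the three cases in Equation~(\ref{equationv}).

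There is essentially no obstacle here beyond unwinding definitions: the identities $v(f)=\lambda(\mu_n(f))$ and $\lambda(f)=v(\mu_n(f))$ have just been stated, and the counting of $\{u\preceq a : w_H(u)=k\}$ is elementary. If anything, the only point to handle carefully is the boundary case $j=k$, to make sure we record $p(k,k)=1$ separately as the statement requires, and to note that the formula $v_j^k=p(j,k)$ already absorbs the case $j<k$ through the convention $\binom{j}{k}=0$.
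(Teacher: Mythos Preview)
Your argument is correct and is exactly the natural derivation the paper has in mind. In fact the paper states this proposition \emph{without} an explicit proof: it is placed immediately after the identities $v(f)=\lambda(\mu_n(f))$ and $\lambda(f)=v(\mu_n(f))$, and after the preceding proposition whose proof already carries out the count $|\{u\preceq a:\ w_H(u)=k\}|=\binom{j}{k}$ for $w_H(a)=j$. Your write-up simply makes this implicit step explicit, specializing $\lambda(\mu_n(f))=v(f)$ to $f=\Sigma_n^k$ and reading off $v_j(\Sigma_n^k)=\binom{j}{k}\bmod 2=p(j,k)$; the boundary observations $p(j,k)=0$ for $j<k$ and $p(k,k)=1$ are handled correctly.
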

We have already seen the symmetric coincident functions 
$1 \oplus x_1 \oplus \ldots \oplus x_n$ (Proposition~\ref{prop:coincidentDegree1}) 
which corresponds to $v(f) = (0,1,0,\ldots, 0)$.
We are looking for the whole class of such functions.

Recall that $h(\Sigma_k^n) =\Sigma_k^n \oplus \mu_n(\Sigma_n^k)$, 
$\lambda(h(\Sigma_n^k)) = (w_0^k,\ldots,w_n^k)$, 
where $w_i^k = v_i^k$, for any $i \not= k$ and $w_k^k = 0$.

Since a sum of symmetric functions still a symmetric function, a sum of symmetric coincident functions 
still a symmetric coincident function.
Hence $\mathcal{SC}_n$ the set of symmetric coincident function is generated by the $h(\Sigma_k^n)$ 
is a vector space of dimension $2^l$, for some $l \in n+1$. 
Remark that for some $k_1$ and $k_2 \in \{ 0,\ldots, n\}$, $k_1 < k_2$, we may have 
$h(\Sigma_n^{k_1}) =h(\Sigma_n^{k_2})$.

Let $CS_n$ the set of coincident symmetric Boolean functions. 
Since the sum of two coincident functions is a coincident function and 
the sum of two symmetric functions is a symmetric function, $CS_n$ is a vector space.
\begin{proposition}
\[
|CS_n| = \displaystyle{2^{\lfloor \frac{n}{2}\rfloor+1}}.
\]
\end{proposition}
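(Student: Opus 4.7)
My plan is to compute $|CS_n|$ by linear algebra on the $\lambda$-representation. By the preceding propositions, every symmetric function $f$ is determined by its vector $\lambda(f) \in \K_2^{n+1}$, and the Mobius transform acts linearly as $\lambda(\mu_n(f)) = M\lambda(f)$, where $M$ is the $(n+1)\times(n+1)$ matrix over $\K_2$ with entries $M_{jk} = p(j,k) = \binom{j}{k} \bmod 2$; by Lucas' theorem this is $1$ iff $k \preceq j$ bitwise. Hence $f \in CS_n$ iff $\lambda(f) \in \ker(M+I)$, so it suffices to prove $\dim_{\K_2}\ker(M+I) = \lfloor n/2 \rfloor + 1$.

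For the lower bound I invoke Proposition~\ref{involution}: $\mu_n$ is an involution, and since it preserves symmetry it induces an involution on $\lambda$-vectors, so $M^2 = I$ over $\K_2$. (Equivalently, $(M^2)_{j\ell}$ counts the $k$'s with $\ell \preceq k \preceq j$, which equals $2^{|j|-|\ell|}$ if $\ell \preceq j$ and vanishes otherwise, hence is congruent to $\delta_{j\ell}$ modulo $2$.) Therefore $(M+I)^2 = 0$, so $\mathrm{Im}(M+I) \subseteq \ker(M+I)$, and rank-nullity forces $\dim\ker(M+I) \ge \lceil (n+1)/2 \rceil = \lfloor n/2 \rfloor + 1$.

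For the matching upper bound I extract an invertible $\lfloor (n+1)/2\rfloor \times \lfloor (n+1)/2\rfloor$ submatrix of $M+I$. Take rows indexed by the odd integers $j = 2m+1$ and columns indexed by the even integers $k = 2\ell$, with $m$ and $\ell$ both ranging over $\{0,1,\ldots,\lfloor (n-1)/2\rfloor\}$; these are exactly $\lfloor (n+1)/2\rfloor$ indices in each direction. Since $2m+1 \ne 2\ell$, the entry $(M+I)_{2m+1,2\ell}$ equals $M_{2m+1,2\ell}$, which is $1$ iff $2\ell \preceq 2m+1$; because the lowest bit of $2\ell$ is $0$ while that of $2m+1$ is $1$, this is equivalent to $\ell \preceq m$ on the remaining bits. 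Ordering both indices by increasing value, the submatrix is lower triangular with $1$'s on the diagonal (at $\ell = m$) and hence invertible over $\K_2$. This gives $\mathrm{rank}(M+I) \ge \lfloor (n+1)/2\rfloor$, so $\dim\ker(M+I) \le \lceil (n+1)/2\rceil = \lfloor n/2\rfloor + 1$. Combining the two bounds yields $|CS_n| = 2^{\lfloor n/2 \rfloor + 1}$.

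The one subtle step is the choice of the invertible submatrix: once odd rows and even columns are selected the triangular structure is immediate, but the parity of $n$ must be handled carefully so that both index sets have the same cardinality $\lfloor (n+1)/2\rfloor$; writing the indices uniformly as $2m+1$ and $2\ell$ with $m,\ell \le \lfloor (n-1)/2\rfloor$ sidesteps this issue.
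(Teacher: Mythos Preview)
Your proof is correct and takes a genuinely different route from the paper's. The paper argues recursively via the Reed--Muller decomposition $f=f_R^0\oplus x_n f_R^1$: it shows that $f$ is coincident symmetric iff $f_R^0$ is, together with one extra linear condition on the $\lambda_i$'s, and then uses Lucas' theorem to check whether that condition is automatic (when $n$ is even) or halves the count (when $n$ is odd), yielding the recurrence $|CS_n|=2|CS_{n-1}|$ or $|CS_n|=|CS_{n-1}|$. Your argument instead stays entirely in the $(n+1)$-dimensional $\lambda$-space: the involution property gives $(M+I)^2=0$, hence $\dim\ker(M+I)\ge\lceil(n+1)/2\rceil$, and the explicit lower-triangular submatrix on odd rows and even columns pins down the rank exactly. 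The trade-off is that the paper's recursion is constructive and feeds directly into the enumeration and random-generation algorithms that follow, whereas your linear-algebra approach is cleaner and more conceptual, dispatching the count in one shot without a parity case split on the structure of the proof. Both rely on Lucas' theorem at the same essential point (identifying when $\binom{j}{k}$ is odd), but you package it as a single rank computation rather than an inductive step.
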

\begin{proof} 
We show that 
\[
\left\{
\begin{array}{lcl}
|CS_1| & = & 2\\ 
|CS_n| & = & |CS_{n-1}| \mbox{ if } n \mbox{ is odd }\\
& = & 2 |CS_{n-1}| \mbox{ if } n \mbox{ is even.}
\end{array}
\right.
\]
For $n = 1$, $f(x_1) = x_1$ is the unique coincident symmetric Boolean function different from ${\mathbf 0}_1$, 
$\lambda(f) = (0,1)$ and $\lambda({\mathbf 0}_1) = (0,0)$.
 
Let $n \ge 1$ and $f = f_R^0 \oplus x_n f_R^1$ be a symmetric function with $n$ variables, 
where $f_R^0$ and $f_R^1$ are Boolean functions with $n-1$ variables.

Let $\lambda(f) = (\lambda_0, \ldots, \lambda_n)$ and $v(f) = (v_0,\ldots,v_n)$, 
$f$ is coincident if and only if $\lambda(f) = v(f)$.

Let $j$ be any element of $\{ 0,\ldots,n\}$.  
By implying (\ref{equationv}) we obtain
\[
\begin{array}{lcl}
v_j & = & \lambda_j \oplus \Big( \bigoplus_{k=0}^n \lambda_k \; v_j \Big),\\
& = & \lambda_j \oplus \Big( \bigoplus_{k < j} \lambda_k \; p(j,k) \Big).
\end{array}
\]
Then $f$ is coincident if and only if
\begin{equation} \label{conditionSC}
\bigoplus_{k < j} \lambda_k \; p(j,k) = 0, \mbox{ for any } j \in \{ 0,\ldots,n\}.
\end{equation}
We have seen that $\lambda(f_R^0) = (\lambda_0, \ldots, \lambda_{n-1})$. 
Furthermore
\[
\begin{array}{lcl}
\mu_n(f) & = & \mu_n(f_R^0) \oplus \mu_n(f_R^1)\\
         & = & (1 \oplus x_n) \mu_{n-1} (f_R^0) \oplus x_n \mu_{n-1} (f_R^1)\\
         & = & \mu_{n-1}(f_R^0) \oplus x_n \big( \mu_{n-1}(f_R^0)\oplus \mu_{n-1}(f_R^1) \big).
\end{array}
\] 
Then $f$ is coincident if and only if
$$
\left\{
\begin{array}{lcl}
\mu_{n-1}(f_R^0) & = & f_R^0\\
\mu_{n-1}(f_R^1) & = & f_R^0 \oplus f_R^1.
\end{array}
\right.
$$
The first equation implies that $f_R^0$ is a symmetric coincident function and it just 
remains to check the last equation of (\ref{conditionSC})
\[
\bigoplus_{k < n} \lambda_k \; p(n,k) = 0.
\]
\paragraph{Case $n$ is even}~\\
$p(n,1) = 0$ and it is easily seen that $p(n,k) = 0$ for $k$ odd and $p(n,k) = p(n-2,k-2)$, for $k$ even. 
Then we have to check 
\[
\bigoplus_{k < n-2} \lambda_k \; p(n-2,k) = 0,
\]
which is already satisfies by $f_R^0$.
Hence we may choose $\lambda_n = 0$ or $1$ and \mbox{$CS_n = 2 CS_{n-1}$}.

\paragraph{Case $n$ is odd}~\\ 
Since $p(n,1) = 1$, $\lambda_0 \; p(n,n) \oplus \lambda_2 \; p(n,n-2)
\ldots \oplus \lambda_{n-2} \; p(n,2) = \lambda_{n-1}$.
Then we may chose $\lambda_n = 0$ or $1$, but we have a unique possibility for  $\lambda_{n-1}$.
By the previous case, we know that half of the symmetric coincident function $f_R^0$ 
satisfy $\lambda_{n-1} = 0$. Clearly $|CS_n| = |CS_{n-1}|$. 
\end{proof}
\begin{proposition}Enumeration\\
The Algorithm~\ref{algo:enumSymmetric} provides an enumeration of coincident symmetric functions.
\begin{algorithm}[h]
  \caption{\label{algo:enumSymmetric}
    Enumeration}
  \KwIn{The number of variables: $n \in \N$.}
  \KwOut{Enumeration of coincident symmetric functions with $n$ variables.}
  \BlankLine
\If { $n = 1$}{
   enumerate $(0,0)$\\
   enumerate $(0,1)$\\
}
\Else{
\If {$n$ even}{
  \For{ $(\lambda_0,\ldots, \lambda_{n-1})$ of Enumeration($n-1$)}{
  enumerate $(\lambda_0,\ldots, \lambda_{n-1},0)$ \\
  enumerate $(\lambda_0,\ldots, \lambda_{n-1},1)$
}
}
\Else {
\For{ $(\lambda_0,\ldots, \lambda_{n-1})$ of Enum($n-1$)}{
\If{ $\lambda_0 \; p(n,n) \oplus \lambda_2 \; p(n,n-2) \ldots \oplus \lambda_{n-2} \; p(n,2) = \lambda_{n-1}$}{
 enumerate $(\lambda_0,\ldots, \lambda_{n-1},0)$ \\
 enumerate $(\lambda_0,\ldots, \lambda_{n-1},1)$
}
}
}
}
\end{algorithm}
\end{proposition}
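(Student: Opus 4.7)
The plan is to prove correctness of the algorithm by induction on $n$, piggy-backing on the structural analysis carried out in the proof of $|CS_n| = 2^{\lfloor n/2\rfloor +1}$ that immediately precedes it. The algorithm is designed to mirror, step by step, the case distinction of that proof, so correctness reduces essentially to checking that each branch of the pseudocode implements exactly one branch of the earlier argument.

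First I would dispatch the base case $n=1$: the two symmetric Boolean functions on one variable have $\lambda$-vectors $(0,0)$ and $(0,1)$, corresponding to $\mathbf 0_1$ and $x_1$, and both are coincident (trivially for the zero function, and via Proposition~\ref{prop:coincidentDegree1} for $x_1$). These are exactly the two vectors emitted in the first branch of the algorithm, giving soundness and completeness for $n=1$.

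For the inductive step, assume the recursive call Enumeration$(n-1)$ lists each element of $CS_{n-1}$ exactly once. Given a symmetric $f = f_R^0 \oplus x_n f_R^1 \in \F_n$ with $\lambda(f) = (\lambda_0,\ldots,\lambda_n)$ and $\lambda(f_R^0) = (\lambda_0,\ldots,\lambda_{n-1})$, the preceding proof showed that $f \in CS_n$ if and only if $f_R^0 \in CS_{n-1}$ and the single remaining constraint $\bigoplus_{k<n}\lambda_k\, p(n,k) = 0$ holds. When $n$ is even this extra constraint is automatic from $f_R^0 \in CS_{n-1}$, because $p(n,k)=0$ for $k$ odd and $p(n,k) = p(n-2,k-2)$ for $k$ even reduce it to a condition already enforced on $(\lambda_0,\ldots,\lambda_{n-1})$; the coefficient $\lambda_n$ is then free, matching the $n$ even branch which emits both extensions of every vector returned by the recursive call. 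When $n$ is odd the constraint collapses, via $p(n,1)=1$ and Lucas' theorem, to the explicit parity check $\lambda_{n-1} = \lambda_0\, p(n,n) \oplus \lambda_2\, p(n,n-2) \oplus \cdots \oplus \lambda_{n-2}\, p(n,2)$, which is precisely the guard in the $n$ odd branch, after which $\lambda_n \in \{0,1\}$ is again free.

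Putting the two cases together yields the three correctness properties: every emitted vector lies in $CS_n$ (soundness), every element of $CS_n$ is emitted because its $(\lambda_0,\ldots,\lambda_{n-1})$-truncation is produced by the recursive call and then passes the guard (completeness), and no vector is emitted twice since the two extensions $\lambda_n\in\{0,1\}$ are distinct. The only non-mechanical step — already carried out in the preceding proof — is the reduction of $\bigoplus_{k<n}\lambda_k\, p(n,k)=0$ to a form checkable purely from $\lambda(f_R^0)$; the algorithm simply encodes this reduction case by case, so the main obstacle is essentially overcome before the enumeration statement is even made.
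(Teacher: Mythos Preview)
Your proposal is correct and matches the paper's approach: the paper gives no separate proof for this proposition, treating it as an immediate consequence of the case analysis in the proof of $|CS_n| = 2^{\lfloor n/2\rfloor+1}$, and you have simply made that implicit reliance explicit by verifying soundness, completeness, and non-repetition branch by branch. There is nothing to add.
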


\begin{proposition}
The Algorithm~\ref{algo:randomSymmetric} provides random generation of coincident symmetric functions.
\begin{algorithm}[h]
  \caption{\label{algo:randomSymmetric}
    UniformRandomGeneration}
  \KwIn{The number of variables: $n \in \N$.}
\BlankLine
\KwOut{Uniform random generation of a coincident symmetric functions with $n$ variables}
\If { $n = 1$}{
  \Return uniform( $\{ (0,0), (0,1)\}$)\\
}
\Else {
  \If {$n$ even}{
    $(\lambda_0,\ldots, \lambda_{n-1}) \gets $  UniformRandomGeneration($n-1$)\\
    $\lambda_n \gets$ uniform $(\{ 0, 1\})$\\
    \Return  $(\lambda_0,\ldots, \lambda_n)$. 
    }
    \Else { 
      $(\lambda_0,\ldots, \lambda_{n-1}) \gets$  UniformRandomGeneration($n-1$) \\
      $\lambda_n \gets$ uniform $(\{ 0, 1 \})$.\\
      \If { $\lambda_0 \; p(n,n) \oplus \lambda_2 \; p(n,n-2) \ldots \oplus \lambda_{n-2} \; p(n,2) = \lambda_{n-1}$}{
        \Return  $(\lambda_0,\ldots, \lambda_{n-1},\lambda_n)$. 
        }
      \Else {
        \Return  $(\lambda_0,\ldots, 1- \lambda_{n-1},\lambda_n)$. 
        }
      }
    }
\end{algorithm}
\end{proposition}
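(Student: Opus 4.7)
The plan is to prove the correctness of Algorithm~\ref{algo:randomSymmetric} by induction on $n$, mirroring the recursive structure already established in the proof that $|CS_n| = 2^{\lfloor n/2\rfloor+1}$. Throughout, I identify a symmetric Boolean function with its $\lambda$-vector, so uniform generation of coincident symmetric functions amounts to uniform generation of $\lambda$-vectors in $CS_n$.

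For the base case $n=1$, the previous proposition established $CS_1 = \{(0,0),(0,1)\}$, and the algorithm returns \texttt{uniform}$(\{(0,0),(0,1)\})$, which is trivially uniform on $CS_1$.

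For the inductive step with $n$ even, the plan is to invoke the first recursion established in the enumeration proof, namely that the map $CS_n \to CS_{n-1} \times \{0,1\}$ sending $(\lambda_0,\ldots,\lambda_n)$ to $((\lambda_0,\ldots,\lambda_{n-1}),\lambda_n)$ is a bijection (every extension of an element of $CS_{n-1}$ by any $\lambda_n \in \{0,1\}$ again satisfies condition~(\ref{conditionSC})). Thus if the recursive call returns a uniform element of $CS_{n-1}$ by the induction hypothesis, and $\lambda_n$ is drawn uniformly and independently from $\{0,1\}$, the algorithm's output is uniform on $CS_n$.

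The odd-$n$ case is the main obstacle, since the algorithm does not just extend but sometimes overwrites $\lambda_{n-1}$. Here I would argue as follows. By the even case applied to $n-1$, every prefix $(\lambda_0,\ldots,\lambda_{n-2}) \in CS_{n-2}$ extends in exactly two ways to a member of $CS_{n-1}$ (namely by $\lambda_{n-1}=0$ and $\lambda_{n-1}=1$), so the projection $\pi:CS_{n-1}\to CS_{n-2}$ dropping the last coordinate is exactly $2$-to-$1$. The odd-case analysis of the previous proposition shows that for each $(\lambda_0,\ldots,\lambda_{n-2}) \in CS_{n-2}$ there is a unique value $\lambda^\ast_{n-1}=\lambda_0 p(n,n)\oplus\lambda_2 p(n,n-2)\oplus\cdots\oplus\lambda_{n-2}p(n,2)$ such that $(\lambda_0,\ldots,\lambda_{n-2},\lambda^\ast_{n-1},\lambda_n)\in CS_n$, for either choice of $\lambda_n$. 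The key observation is that the algorithm's correction step (flip $\lambda_{n-1}$ iff the parity test fails) is precisely the map sending an arbitrary $\lambda_{n-1}$ to $\lambda^\ast_{n-1}$. Therefore the composition (recursive call, then correction, then append $\lambda_n$) factors as a uniform sample from $CS_{n-1}$, then a $2$-to-$1$ projection onto $CS_{n-2}$, then the deterministic insertion of $\lambda^\ast_{n-1}$, then an independent uniform $\lambda_n$. Each element of $CS_n$ is thus produced with probability $(2/|CS_{n-1}|)\cdot(1/2)=1/(2|CS_{n-2}|)=1/|CS_n|$, as required.

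Finally, a short verification would confirm that the inductive hypotheses chain correctly: the odd-$n$ step relies only on the inductive truth of the algorithm for the even value $n-1$, and the even-$n$ step on the inductive truth for $n-1$ (either parity). The one subtle point to double-check in writing the proof is that the parity expression used by the algorithm matches exactly the coefficient formula derived from Lucas' theorem in Proposition~\ref{musymmetric}; once this is observed, the $2$-to-$1$ structure is immediate and the uniformity follows without further computation.
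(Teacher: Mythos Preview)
The paper states this proposition without proof: after the algorithm listing, it moves directly to the next section. The correctness is meant to be read off from the proof of the preceding proposition on $|CS_n|$, but no argument is spelled out. Your proposal supplies exactly the argument the paper leaves implicit, and it is correct.

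Your treatment of the odd case is in fact more careful than anything the paper writes down. The key point you isolate --- that for odd $n$ the recursive call followed by the parity-correction is the composition of the $2$-to-$1$ projection $CS_{n-1}\to CS_{n-2}$ with the deterministic section $CS_{n-2}\to CS_{n-1}$ picking $\lambda_{n-1}^\ast$ --- is precisely what is needed, and your probability computation $\tfrac{2}{|CS_{n-1}|}\cdot\tfrac{1}{2}=\tfrac{1}{|CS_n|}$ closes the argument cleanly. The only cosmetic remark is that your last paragraph (about checking that the parity expression in the algorithm matches the formula from Proposition~\ref{musymmetric}) is not really a residual obstacle: the algorithm literally copies the expression $\lambda_0\,p(n,n)\oplus\lambda_2\,p(n,n-2)\oplus\cdots\oplus\lambda_{n-2}\,p(n,2)$ from the odd-case analysis of the cardinality proof, so there is nothing further to verify.
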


\section{Experiments results}
\label{sec:experiment}

The usefulness of coincident functions for practical applications is
not yet established.  Therefore a deep investigation of other
cryptographically significant properties of Boolean functions must be
conducted from a cryptographic point of view.  We show in this part
that for some aspects a random coincident function looks like uniform
random Boolean functions. We consider the Hamming weight, the distribution of
the degrees (the number of monomials for each degree), the balancedness and the
nonlinearity.  Other investigation like propagation criteria,
algebraic immunity may also be considered.

\subsection{Correlation-immune Boolean functions}
Correlation immunity of a Boolean function is a measure of the degree
to which its outputs are uncorrelated with some subset of its inputs.
In \cite{siegenthaler1984}, the author shows the importance of this property in cryptography.

The table below gives the number of correlation-immune functions of order $1$ for fix Hamming weight 
for $n \le 5$. We write $cor_1(n)$ the total number of
correlation-immune functions of order $1$ with $n$ variables.

Remark that the only $1$-resilient in this table is the function with $2$ variables $x_1 \oplus x_2$.

\[
\begin{array}{c|c|c|c|c|c|c|c|c|c|c|c|c|c}
n \setminus m & 0 & 2 & 4 & 6 & 8 & 10 & 12 & 14 & 18 & 22 & 30 & \mbox{Total}& cor_1(n)\\
\hline
\hline
1 & 1 & & & & & & & & & & & 1 & \\
2 & 1 & 1 & & & & & & & & & & 2 & 4 \\
3 & 1 & & 1 & &  & & & & & & & 2 & 18 \\
4 & 1 & 3 & & & 3 & & 1 &  & & & & 8 & 648 \\
5 & 1 & & & 5 & & 70 & & & 70 & 5 & 1 & 152 & 3140062
\end{array}
\]

\subsection{Hamming weight distribution}
We observe that $\C_n$ and $\F_n$ follows a similar weight distribution.
The Figure~\ref{hammingWeight} gives the distribution of Hamming
weight for $1000$ uniform random generated Boolean functions over
$\C_n$ and $\F_n$, with $n = 20$.  The Hamming weight $w$ is normalized
by the mean $2^{n-1}$, so the abscissa will be $w' = w/2^{n-1}$.
It is easily shown that the distributions are very similar. 
\begin{figure} \label{weight}
  \centering
  \begin{tabular}{cc}
    \scalebox{.3}{\includegraphics{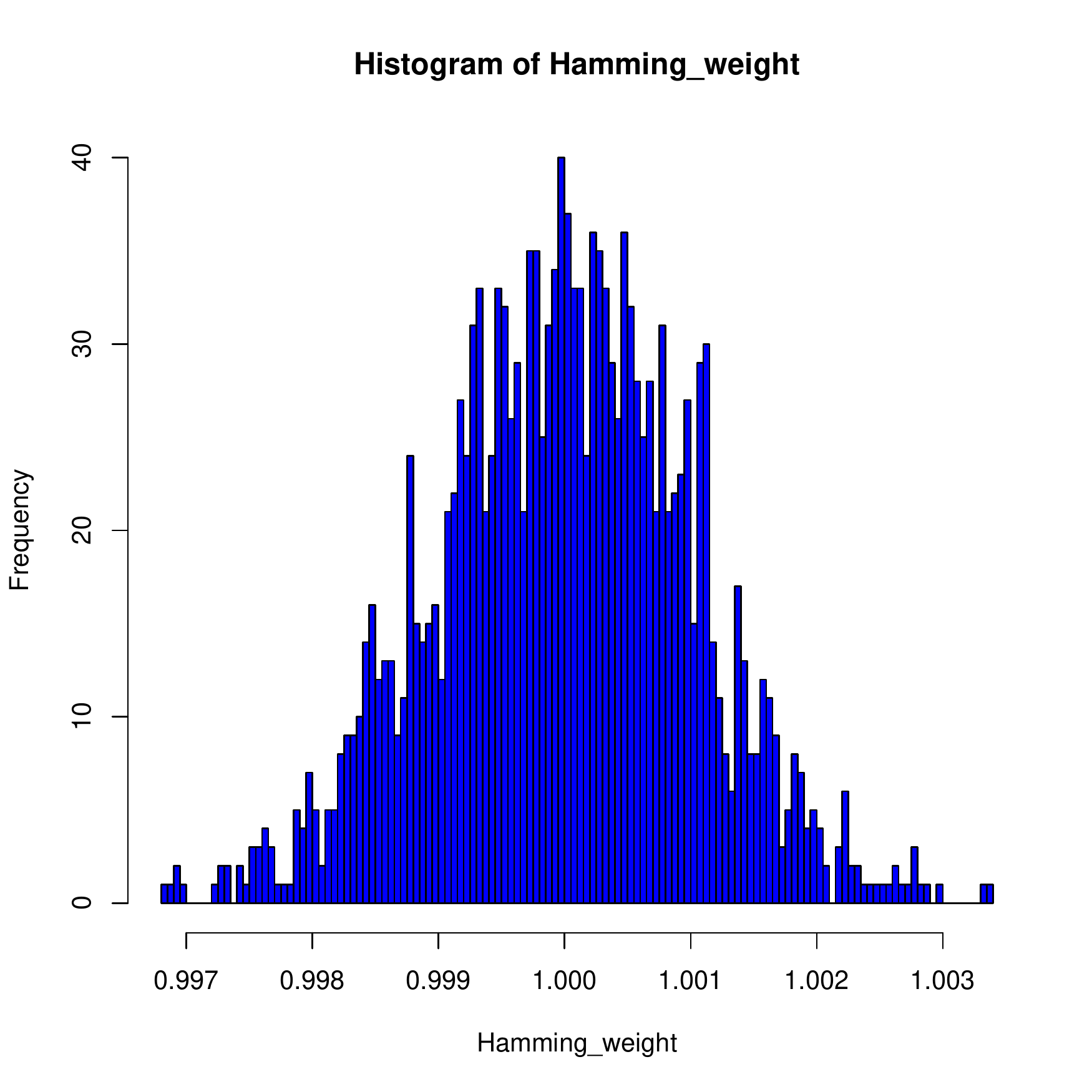}}
    & \scalebox{.3}{\includegraphics{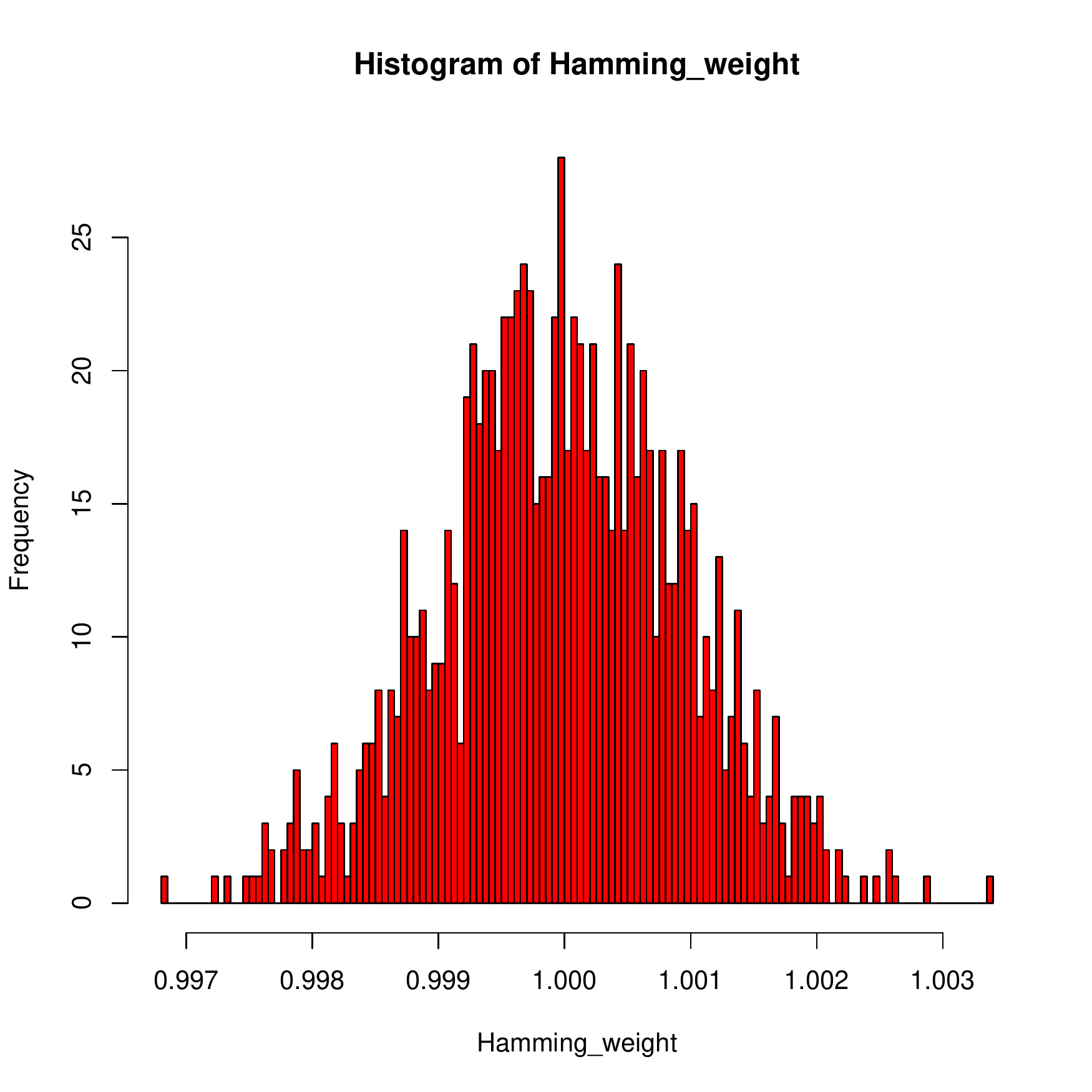}}\\
    Random functions over $\C_{20}$ & Random functions over $\F_{20}$
  \end{tabular}
  \caption{\label{hammingWeight}
  Hamming weight distributions for coincident and boolean functions with 20 variables.}
\end{figure}

\subsection{Algebraic degree distribution}
We show experimentally that random coincident functions follow the same algebraic degree distribution 
than any random Boolean function.
Let $f$ and $g$ be Boolean functions uniformly randomly generated over respectively
$\C_n$ and $\F_n$, we consider the number of monomials of degree $d$
for each $d \in \{0, \ldots, n\}$. For any $d$, the expected value
should be close to ${ n \choose d}/2$ in the case of $g$. We have used
several times the Kolmogorov-Smirnov statistical test to show that the
distributions are very closed.
Of course if we consider just one degree $d$, its is possible to distinguish the distributions 
since in $f$ we have either all the monomials of degree $1$ or neither (Proposition~\ref{prop:coincidentDegree1}).
The Figure~\ref{degree} in a single random generation of $f$ and $g$. 

\begin{figure} \label{degree}
  \centering
  \begin{tabular}{cc}
    \scalebox{.3}{\includegraphics{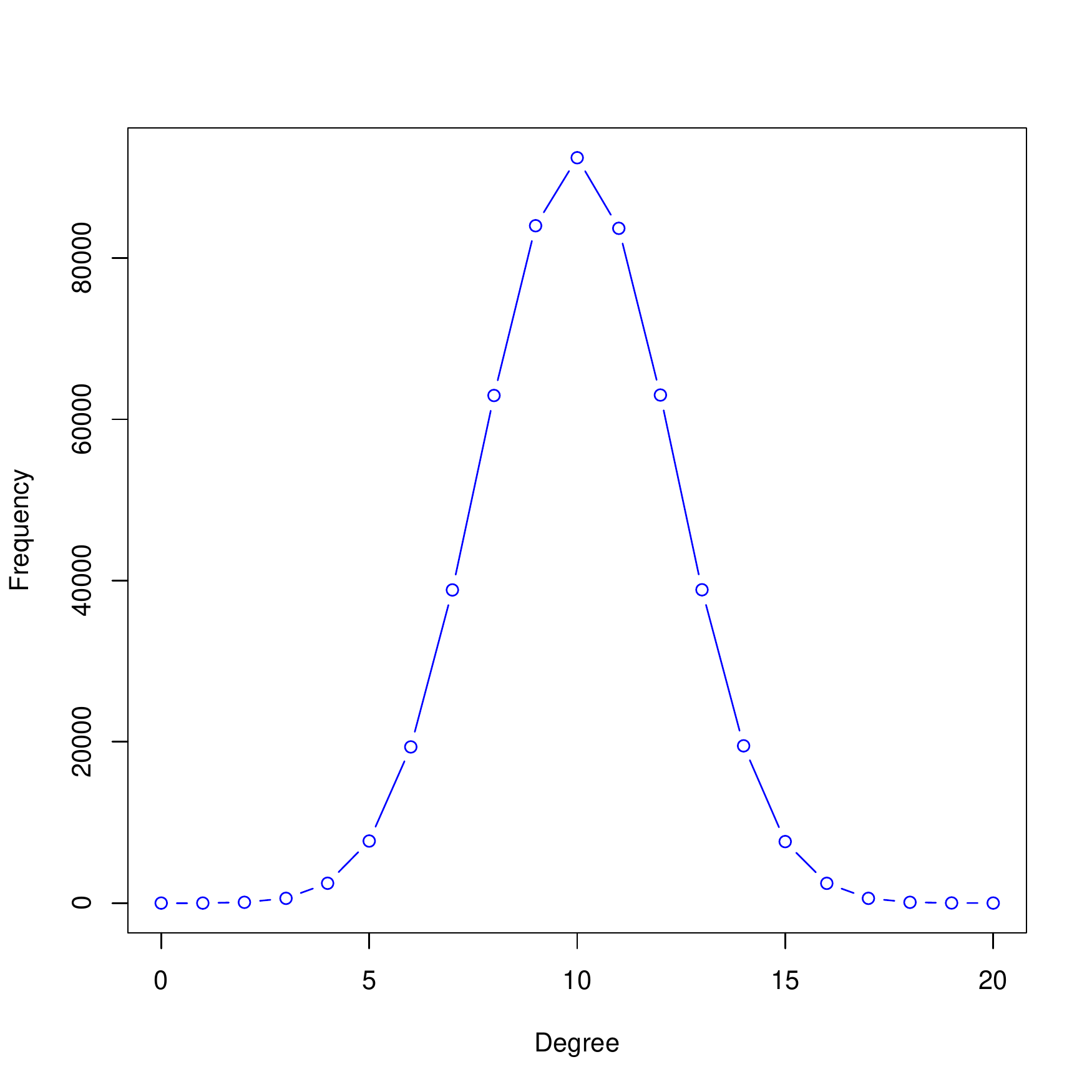}}
    & \scalebox{.3}{\includegraphics{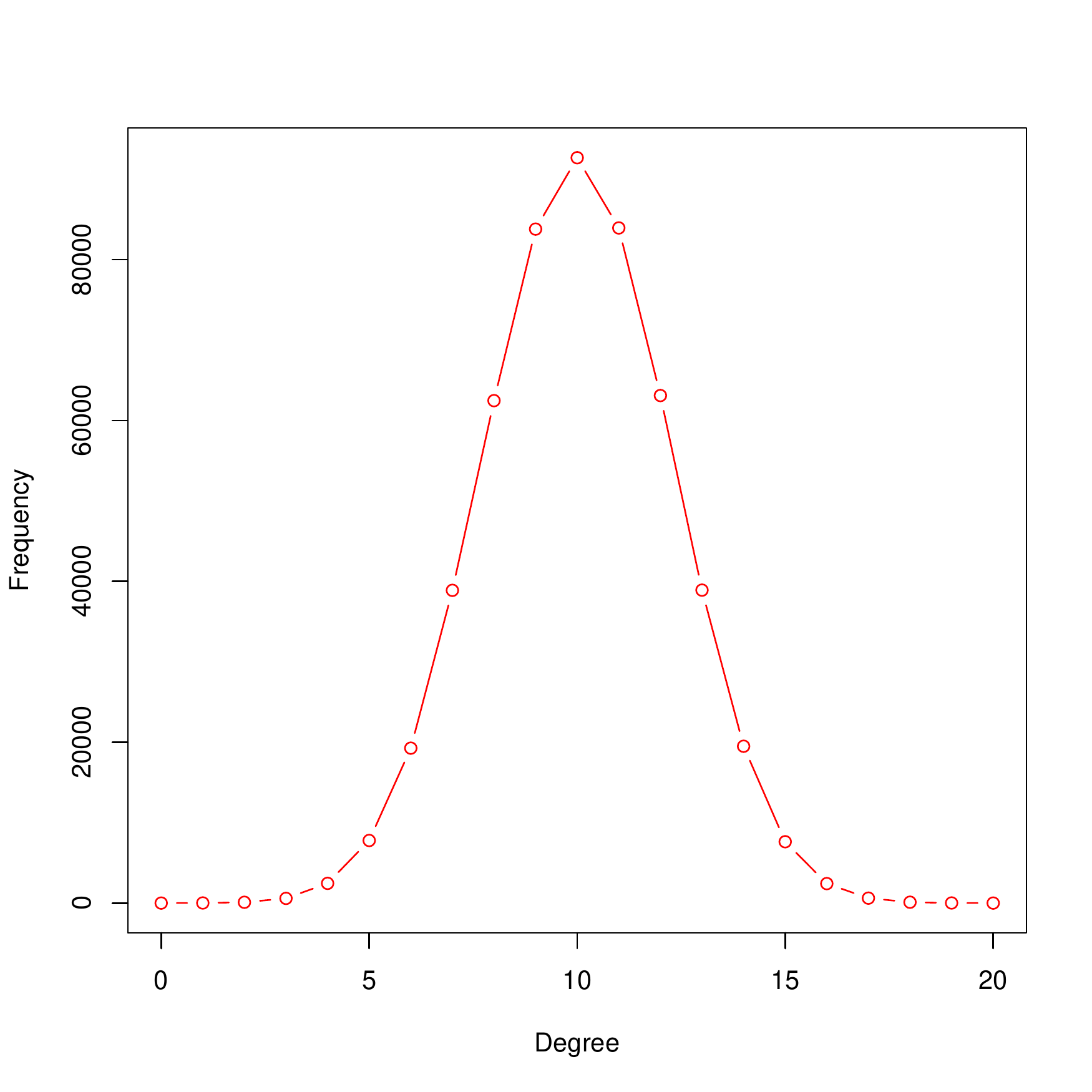}}\\
    Random function over $\C_{20}$ & Random functions over $\F_{20}$
  \end{tabular}
  \caption{\label{hammingWeight}
  Degree distribution of Boolean functions with 20 variables.}
\end{figure}

\subsection{Balancedness}
A Boolean function of $n$ variables is balanced when $w_H(f_n) = 2^{n-1}$. 
It is easily seen that the probability that a random Boolean function is balanced is equal to 
$\frac{ { 2^{n} \choose 2^{n-1}}}{2^{2^n}}$. We obtain every close frequencies when we generates 
random coincident function under $n \le 20$. Further investigations should proved this statement 
for any $n \in \N$.

\subsection{Non-linearity}
The non-linearity is the Hamming weight distance to all the affine
Boolean functions.  It is shown that Boolean functions must have a
high nonlinearity to ensure confusion, an important property in
cryptographic context(\cite{carletBook}), in particular to avoid fast
correlation attacks(\cite{canteaut2000,chepyzhov1991}).  The functions
which reach the best nonlinearity are called bent functions, they
occur when $n$ is even and they have nonlinearity $2^{n-1} - 2^{n/2-1}$. 
On the other hand, the best nonlinearity of Boolean functions in odd numbers of
variables is strictly greater than the quadratic bound --$2^{n-1} - 2^{\frac{n-1}{2}}$-- for any $n > 7$.
See (\cite{carletBook}) for a good introduction of the nonlinearity property and the bent functions.

Let $h \in \C_n$, there exists a unique $g \in \mathcal F_{n-1}$ such
that \mbox{$h = (1 \oplus x_n) \varphi_{n-1}(g) \oplus x_n g$}
(Proposition~\ref{prop:equivCoincident}).  Let $l_n \in \mathcal L_n$,
the set of affine functions with $n$ variables.  There is $l_{n-1} \in
\mathcal L_{n-1}$ such that $l_n = (1\oplus x_n) l_{n-1} \oplus x_n
l_{n-1}$ or \mbox{$l_n = (1\oplus x_n) l_{n-1} \oplus x_n (l_{n-1} \oplus
1)$}.  Hence $d(h,l_n) = w_H(\varphi_{n-1}(g),l_{n-1})+w_H(g,l_{n-1})$
or $d(h,l_n) = w_H(\varphi_{n-1}(g),l_{n-1})+w_H(g\oplus 1,l_{n-1})$.
A random $h$ of $\mathcal C$ should have the same distance from the
affine functions from any random Boolean function if the there is no
correlation between the distances $w_H(\varphi_{n-1}(g),l_{n-1})$ 
and $w_H(g\oplus 1,l_{n-1})$. 

We observe experimentally that this is the case. 
In Figure~\ref{nonlinearity}, we consider $n = 10$ and $11$ and we build $1000$ random functions over $\mathcal C_{n}$ and 
$1000$ over $\mathcal F_{n}$ and we compute the frequency of each nonlinearity value.
Remark that bent functions have nonlinearity $2^{9}-2^4 = 496$. 
We consider even and odd values of $n$ because the nonlinearity behaviour is very different in these cases.

\begin{figure}
  \centering
  \begin{tabular}{cc}
    \scalebox{.2}{\includegraphics{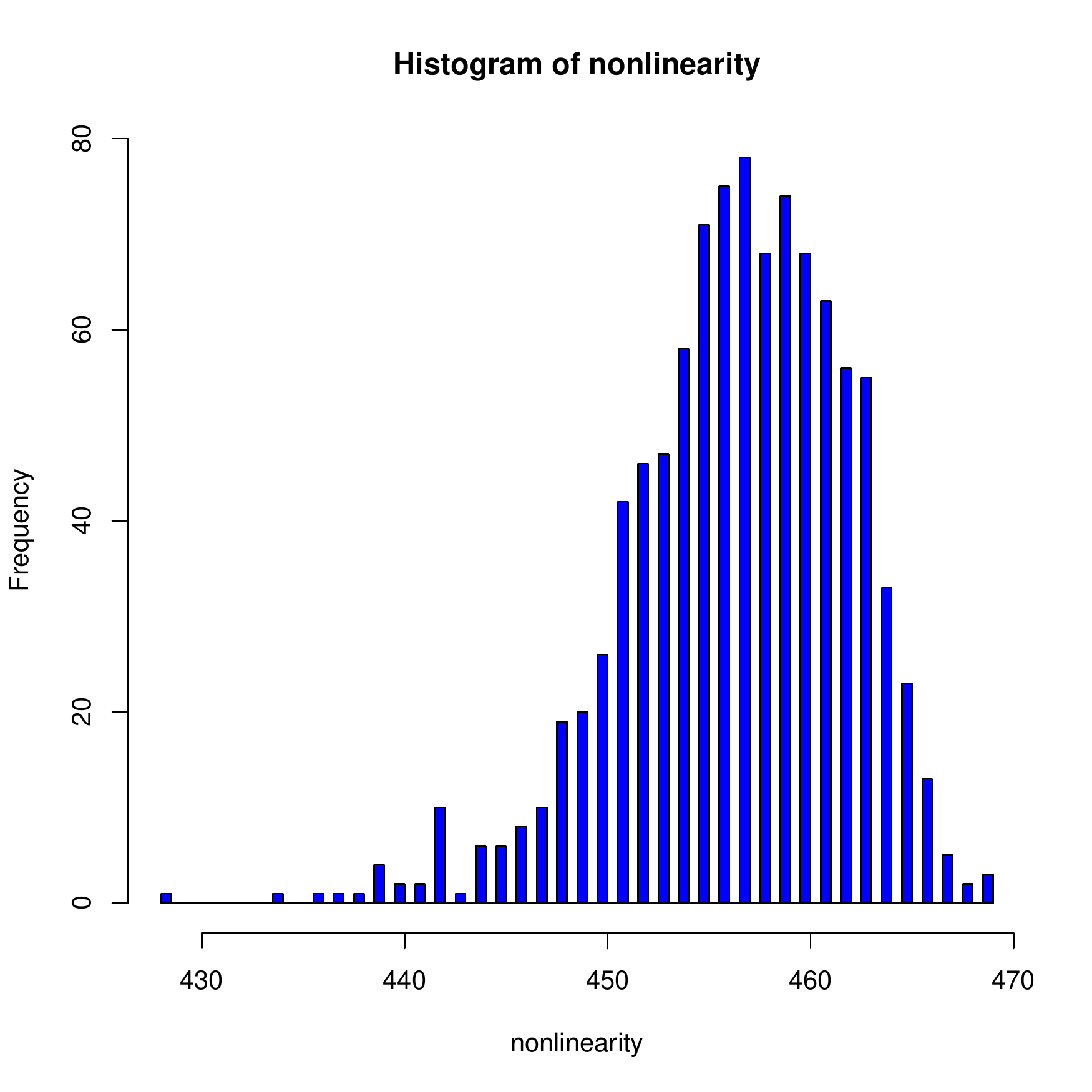}}
    & \scalebox{.2}{\includegraphics{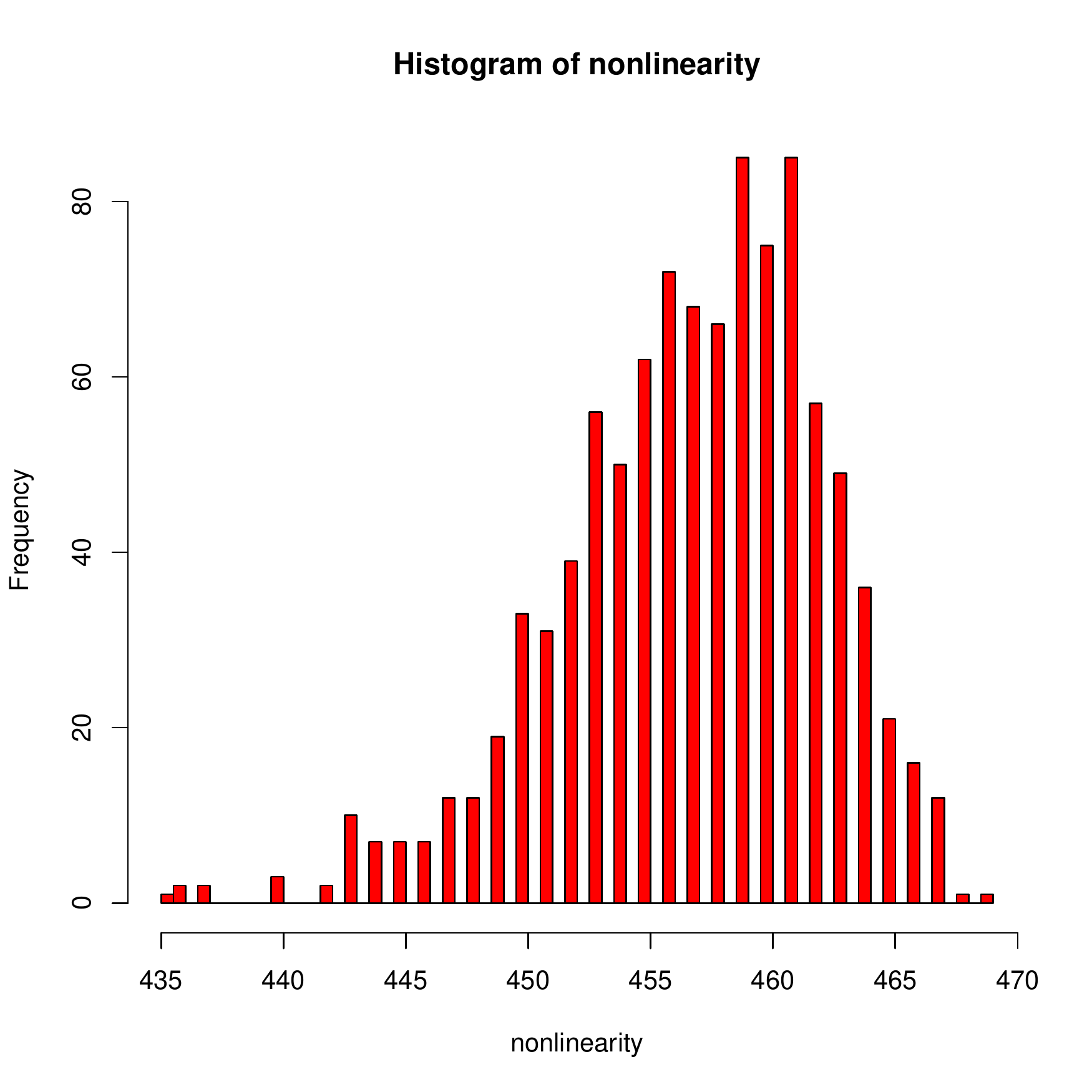}}\\
    Random functions over $\C_{10}$ & Random functions over $\F_{10}$\\
  \scalebox{.2}{\includegraphics{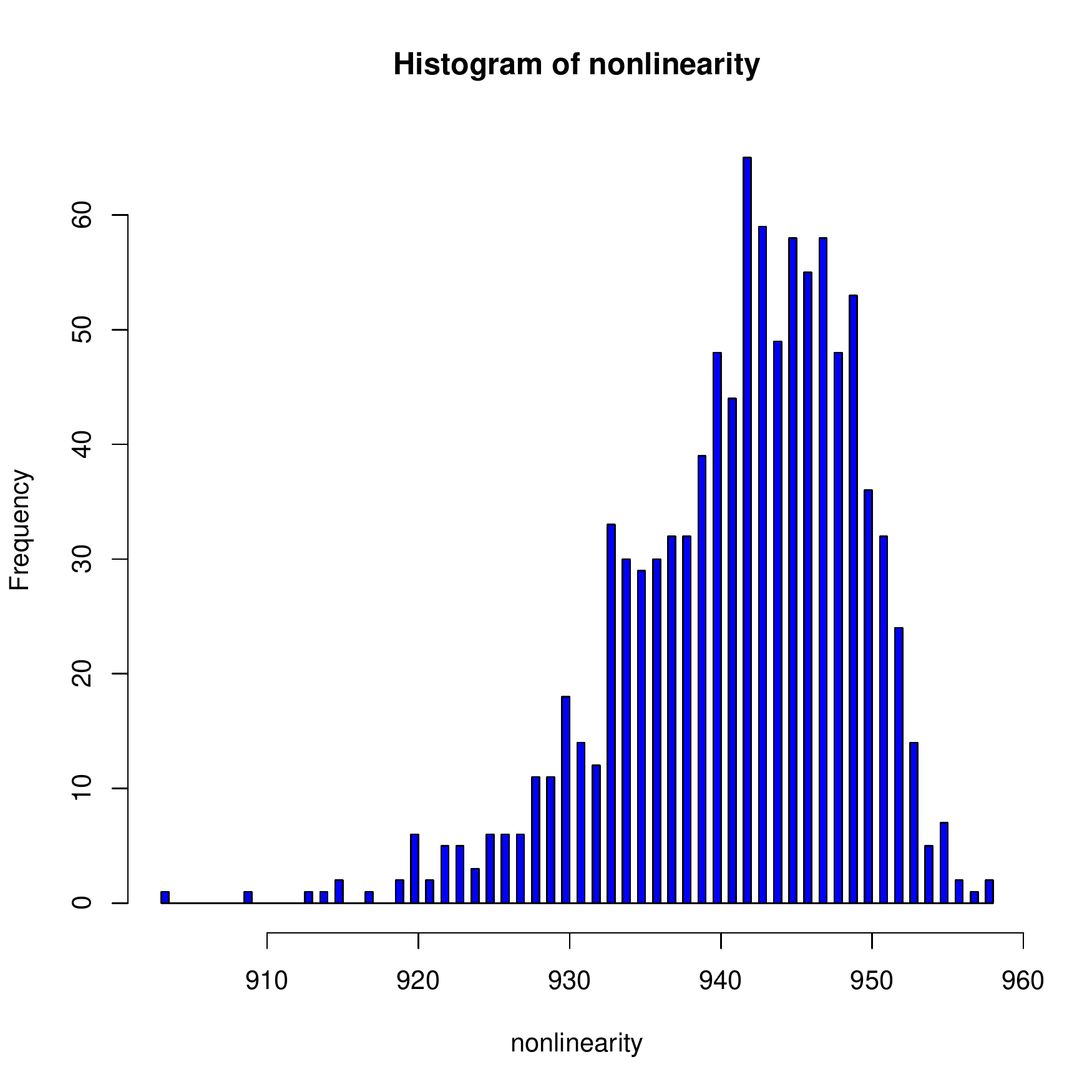}}
    & \scalebox{.2}{\includegraphics{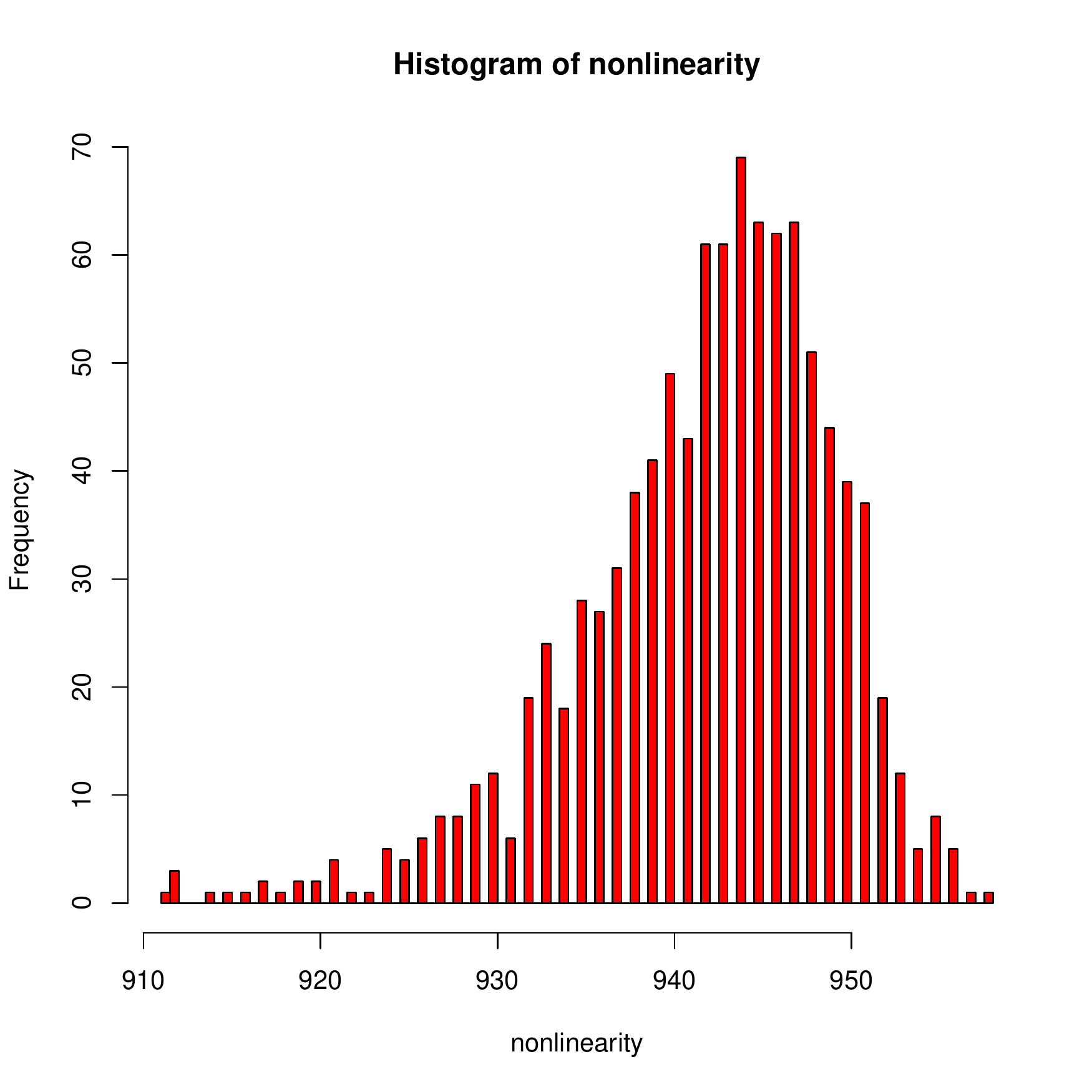}}\\
    Random functions over $\C_{11}$ & Random functions over $\F_{11}$
  \end{tabular}
  \caption{\label{nonlinearity}
    Nonlinearity distributions for coincident and boolean function with $10$ and $11$ variables.}
\end{figure}

\section{Conclusion}
\label{sec:conclusion}

Our paper presents an innovative way to manipulate the Mobius
transform, with the distinction between variables abd indeterminates.
This allows us to highlight new properties of coincident functions.
We may also move easily from Shannon decomposition to Reed-Muller
decomposition or vice versa. We show how the Boolean lattice is
colored in the case of coincident functions, which gives a method to
build monotone functions, and we provide a method to build and
generate uniformly the symmetric coincident functions.  Thanks to a
uniform random generator over all the coincident boolean functions, we
establish experimentally that for the most common characteristics
(Hamming weight, distributions of the monomials degree, balanceness),
a coincident function looks like any Boolean function. This
experimental work could be completed by further properties. Notice
that our random generator of coincident functions requires the
computation of a Mobius transform with $n-1$ variables.  We may avoid
this problem with the use of the basis of coincident functions but it
is not an efficient way. Direct random generation and enumeration will
be a challenge.  Another promising perspective will be to propose
algorithms which compute the Mobius transform with low complexity for
a larger part of Boolean functions, standed by our new
properties. Based on all these results, we therefore recommand to use
this class of functions, especially in order to build Boolean
functions with good cryptographic properties. Specific constructions
with trade-off between cryptographic criteria seems really feasible.

\bibliographystyle{spr-chicago}
\bibliography{biblio}

\begin{thebibliography}{14}
\ifx \bisbn   \undefined \def \bisbn  #1{ISBN #1}\fi
\ifx \binits  \undefined \def \binits#1{#1} \fi
\ifx \bauthor  \undefined \def \bauthor#1{#1} \fi
\ifx \bjtitle  \undefined \def \bjtitle#1{\textit{#1}}\fi
\ifx \batitle  \undefined \def \batitle#1{#1} \fi
\ifx \bctitle  \undefined \def \bctitle#1{#1} \fi
\ifx \bvolume  \undefined \def \bvolume#1{#1}\fi
\ifx \byear  \undefined \def \byear#1{#1} \fi
\ifx \bissue  \undefined \def \bissue#1{#1} \fi
\ifx \bfpage  \undefined \def \bfpage#1{#1} \fi
\ifx \blpage  \undefined \def \blpage #1{#1} \fi
\ifx \burl  \undefined \def \burl#1{#1} \fi
\ifx \doiurl  \undefined \def \doiurl#1{#1} \fi
\ifx \betal  \undefined \def \betal{et al.} \fi
\ifx \binstitute  \undefined \def \binstitute#1{#1} \fi
\ifx \beditor  \undefined \def \beditor#1{#1} \fi
\ifx \bpublisher  \undefined \def \bpublisher#1{#1} \fi
\ifx \bbtitle  \undefined \def \bbtitle#1{\textit{#1}} \fi
\ifx \bedition  \undefined \def \bedition#1{#1} \fi
\ifx \bseriesno  \undefined \def \bseriesno#1{#1} \fi
\ifx \blocation  \undefined \def \blocation#1{#1} \fi
\ifx \bsertitle  \undefined \def \bsertitle#1{\textit{#1}} \fi
\ifx \bsnm \undefined \def \bsnm#1{#1} \fi
\ifx \bsuffix \undefined \def \bsuffix#1{#1} \fi
\ifx \bparticle \undefined \def \bparticle#1{#1} \fi
\ifx \barticle \undefined \def \barticle#1{#1} \fi
\ifx \botherref \undefined \def \botherref #1{#1} \fi
\ifx \url \undefined \def \url#1{#1} \fi
\ifx \bchapter \undefined \def \bchapter#1{#1} \fi
\ifx \bbook \undefined \def \bbook#1{#1} \fi
\ifx \bcomment \undefined \def \bcomment#1{#1} \fi
\ifx \oauthor \undefined \def \oauthor#1{#1} \fi
\ifx \citeauthoryear \undefined \def \citeauthoryear#1{#1} \fi
\ifx \texttildelow  \undefined \def \texttildelow{\symbol{126}} \fi
\def \endbibitem {}
\ifx \bconflocation  \undefined \def \bconflocation#1{#1} \fi

\bibitem[\protect\citeauthoryear{Boole}{1848}]{boole1848}
\begin{barticle}
\bauthor{\bsnm{Boole}, \binits{Georges}}.
\byear{1848}.
\batitle{The calculus of logic}.
\bjtitle{Cambridge and Dublin Mathematical Journal}
\bvolume{III}: \bfpage{183}--\blpage{98}.
\end{barticle}
\endbibitem

\bibitem[\protect\citeauthoryear{Boole}{1854}]{boole1854}
\begin{botherref}
\oauthor{\bsnm{Boole}, \binits{Georges}}.
1854.
An investigation of the laws of thought.
MacMillan.
\end{botherref}
\endbibitem

\bibitem[\protect\citeauthoryear{Bryant}{1986}]{bryant86}
\begin{barticle}
\bauthor{\bsnm{Bryant}, \binits{Randal}}.
\byear{1986}.
\batitle{Graph-based algorithms for boolean function manipulation}.
\bjtitle{IEEE Transactions on Computers}
\bvolume{C-35} (\bissue{8}): \bfpage{677}--\blpage{691}.
\end{barticle}
\endbibitem

\bibitem[\protect\citeauthoryear{Canteaut and Trabbia}{2000}]{canteaut2000}
\begin{bchapter}
\bauthor{\bsnm{Canteaut}, \binits{Anne}}, and \bauthor{\binits{Micha{\"{e}}l}
  \bsnm{Trabbia}}.
\byear{2000}.
\bctitle{Improved fast correlation attacks using parity-check equations of
  weight 4 and 5}.
In \bbtitle{Advances in cryptology - {EUROCRYPT} 2000, international conference
  on the theory and application of cryptographic techniques, bruges, belgium,
  may 14-18, 2000, proceeding},
\bfpage{573}--\blpage{588}.
\end{bchapter}
\endbibitem

\bibitem[\protect\citeauthoryear{Canteaut and Videau}{2005}]{canteaut2005}
\begin{barticle}
\bauthor{\bsnm{Canteaut}, \binits{Anne}}, and \bauthor{\binits{Marion}
  \bsnm{Videau}}.
\byear{2005}.
\batitle{Symmetric boolean functions}.
\bjtitle{{IEEE} Transactions on Information Theory}
\bvolume{51} (\bissue{8}): \bfpage{2791}--\blpage{2811}.
\end{barticle}
\endbibitem

\bibitem[\protect\citeauthoryear{Carlet}{2010}]{carletBook}
\begin{bchapter}
\bauthor{\bsnm{Carlet}, \binits{Claude}}.
\byear{2010}.
\bctitle{Boolean models and methods in mathematics, computer science, and
  engineering}.
\bpublisher{Cambridge University Press}.
\end{bchapter}
\endbibitem

\bibitem[\protect\citeauthoryear{Chepyzhov and Smeets}{1991}]{chepyzhov1991}
\begin{bchapter}
\bauthor{\bsnm{Chepyzhov}, \binits{Vladimir~V.}}, and \bauthor{\binits{Ben
  J.~M.} \bsnm{Smeets}}.
\byear{1991}.
\bctitle{On {A} fast correlation attack on certain stream ciphers}.
In \bbtitle{Advances in cryptology - {EUROCRYPT} '91, workshop on the theory
  and application of of cryptographic techniques, brighton, uk, april 8-11,
  1991, proceedings},
\bfpage{176}--\blpage{185}.
\end{bchapter}
\endbibitem

\bibitem[\protect\citeauthoryear{Creignou et~al.}{2001}]{creignou2001}
\begin{bbook}
\bauthor{\bsnm{Creignou}, \binits{Nadia}}, \bauthor{\binits{Sanjeev}
  \bsnm{Khanna}}, and \bauthor{\binits{Madhu} \bsnm{Sudan}}.
\byear{2001}.
\bbtitle{Complexity classifications of boolean constraint satisfaction
  problems}.
\blocation{Philadelphia, PA, USA}:
\bpublisher{Society for Industrial and Applied Mathematics}.
\end{bbook}
\endbibitem

\bibitem[\protect\citeauthoryear{Guillot}{1999}]{guillot}
\begin{botherref}
\oauthor{\bsnm{Guillot}, \binits{Philippe}}.
1999.
Fonctions courbes binaires et transformation de {M}obius.
PhD diss,
University of Caen Basse-Normandie.
\end{botherref}
\endbibitem

\bibitem[\protect\citeauthoryear{Kasami and Tokura}{1970}]{Kasami70}
\begin{barticle}
\bauthor{\bsnm{Kasami}, \binits{Tadao}}, and \bauthor{\binits{Nobuki}
  \bsnm{Tokura}}.
\byear{1970}.
\batitle{On the weight structure of reed-muller codes.}
\bjtitle{IEEE Transactions on Information Theory}
\bvolume{16} (\bissue{6}): \bfpage{752}--\blpage{759}.
\end{barticle}
\endbibitem

\bibitem[\protect\citeauthoryear{Palmer et~al.}{1992}]{palmer1992}
\begin{barticle}
\bauthor{\bsnm{Palmer}, \binits{E.~M.}}, \bauthor{\binits{R.~C.} \bsnm{Read}},
  and \bauthor{\binits{R.~W.} \bsnm{Robinson}}.
\byear{1992}.
\batitle{Balancing the n-cube: A census of colorings}.
\bjtitle{Journal of Algebraic Combinatorics}
\bvolume{1} (\bissue{3}): \bfpage{257}--\blpage{273}.
\end{barticle}
\endbibitem

\bibitem[\protect\citeauthoryear{Pieprzyk et~al.}{2011}]{coincidentFunc}
\begin{barticle}
\bauthor{\bsnm{Pieprzyk}, \binits{Josef}}, \bauthor{\binits{Huaxiong}
  \bsnm{Wang}}, and \bauthor{\binits{Xian-Mo} \bsnm{Zhang}}.
\byear{2011}.
\batitle{M{\"o}bius transforms, coincident boolean functions and
  non-coincidence property of boolean functions}.
\bjtitle{International Journal of Computer Mathematics}
\bvolume{88} (\bissue{7}): \bfpage{1398}--\blpage{1416}.
\end{barticle}
\endbibitem

\bibitem[\protect\citeauthoryear{Shannon}{1949}]{shannon1949}
\begin{barticle}
\bauthor{\bsnm{Shannon}, \binits{Claude.~E.}}
\byear{1949}.
\batitle{The synthesis of two-terminal switching circuits}.
\bjtitle{Bell System Technical Journal}
\bvolume{28 Issue 1}: \bfpage{59}--\blpage{98}.
\end{barticle}
\endbibitem

\bibitem[\protect\citeauthoryear{Siegenthaler}{1984}]{siegenthaler1984}
\begin{barticle}
\bauthor{\bsnm{Siegenthaler}, \binits{Thomas}}.
\byear{1984}.
\batitle{Correlation-immunity of nonlinear combining functions for
  crypthographic applications}.
\bjtitle{{IEEE} Transactions on Information Theory}
\bvolume{IT-30} (\bissue{5}): \bfpage{776}--\blpage{780}.
\end{barticle}
\endbibitem

\end{thebibliography}

\end{document}